\documentclass[11pt]{article}
\usepackage[utf8]{inputenc}

\usepackage{graphicx}
\usepackage{xcolor}
\usepackage{amsmath,amsthm,amssymb}
\usepackage[margin=1in]{geometry}
\usepackage[pdftex,colorlinks=true,linkcolor=red,citecolor=red,urlcolo
r=black]{hyperref}
\usepackage[]{hyperref}
\usepackage{comment}
\setlength{\parskip}{3pt}
\usepackage[normalem]{ulem}
\usepackage{cite}
\usepackage{stmaryrd}

\newcommand{\R}{\mathbb{R}}
\newcommand{\N}{\mathbb{N}}
\newcommand{\C}{\mathbb{C}}

\def\A{{\bf A}}
\def\B{{\bf B}}
\def\C{{\bf C}}

\def\H{{\bf H}}
\def\I{{\bf I}}

\def\L{{\bf L}}
\def\M{{\bf M}}
\def\N{{\bf N}}

\def\R{{\bf R}}
\def\S{{\bf S}}
\def\T{{\bf T}}
\def\U{{\bf U}}
\def\V{{\bf V}}

\def\X{{\bf X}}
\def\Y{{\bf Y}}
\def\Z{{\bf Z}}

\def\a{{\bf a}}
\def\b{{\bf b}}
\def\c{{\bf c}}

\def\e{{\bf e}}

\def\i{{\bf i}}
\def\j{{\bf j}}
\def\k{{\bf k}}
\def\l{{\bf l}}
\def\m{{\bf m}}

\def\p{{\bf p}}
\def\q{{\bf q}}
\def\r{{\bf r}}
\def\s{{\bf s}}
\def\t{{\bf t}}

\def\x{{\bf x}}
\def\y{{\bf y}}
\def\z{{\bf z}}

\def\0{{\bf 0}}
\def\1{{\bf 1}}
\def\2{{\bf 2}}
\def\3{{\bf 3}}
\def\4{{\bf 4}}
\def\5{{\bf 5}}
\def\6{{\bf 6}}
\def\7{{\bf 7}}
\def\8{{\bf 8}}
\def\9{{\bf 9}}

\def\bq{\bar{\bf q}}

\def\ba{\begin{array}}
\def\ea{\end{array}}
\def\be{\begin{equation}}
\def\ee{\end{equation}}
\def\bi{\begin{itemize}}
\def\ei{\end{itemize}}

\newcommand{\bea}{\begin{eqnarray}}
\newcommand{\eea}{\end{eqnarray}}
\newcommand{\bes}{\begin{equation*}}
\newcommand{\ees}{\end{equation*}}
\newcommand{\beas}{\begin{eqnarray*}}
\newcommand{\eeas}{\end{eqnarray*}}
\def\bn{\begin{eqnarray}}
\def\en{\end{eqnarray}}
\def\bu{\begin{enumerate}}
\def\eu{\end{enumerate}}
\def\bp{\begin{prop}}
\def\ep{\end{prop}}
\def\bl{\begin{lem}}
\def\el{\end{lem}}
\def\bc{\begin{cor}}
\def\ec{\end{cor}}
\def\bt{\begin{thm}}
\def\et{\end{thm}}
\def\bdf{\begin{dfn}}
\def\edf{\end{dfn}}

\makeatletter
\newtheorem*{rep@theorem}{\rep@title}
\newcommand{\newreptheorem}[2]{%
\newenvironment{rep#1}[1]{%
 \def\rep@title{#2 \ref{##1} (restated)}%
 \begin{rep@theorem}}%
 {\end{rep@theorem}}}
\makeatother

\allowdisplaybreaks

\newtheorem{thm}{Theorem}[section]
\newtheorem*{thm*}{Theorem}
\newtheorem{cor}[thm]{Corollary}

\newtheorem{lem}[thm]{Lemma}
\newtheorem*{lem*}{Lemma}

\newtheorem{prop}[thm]{Proposition}

\newtheorem{dfn}{Definition}[section] 
\newreptheorem{cor}{Corollary}

\usepackage{pgfplots}
\usepackage{authblk}

\newbox\bigstrutbox
\setbox\bigstrutbox=\hbox{\vrule height12.5pt depth5pt width0pt}
\def\bigstrut{\relax\ifmmode\copy\bigstrutbox\else\unhcopy\bigstrutbox
\fi}

\newbox\Bigstrutbox
\setbox\Bigstrutbox=\hbox{\vrule height17.5pt depth5pt width0pt}
\def\Bigstrut{\relax\ifmmode\copy\Bigstrutbox\else\unhcopy\Bigstrutbox
\fi}

\numberwithin{equation}{section}

\title{Normalization of 
Quaternionic Polynomials in Coordinate-Free Quaternionic Variables
in Conjugate-Alternating Order}

\author[1]{Hongbo Li\thanks{hli@mmrc.iss.ac.cn}}
\author[1]{Zhengyang Wang\thanks{wangzhengyang22@mails.ucas.ac.cn}}
\author[1]{Yue Liu\thanks{liuyue10@mails.ucas.ac.cn}}
\author[1]{Lei Huang\thanks{lhuang@mmrc.iss.ac.cn}}
\author[1]{Changpeng Shao\thanks{changpeng.shao@amss.ac.cn}}

\affil[1]{State Key Laboratory of Mathematical Sciences,
AMSS, UCAS, Chinese Academy of Sciences, Beijing 100190, China}


\begin{document}

\maketitle

\begin{abstract}
Quaternionic polynomials occur naturally in applications of quaternions
in science and engineering, and normalization of quaternionic polynomials 
is a basic manipulation. Once a Gr\"obner basis is certified for the
defining ideal $\cal I$ of the quaternionic polynomial algebra, the normal
form of a quaternionic polynomial can be computed by routine top reduction
with respect to the Gr\"obner basis.
In the literature, a Gr\"obner basis under the conjugate-alternating order
of quaternionic variables was conjectured
for $\cal I$ in 2013 \cite{li-huang-liu2013normalization}, 
but no readable and convincing proof was found. 

In this paper, we present the first readable certification of the conjectured
Gr\"obner basis. The certification is based on several novel techniques for reduction
in free associative algebras, which enables to not only make reduction to S-polynomials
more efficiently, but also reduce the number of S-polynomials needed for the certification.
\vspace{.2cm}

{\bf Keywords.} Quaternionic polynomials;
Free-associative algebra;
Non-commutative Gr\"obner basis;
Substitutional reduction;
Clear S-polynomials.
\end{abstract}

\section{Introduction}
\setcounter{equation}{0}
\label{sect:intro}

Quaternions are an important mathematical tool for science and engineering. For one thing,
in history, Gibb's vector algebra grows out of quaternions by separating the scalar part
from the pure imaginary part (vector) of the quaternionic product of vectors.
Unit quaternions form the group Spin(3), and
generate rotations in three-dimensional Euclidean space and 
four-dimensional Euclidean space \cite{altmann2005rotations}. 
Unit quaternions also generate one-qubit gates via the
isomorphism between $SU(2)$ and $Spin(3)$. This makes quaternions very useful not only
in geometry related application fields, such as
in computer graphics, robotics, geometric reasoning \cite{li2015symbolic}, and
color image processing \cite{moxey2003, ell2006, zou2016, lan2016, meng2020}, but also in
quantum computation. When a quaternion is presented as a
linear combination of some quaternionic variables, the quaternionic product of this 
quaternion with other quaternions yields a quaternionic polynomial in quaternionic variables.
Such polynomials occur naturally in applications of quaternions.

A {\it quaternionic variable} is defined to be a linear combination of the basis $1, \i, \j, \k$
of quaternions: $\q=u+x\i+y\j+z\k$, 
with the coefficient variables $u,x,y,z$ taking values in a field $\mathbb K$.
The coefficients are called the {\it coordinate variables} of the quaternionic variable. A 
quaternionic variable usually occurs in {\it coordinate-free form}, namely, letter $\q$. 
It is easy to deduce that each coordinate variable can be represented as a
non-commutative polynomial in the letters $\q, \bar{\q}, \i, \j, \k$, where $\bar{\q}$ is the
(quaternionic) conjugate of $\q$; say it is $v=v_c(\q, \bar{\q}, \i, \j, \k)$ for $v=u, x,y,z$.
The polynomials $u_c, x_c, y_c, y_c$ are called the {\it coordinate polynomials} of 
quaternionic variable $\q$. A {\it coordinate-free quaternionic polynomial} is a
$\mathbb K$-linear combination of finitely many quaternionic products of the
letters in alphabet ${\cal A}=\{\q_l, \bq_l, \i, \j, \k\,|\, l=1,\ldots,n\}$.

In polynomial algebra ${\mathbb K}[x_1, \ldots, x_n]$, by commutativity a monomial in
normal form is of the form $\lambda x_{i_1}^{r_1} x_{i_2}^{r_2} \ldots x_{i_m}^{r_m}$, where the $r_i>0$,
under the order of variables
$x_1\prec x_2\prec \ldots \prec x_n$. We say the normal form is {\it strictly increasing},
because $i_1<i_2<\ldots< i_m$. Polynomial normalization is a basic task in manipulating
polynomials. In coordinate-free quaternionic polynomial algebra, 
normalization is just as important as in the commutative case.

The following is a rigorous definition of the {\it coordinate-free quaternionic polynomial algebra}
with alphabet $\cal A$. Let ${\mathbb K}\langle {\cal A} \rangle$ be the free associative
$\mathbb K$-algebra over alphabet $\cal A$, and let $\cal I$ be the two-sided ideal generated by 
the following elements: (1) multiplication table of the basis elements $\i, \j, \k$, e.g.,
$\i^2-1, \i\j-\k$, etc.; (2) coordinatization of the quaternionic variables and their 
conjugates, e.g., for $\q, \bq$, $\q-(u_c+x_c \i+y_c \j+z_c \k)$ and
$\bq-(u_c-x_c \i-y_c \j-z_c \k)$, where $u_c, x_c, y_c, z_c$ are 
the coordinate polynomials in letters $\q, \bq, \i, \j, \k$; (3) commutativity between
any coordinate polynomial and a letter of $\cal A$. The {\it normal form} of
a quaternionic polynomial is the unique (non-commutative) polynomial 
in the free associative algebra ${\mathbb K}\langle {\cal A} \rangle$ that equals
the quaternionic polynomial modulo $\cal I$. 

Computation of the normal form requires a Gr\"obner basis of the {\it defining ideal} 
$\cal I$. Computing Gr\"obner basis in a non-commutative ring has a long history in the symbolic
computation society. Following Buchberger's original theory
\cite{buchberger1965algorithm} and
Bergman's diamond lemma \cite{bergman1978diamond}, 
the true concept of Gr\"obner basis for free associative algebras and
free monoid rings was introduced by F. Mora \cite{mora1985grobner}.
Subsequently, T. Mora \cite{mora1988,mora1994introduction} systematically established 
the theory of non-commutative Gr\"obner basis. Later on, 
volumes of algorithms
were developed \cite{rody1990,
green1998, cojocaru1999non, green2000multiplicative, giesbrecht2002},
and collected into
monographs \cite{mora1994introduction, ufnarovski1998introduction,
li2002, bueso2003}, and computer algebra systems such as 
Plural \cite{levandovskyy2003}, ApCoCoA \cite{apcocoa} and Letterplace \cite{scala09}.

As to the Gr\"obner basis computation involving quaternionic polynomials, 
in \cite{DGS10} the following ring of one-variable quaternionic polynomials was studied:
for two quaternionic monomials $\q^i \a_i$ and $\q^j \a_j$, where $\q$ is the 
quaternionic variable and $\a_i, \a_j$ are quaternionic coefficients, their product
is defined by
\be
(\q^i \a_i)*(\q^j \a_j):=\q^{i+j} \a_i\a_j.
\label{DGS10:product}
\ee
For two quaternionic polynomials of the form 
$\sum_i \lambda_i \q^i \a_i$ and $\sum_j \lambda_j \q^j \a_j$ respectively, where
the $\lambda$'s are scalars in $\mathbb K$, their product is the $\mathbb K$-linear
expansion of the products between the quaternionic monomials. Obviously this product
is not the quaternionic product, although the participating monomials are 
quaternionic ones. 

In \cite{H00}, another product was defined for multivariate quaternionic polynomials
and the corresponding Gr\"obner basis computation problem was studied: for two 
quaternionic monomials $\a_\alpha \q_1^{\alpha_1}\cdots \q_r^{\alpha_r}$ and
$\a_\beta \q_1^{\beta_1}\cdots \q_r^{\beta_r}$, where $\alpha=(\alpha_1, \ldots, \alpha_r)$
is a multi-index, the $\q_l$ are quaternionic variables, and the $\a$'s are quaternionic
coefficients, their product is defined as follows:
\be
(\a_\alpha \q_1^{\alpha_1}\cdots \q_r^{\alpha_r})*
(\a_\beta \q_1^{\beta_1}\cdots \q_r^{\beta_r}):=
\a_\alpha \a_\beta \q_1^{\alpha_1+\beta_1}\cdots \q_r^{\alpha_r+\beta_r}.
\ee
For two quaternionic polynomials of the form 
$\sum_{\alpha} \lambda_\alpha
\a_\alpha \q_1^{\alpha_1}\cdots \q_r^{\alpha_r}$ and 
$\sum_{\beta} \lambda_\beta \a_\beta \q_1^{\beta_1}\cdots \q_r^{\beta_r}$,
where the $\lambda$'s are scalars in $\mathbb K$, their product is the $\mathbb K$-linear
expansion of the products between the quaternionic monomials. Nor is this product
the quaternionic product. 

The first investigation of the Gr\"obner basis of the defining ideal $\cal I$ of
coordinate-free quaternionic polynomial algebra seems to be 
\cite{li-huang-liu2013normalization}, where for small number of quaternionic variables
$n=1, \ldots, 6$, the corresponding Gr\"obner basis elements of small degree ($\leq 8$)
under the order of variables
\be
\q_1\prec\bq_1\prec \q_2\prec\bq_2\prec \ldots \q_n\prec\bq_n\prec \i\prec \j\prec \k,
\label{order:alternate}
\ee
called the {\it conjugate-alternating order}, were computed, and then after simplification
and observation, the Gr\"obner basis elements for arbitrary number of quaternionic variables
and of arbitrary degree were conjectured. 

In the PhD dissertation \cite{liu2015}, a
proof based on segmenting quaternionic monomials into submonomials and then introducing
new letters to represent the submonomials was proposed. Let the alphabet of new letters
be $\cal V$. The number of letters in $\cal V$ does not exceed 30, no matter how big the number
of old variables is.
In \cite{liu2015} a method of delicate segmentation of quaternionic monomials
and an ordering of the new letters was proposed, by which the certification of the 
conjectured Gr\"obner basis is converted to the verification that every element in the
conjectured Gr\"obner basis of the new letters can be reduced to zero by the
conjectured Gr\"obner basis of the old letters. Extensive computer aided verifications
are needed in the proof, which is composed of over 1000 case-by-case reductions. 
In \cite{liu2024}, the verification of the reduction-to-zero property
of the conjectured Gr\"obner basis of the new letters was done with extensive computer help.
It remains to find a short and readable certification of the conjectured Gr\"obner basis
in \cite{li-huang-liu2013normalization}. Indeed, verifying the conversion
in \cite{liu2015} took one-month computer time, and verifying the reduction-to-zero property
in \cite{liu2024} took several hours. 

In this paper, we propose the first short and readable proof of the conjecture made in 
\cite{li-huang-liu2013normalization}. The proof is less than 40 pages, and the 
reduction procedures are readable.
Although computer is needed in the final stage of reduction in many cases,
this equipment reliance does not influence 
the readability. The key idea behind the improvement is that we completely discard
the idea of monomial segmentation proposed in \cite{liu2015}. Instead, we seek to
enlarge the conjectured Gr\"obner basis from a reduced one to a non-reduced one,
by allowing more powerful commutativity tools to be used in non-commutative 
polynomial reduction, and in particular, for quaternionic polynomials.

Similar to the generators of the defining ideal $\cal I$ of a
coordinate-free quaternionic polynomial algebra, the certified reduced Gr\"obner basis 
consists of three kinds of elements: 
\bu
\item multiplication table of the basis elements;

\item conjugate-defining equations from the 
coordinatization of letters $\q,\bq$, cf. (\ref{def:qqbar:1}); 

\item commutativity between the {\it bracket} of any  
monomial $\M$ and any letter $\b$, where for monomial $\M=\a_1\ldots \a_m$ with the $\a_i$
being letters, bracket $[\M]=\M+\bar{\M}
=\a_1\ldots \a_m+\overline{\a_m}\ldots \overline{\a_1}$
is twice the scalar part of $\M$. 
\eu
Normalizing a quaternionic
polynomial can be done by top reduction with respect the Gr\"obner basis elements 
in arbitrary way. A more effective approach is to enlarge the third kind of elements in the
Gr\"obner basis by including the commutativity between the bracket of any
monomial and any other monomial. Any quaternionic monomial in normal form is a submonomial of the
monomial depicted in Fig. \ref{normal:2} of this paper, where the letters of the monomial
form a pair of parallel and alternating non-decreasing subsequences, instead of only a single
increasing sequence in the commutative case.  

The technical contributions of this paper include the following:
\bi
\item Four new reduction techniques for free associative algebras:
subsequence-led reduction, subsequence-dominated reduction, 
bottom-letter controlled reduction, and substitutional reduction. These techniques allow
more flexible reduction of a non-commutative polynomial by locally increasing the order
of the leading term.

\item Before making reduction to the S-polynomials of a Gr\"obner basis candidate, first
extend the Gr\"obner basis candidate to a larger one with more general symmetries that allow
more powerful reduction techniques. Every newly included element should be verified to be
reduced to zero by the old ones. The enlarged Gr\"obner basis candidate is to be used
to make reduction to the S-polynomials of the original Gr\"obner basis candidate.

\item Reduce the number of S-polynomials used to certify 
the Gr\"obner basis candidate by selecting only the {\it clear S-polynomials} proposed
in this paper. This technique is valid for all free associative algebras where a reduced
Gr\"obner basis candidate has the property that all its elements have degree $>1$. For 
quaternionic polynomial algebra, the number of S-polynomials used to certify the conjectured
Gr\"obner basis is reduced by about half with the clear S-polynomial technique.
\ei

The content of this paper is arranged as follows. Section \ref{sect:prem} introduces
the basics of quaternionic polynomial algebra, and presents the Main Theorem of this paper,
which claims that the conjectured Gr\"obner basis {\bf BG} is 
a reduced Gr\"obner basis of the defining ideal
$\cal I$ of quaternionic polynomial algebra. 
Section \ref{sect:reduct} proposes several novel reduction techniques for free-associative
algebras, together with a preliminary extension $\overline{\bf BG}$
of the conjectured Gr\"obner basis {\bf BG}. 
Section \ref{sect:ext} made further extension of $\overline{\bf BG}$ by including all elements
of the form $\a[\T]-[\T]\a$, where $\a$ is an arbitrary letter and $\T$ is an arbitrary monomial.
Section \ref{sect:gb} proposes a simplified certification technique for 
reduced Gr\"obner basis in free associative algebras, called clear S-polynomial reduction.
With the techniques provided by all these sections, Section \ref{sect:proof} finishes the proof 
of the Main theorem. Section \ref{sect:conc} concludes this paper with a short discussion
of the Gr\"obner basis in conjugate-separating order.

In this paper, we use lower-case letters to represent (quaternionic) polynomials and scalars,
use bold-faced numbers and bold-faced lower-case letters to represent letters 
(quaternionic variables and basis elements), and use bold-faced upper-case letters to represent
monomials and sequences.

\section{From free associative algebra to quaternionic polynomial algebra}
\label{sect:prem}

Let $\cal A$ be a finite set of letters, and let the juxtaposition of letters denote the
{\it free associative product} among the letters, namely, the product is associative, and two
products of letters are equal if and only if as sequences of letters they are identical.
Every product of letters is called a (monic) {\it monomial}, with unit 1 denoting the
empty sequence. Often we simply call a monomial
in free associative algebra {\it a sequence of letters}.
A monomial is said to be {\it non-decreasing} if its sequence of letters
is non-decreasing. 
All monomials form a monoid $\langle {\cal A}\rangle$. A {\it monoid ideal} of 
$\langle {\cal A}\rangle$ is a subset that is closed under the left product and right product
by elements of $\langle {\cal A}\rangle$. For a subset $M$ of $\langle {\cal A}\rangle$, the
monoid ideal it generates is denoted by $\langle M \rangle$.

Let $\mathbb K$ be a field of characteristic $\neq 2$. The 
{\it free associative ${\mathbb K}$-algebra} ${\mathbb K}\langle {\cal A}\rangle$ over $\cal A$
is composed of polynomials equipped with the multilinear extension of the 
free associative product, where each {\it polynomial} is a ${\mathbb K}$-linear combination of
monomials. Each term of a polynomial is called a {\it monomial}, and
is a ${\mathbb K}$-scaling of 
a monic monomial. In particular, the unit 1 of monomial monoid
can be chosen to be the unit of field ${\mathbb K}$. 
The {\it degree}, or {\it length}, of a monomial $\A$, refers to the number of letters
(including multiplicity) in the monic monomial after removing the coefficient, denoted by $|\A|$.
The {\it degree} of a polynomial is the maximal degree of its terms.

An {\it ideal} of ${\mathbb K}\langle {\cal A}\rangle$ is a set that is closed under the
left multiplication and the right multiplication by elements of 
${\mathbb K}\langle {\cal A}\rangle$. Given a set $G\subseteq {\mathbb K}\langle {\cal A}\rangle$,
the {\it ideal generated by $G$}, denoted by $\langle G\rangle$, 
is composed of elements of the form
$\sum_i \A_i g_i \B_i$, where the summation is finite, 
$\A_i, \B_i$ are monomials, and $g_i\in G$.

In this paper we only consider the degree lexicographic order ``deglex" among monomials.
Let there be a total order ``$\prec$" among the letters of $\cal A$. 
For two monomials $\A, \B$, under the {\it degree lexicographic order}, $\A\prec \B$ if and only if
either $|\A|<|\B|$, or $|\A|=|\B|$ but lexicographically $\A\prec \B$
as sequences of letters. The {\it leading term} of a polynomial is the term with 
maximal order. In the leading term, the coefficient is called the {\it leading coefficient}, and
the monic monomial is also said to be {\it leading}. The order ``deglex" can be naturally
extended to polynomials. It is always preserved by the left or right 
multiplication with a polynomial.

There is another partial order ``$\prec_M$" among monomials, called
{\it monoid order}. For two monic monomials
$\A, \B$, $\A\prec_M \B$ if and only if every letter of $\A$ is $\prec$ every letter of $\B$.

Given set $G\subseteq {\mathbb K}\langle {\cal A}\rangle$, the (top) {\it reduction} of
a polynomial $f\in {\mathbb K}\langle {\cal A}\rangle$ with respect to $G$, 
is a procedure of obtaining another polynomial $h$ together with
finitely many polynomials of the form
$\A_i g_i \B_i$, where each $g_i\in G$, the
$\A_i, \B_i$ are monomials, such that $h\preceq f$ and every $\A_i g_i \B_i\preceq f$, and 
\be
f=h+\sum_i \A_i g_i \B_i. \label{def:reduce}
\ee
$f$ is said to be (top) {\it reduced} to 
$h$ with respect to $G$. 

Given two monomials $\A, \B$, $\A$ is said to be reducible with respect to $\B$, denoted by
$\A\,|\,\B$, if $\A$ is a {\it submonomial} of $\B$, namely, 
there exist monomials $\L,\R$ such that
$\B=\L \A \R$. A polynomial $f$ is said to be {\it reducible} with respect to a set 
$G$ of polynomials, if there exists an element $g\in G$, such that
the leading term ${\rm T}(f)$ of $f$ is reducible with respect to 
the leading term ${\rm T}(g)$ of $g$. In terms of (\ref{def:reduce}), $f$ is
reducible with respect to $G$ if and only if there exists a polynomial $h\prec {\rm T}(f)$ such that
$f$ is reduced to $h$ with respect to $G$. A submonomial is often called a {\it subsequence
of letters} in free associative algebra.

Given set $G\subset {\mathbb K}\langle {\cal A}\rangle$, the {\it normal form} of
a polynomial $f\in {\mathbb K}\langle {\cal A}\rangle$ with respect to $G$, is the unique
polynomial $h\in {\mathbb K}\langle {\cal A}\rangle$ such that $f$ is reduced to 
$h$ with respect to $G$, and $h=0$ if and only if $f\in \langle G\rangle$.

A {\it Gr\"obner basis} of ideal $\langle G\rangle$ is a subset $\bf GB$ of $\langle G\rangle$, 
such that $\bf GB$ generates $\langle G\rangle$, and the monoid ideal generated by
the leading monic monomials of the elements of $\langle G\rangle$ is generated by the
leading monic monomials of the elements of $\bf GB$. Gr\"obner basis $\bf GB$ is 
said to be {\it reduced}, if any two elements of $\bf GB$ are not reducible with
respect to each other. If a Gr\"obner basis of $\langle G\rangle$ is found, then for any
polynomial $f$, its normal form with respect to $G$ can be computed by making
reduction {\it arbitrarily} with respect to the elements of the Gr\"obner basis.

Now consider quaternionic polynomials in quaternionic variables.
as follows:
\be
\q_l=u_l+x_l\i+y_l\j+z_l\k, \hskip .2cm
\bq_l=u_l-x_l\i-y_l\j-z_l\k,
\label{def:qqbar}
\ee
where $\i, \j, \k$ are the standard pure quaternionic basis, and
the $u_l, x_l, y_l, z_l$ are coordinate variables taking values in $\mathbb{K}$.
(\ref{def:qqbar}) is the {\it coordinatization} of quaternionic variables.

Usually we maintain the integral form of $\q_l, \bq_l$ by expressing the 
coordinate variables as polynomials in $\q_l, \bq_l$ and the basis vectors, which can be
easily verified to be the following:
\be\ba{lllll}
  u_l &=& (\q_l+\bar{\q}_l)/2 &=& [\q]/2, \\

  x_l &=& -(\i\q_l-\bar{\q}_l\i)/2 &=& -[\i\q_l]/2,\\

  y_l &=& -(\j\q_l-\bar{\q}_l\j)/2 &=& -[\j\q_l]/2,\\

  z_l &=& -(\k\q_l-\bar{\q}_l\k)/2 &=& -[\k\q_l]/2,
\ea
\label{coordinate-free}
\ee
where the {\it bracket} of a quaternionic polynomial $f$ is twice its scalar part, 
denoted by $[f]$.
 
The expressions in (\ref{coordinate-free}) 
are called the {\it coordinate polynomials} of the quaternionic
variables $\q_l, \bq_l$.
Substituting (\ref{coordinate-free}) into (\ref{def:qqbar}), and making
simplification by the multiplication table of the basis letters, we get that
(\ref{def:qqbar}) is equivalent to the following single equation: 
\be
2\bq_l=-\q_l-\i\q_l\i-\j\q_l\j-\k\q_l\k.
\label{def:qqbar:1}
\ee
It is called the {\it conjugate-defining equation} of $\bq_l$ by $\q_l$ and the basis letters.

Let there be $n$ quaternionic variables $\q_1, \q_2, \ldots, \q_n$, 
with conjugates $\bq_1, \bq_2, \ldots, \bq_n$ respectively. Denote
\be\ba{lll}
{\cal A} &=& {\cal Q}\cup \overline{\cal Q}\cup {\cal E}, \hbox{ where }\\
{\cal Q} &=& \{\q_1, \q_2, \ldots, \q_n\}, \\
\overline{\cal Q} &=& \{\bq_1, \bq_2, \ldots, \bq_n\}, \\
{\cal E} &=& \{\i,\j,\k\}.
\ea
\ee

\bdf
The (coordinate-free) {\it quaternionic polynomial $\mathbb K$-algebra} over alphabet ${\cal A}$,
denoted by $Q_{\mathbb K}\langle {\cal A} \rangle$, 
is the quotient of the free associative $\mathbb K$-algebra ${\mathbb K}\langle {\cal A}\rangle$
modulo the ideal $\cal I$ generated by the following three groups of elements:
\bi
\item relations from the multiplication table of the basis letters:
\be\ba{llllll}
\i^2+1, & \j^2+1, & \k^2+1, \\
\i\j-\k, & \j\i+\k, & \i\k+\j, & \k\i-\j, &\k\j+\i, & \j\k-\i;
\ea
\label{gen:table}
\ee

\item $n$ relations from the conjugate-defining equations of $\bq_l$
for $l=1,\ldots,n$, see (\ref{def:qqbar:1}): 
\be
2\bq_l+\q_l+\i\q_l\i+\j\q_l\j+\k\q_l\k;
\label{gen:coord}
\ee

\item commutativity between the coordinates of the $\q_i$ and the letters of $\cal A$,
see (\ref{coordinate-free}):
for any letter $\a\in \cal A$, any $l\in \{1, \ldots, n\}$, 
\be
\a[\q_l]-[\q_l]\a, \ \ \,
\a[\i\q_l]-[\i\q_l]\a, \ \ \,
\a[\j\q_l]-[\j\q_l]\a, \ \ \,
\a[\k\q_l]-[\k\q_l]\a.
\label{def:quat}
\ee
\ei
\label{def:quat:all}
\edf

Henceforth we use $\cal I$ to denote specifically the above ideal, called the {\it generating ideal}
of quaternionic polynomial algebra.
The {\it basis-free quaternionic polynomial $\mathbb K$-algebra} over alphabet 
${\cal Q}$, denoted by $Q_{\mathbb K}\langle {\cal Q} \rangle$, is the intersection of 
$Q_{\mathbb K}\langle {\cal A} \rangle$ and 
${\mathbb K}\langle {\cal Q}\cup \overline{\cal Q}\rangle$.

One may argue that in the above definition, the coefficients should also 
be quaternions ${\mathbb K}\langle{\cal E}\rangle$ instead of scalars in $\mathbb K$.
In a quaternionic monomial, there can be more than one quaternionic parameter in symbolic form.
For example, in $\p_1\q_1\p_2\q_2\cdots \p_k\q_k\p_{k+1}$, the $\p_i$ are 
quaternionic parameters and the $\q_j$ are quaternionic variables. 
In normalizing a quaternionic monomial where all quaternionic parameters are given in
coordinate-free form, the quaternionic parameters are treated just as
quaternionic variables. If instead, some quaternionic parameters are given in coordinate form,
say $\p_l=u'_l+x'_l\i+y'_l\j+z'_l\k$, then by multilinearity, for any quaternionic monomials
$\A, \B$, 
\be
\A\p_l\B=u'_l\A\B+x'_l\A\i\B+y'_l\A\j\B+z'_l\A\k\B,
\ee 
so the normal form of $\A\p_l\B$ can be obtained from the normal forms of
$\A\B, \A\i\B, \A\j\B, \A\k\B$. In this way, the above definition of
quaternionic polynomial $\mathbb K$-algebra indeed takes into account quaternionic parameters.

In this paper, we always use the following order in alphabet ${\cal A}$, called the
{\it conjugate-alternating order}:
\begin{equation}
\q_1 \prec \bq_1 \prec \q_2 \prec \bq_2 \prec \ldots \prec \q_n \prec
\bq_n \prec \i \prec \j \prec \k.
\label{orderq}
\end{equation}

\bp
For any quaternionic monomials $\X,\Y$, $\X[\Y]-[\Y]\X\in {\cal I}$.
\label{ideal:commute}
\ep

\begin{proof}
By the coordinatization (\ref{def:qqbar}), $[\Y]$ can be written as
a ${\mathbb K}$-polynomial $f$ whose variables are the coordinate polynomials of the letters
in $\cal A$. By the commutativity of
these coordinate polynomials with the letters in $\cal A$, we get the 
commutativity of $[\Y]$ with monomial $\X$, namely, 
$\X[\Y]-[\Y]\X\in {\cal I}$.
\end{proof}

For example, the first line of (\ref{def:quat}) is 
on the commutativity of $[\q_l]$ with any letter $\a\in \cal A$. In particular,
\be
[\q_l]\q_l-\q_l[\q_l]=\bq_l\q_l-\q_l\bq_l.
\ee
Since both sides have the same leading term $\bq_l\q_l$, the relation
$\bq_l\q_l=\q_l\bq_l$ can be taken as the commutativity of $[\q_l]$ with $\q_l$. Alternatively, 
by $[\q_l\bq_l]=2\q_l\bq_l$, the relation $\bq_l\q_l=\q_l\bq_l$ can also be taken
as a special case of the following shift symmetry within a bracket.

\bc
For any quaternionic monomials $\X, \Y$, $[\X\Y]-[\Y\X]\in {\cal I}$.
\ec

\begin{proof} By $[\X]\Y=\Y[\X]$ and $\bar{\X}[\Y]=[\Y]\bar{\X}$,
\be\ba{lcl}
[\X\Y]-[\Y\X] &=& \X\Y+\bar{\Y}\bar{\X}-\Y\X-\bar{\X}\bar{\Y} \\

&=& [\X]\Y-\bar{\X}[\Y]+\bar{\Y}\bar{\X}-\Y\X \\

&=& (\Y[\X]-\Y\X)+(\bar{\Y}\bar{\X}-[\Y]\bar{\X}) \\

&=& 0.
\ea
\ee
\end{proof} 

In summary, the bracket operator has the following two symmetries: for any 
quaternionic monomials $\X,\Y$,
\bi
\item conjugate reversal symmetry: $[\X\Y]=[\bar{\Y}\bar{\X}]$; 
\item shift/rotation symmetry: $[\X\Y]=[\Y\X]$.
\ei

The following is the main theorem of this paper on the normal forms of quaternionic
polynomials.

\bt [Normal form]
\label{thm:normal}
A quaternionic monomial is in normal form if and only if it is a submonomial of 
a monomial in the following form, see Figure \ref{normal:2}:
\be
\A_1\q_{p1}\q_{f1}\A_2\q_{p2}\q_{f2}\cdots \A_r\q_{pr}\q_{fr}\A_{r+1}
\e_1 \q_{f(r+1)} \e_2 \q_{f(r+2)} \cdots \e_s \q_{f(r+s)} \e_{s+1},
\ee
where $r,s\geq 0$, the $\q_j\in \cal Q$, the $\e_j\in \cal E$, letter $\k$ occurs at most once
in the $\e_j$, and
\bu
\item each $\A_i$ is either empty or composed of a non-decreasing sequence of letters in
${\cal Q}\cup \overline{\cal Q}$;

\item sequence $\q_{p1}\q_{p2}\cdots \q_{pr}\e_1\e_2\cdots \e_{s+1}$ 
is non-decreasing, 
called the {\bf peak sequence}; 

\item sequence $\q_{f1}\q_{f2}\cdots \q_{f(r+s)}$ is non-decreasing, 
called the {\bf floor sequence}, or {\bf lower sequence};

\item sequence $\A_1\q_{p1}\A_2\q_{p2}\cdots \A_{r+1}\e_1\e_2\cdots \e_{s+1}$
is non-decreasing, called the {\bf ceiling sequence}, or {\bf upper sequence};

\item for any $1\leq l\leq r$, $\q_{pl}\succ \q_{fl}$, and 
$\q_{pl}\succ_M \A_{l}$ if the latter is non-empty.
\eu
To obtain the normal form of a quaternionic polynomial, make reduction with respect to
the following set of relations arbitrarily:
\bi
\item multiplication table of the basis elements;

\item conjugate-defining equation (\ref{def:qqbar:1});

\item commutativity of any bracket $[\Y]$ with any monomial $\X$, i.e.,
$\X[\Y]-[\Y]\X$;

\item shift invariance within any bracket: $[\X\Y]-[\Y\X]$ for any
monomials $\X,\Y$.
\ei
\et

\begin{figure}[htbp]
\centering\includegraphics[width=5.5in]{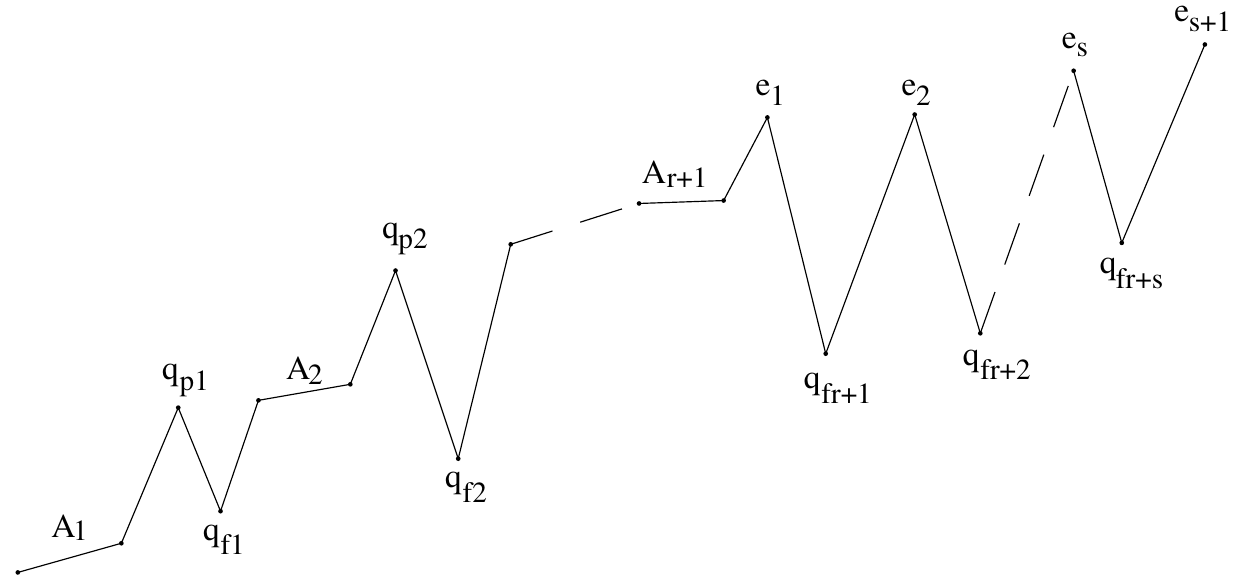}
\caption{Double non-decreasing structure of quaternionic monomials in normal form, with
peak sequence $\q_{p1}\q_{p2}\cdots \q_{pr}\e_1\e_2\cdots \e_{s+1}$, floor 
sequence $\q_{f1}\q_{f2}\cdots \q_{f(r+s)}$, and ceiling
sequence $\A_1\q_{p1}\A_2\q_{p2}\cdots \A_{r+1}\e_1\e_2\cdots \e_{s+1}$.
}
\label{normal:2}
\end{figure}

Theorem \ref{thm:normal} is a direct corollary of the following Main Theorem on
a reduced Gr\"obner basis {\bf BG}
of $\cal I$. The set of relations given in Theorem \ref{thm:normal} for making reduction
strictly contain {\bf BG} as a subset, so it is also a Gr\"obner basis of $\cal I$.

\begin{thm} [Main Theorem]
\label{mainthm}
In the following sets,
$\1\prec \2 \prec \cdots \prec \m$ is an arbitrary increasing sequence of
letters in ${\cal Q}$, where $m\geq 5$, and letters
$\e'\preceq \e$ are arbitrary in $\cal E$:

\begin{enumerate}
\item [{\bf BG2:}]
     \begin{enumerate}
     \item [{\rm (a)}] $\i^2+1$, $\j^2+1$, $\k^2+1$,\\
$\i\j-\k$, $\j\i+\k$, $\j\k-\i$, $\k\j+\i$,
$\k\i-\j$, $\i\k+\j$;

     \item [{\rm (b)}]
$\bar{\1}\1-\1\bar{\1}$,\
$[\2]\bar{\1}-\bar{\1}[\2]$,\ $[\2]\1-\1[\2]$;

      \item [{\rm (c)}]
$\2[\1]-[\1]\2$,\
$\e[\1]-[\1]\e$;
     \end{enumerate}

\item [{\bf BG3:}]
     \begin{enumerate}
      \item [{\rm (a)}]
$\k\1\k+\j\1\j+\i\1\i+2\times \bar{\1}+\1$; {\rm (notice: here is twice the $\bar{\1}$)}

     \item [{\rm (b)}]
$[\3\2]\1-\1[\3\2]$, \
$[\3\1]\bar{\2}-\bar{\2}[\3\1]$,\ \,
$[\3\1]\2-\2[\3\1]$, \
$[\2\1]\bar{\1}-\bar{\1}[\2\1]$, \
$[\2\1]\1-\1[\2\1]$,\\
$[\e\2]\1-\1[\e\2]$,\ \ 
$[\e\1]\2-\2[\e\1]$,\ \
$[\e\1]\bar{\2}-\bar{\2}[\e\1]$, \ \
$[\e\1]\1-\1[\e\1]$,\ \
$[\e\1]\bar{\1}-\bar{\1}[\e\1]$,\\
$[\j\1]\i-\i[\j\1]$,\ \ \ \
$[\k\1]\i-\i[\k\1]$,\ \ \ 
$[\k\1]\j-\j[\k\1]$;

     \item [{\rm (c)}]
$\2[\2\1]-[\2\1]\2$;
     \end{enumerate}

\item [{\bf BG4:}]
$\3[\2\3\1]-[\2\3\1]\3$, \
$\3[\2\4\1]-[\2\4\1]\3$,\
$\3[\2\e\1]-[\2\e\1]\3$,\
$\e'[\2\e\1]-[\2\e\1]\e'$,\
where $(\e',\e)\neq(\k,\k)$;

\item [{\bf BGm:}] $(m\geq 5)$ $\3[\2\A\m\1]-[\2\A\m\1]\3$, \
$\3[\2\A\e\1]-[\2\A\e\1]\3$, where 
\bi
\item $\A=\x_4\x_5\cdots \x_{m-1}$, with each $\x_i\in {\mathcal Q}\cup \overline{\mathcal Q}$,

\item $\3\preceq
\x_4\preceq \x_5\preceq \cdots \preceq \x_{m-1}\prec \m$.
\ei
\end{enumerate}
Then ${\bf BG}:=\cup_{t\geq 2}{\bf BGt}$ is a
reduced Gr\"obner basis of $\cal I$. Sometimes $\bf BG$ is written as
${\bf BG}[{\cal Q}]$ to denote the dependence on letters of $\cal Q$, or
equivalently, written as ${\bf BG}[{\cal A}]$ where 
${\cal A}={\cal Q}\cup \overline{\cal Q}\cup {\cal E}$.
\end{thm}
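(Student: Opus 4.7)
The plan is to verify {\bf BG} via the standard Gr\"obner basis criterion for free associative algebras: (i) every element of {\bf BG} lies in $\cal I$; (ii) {\bf BG} generates $\cal I$; (iii) every S-polynomial built from an overlap of the leading monomials of two elements of {\bf BG} reduces to zero modulo {\bf BG}; and (iv) no leading monomial of one element of {\bf BG} is a submonomial of that of another. Membership (i) is immediate: each element is either a multiplication-table relation, the conjugate-defining equation, or a bracket-commutator $\X[\Y]-[\Y]\X$, and Proposition \ref{ideal:commute} places all such commutators in $\cal I$. Ideal generation (ii) follows because {\bf BG2(a)}, {\bf BG3(a)}, and {\bf BG2(c)} already contain the generators (\ref{gen:table}), (\ref{gen:coord}), and the commutativity relations (\ref{def:quat}) of $\cal I$. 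Reducedness (iv) is a finite inspection: under the conjugate-alternating order (\ref{orderq}), each leading monomial in {\bf BG} has a constrained ``peak/floor'' shape whose non-containment pattern is easy to catalogue and does not grow with the parameter $m$.

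The heart of the argument is (iii), and this is where the paper's technical apparatus is needed. Because {\bf BG} is infinite (the family {\bf BGm} is parametrized by $m\geq 5$ and by the interior non-decreasing block $\A$), a direct enumeration of S-polynomials does not terminate. My plan is to first pass to the enlarged candidate $\overline{\bf BG}$ of Section \ref{sect:reduct}, and then enlarge once more in Section \ref{sect:ext} by adjoining every element of the form $\a[\T]-[\T]\a$ for an arbitrary letter $\a\in\cal A$ and an arbitrary monomial $\T$. For each newly adjoined element, I must first certify that it reduces to zero with respect to the previously certified set; once done, it becomes a legal reducer in subsequent S-polynomial checks. The payoff is that $\overline{\bf BG}$ furnishes rotation, conjugate-reversal, and ``move past a letter'' operations on brackets without raising the deglex order, which is precisely what is needed to manage the long interior blocks appearing in {\bf BGm}. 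With $\overline{\bf BG}$ in hand, I would apply the clear S-polynomial technique of Section \ref{sect:gb} to cull the overlap list: since every leading monomial in {\bf BG} has degree $\geq 2$, only the clear overlaps must be reduced, nearly halving the work.

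I would then stratify the remaining checks by the overlap shape of the two leading monomials, separating the cases where the overlap sits at the floor letter $\1$, at the peak letter $\2$, at an interior letter of $\A$, or at a pure-basis letter $\e$. In each case, I would construct a reduction path in $\overline{\bf BG}$ using the four new techniques of Section \ref{sect:reduct}---subsequence-led, subsequence-dominated, bottom-letter controlled, and substitutional reduction---whose length is uniform in the parameters $m$, $\A$, $\e$, $\e'$. The main obstacle, I expect, is the large-$m$ stratum of {\bf BGm}: when two elements overlap through a long non-decreasing $\A$-block whose letters freely mix $\cal Q$ and $\overline{\cal Q}$, the intermediate polynomials arising in reduction naturally want to exceed the S-polynomial in the deglex order, because naive commutation of a long bracket past $\3$ temporarily expands the leading monomial. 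The role of substitutional reduction is precisely to reroute the reduction so that this does not occur, and the bottom-letter controlled reduction keeps the floor letter $\1$ from being prematurely consumed. Handling these overlaps in a parameter-uniform way---rather than case-by-case per $m$---is the real bottleneck of the proof, and is what replaces the month-long monomial-segmentation computation of \cite{liu2015}. Once this stratum is dispatched, combining (i)--(iv) yields the claim.
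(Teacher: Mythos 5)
Your proposal follows essentially the same route as the paper's actual proof: certify generation and pairwise irreducibility directly, then reduce all clear S-polynomials to zero modulo the enlarged candidate ${\bf BG}^{ext}$ using the four new reduction techniques, with substitutional and bottom-letter controlled reduction keeping the argument uniform in the parameter $m$ and the interior block $\A$. The only organizational difference is that Section~\ref{sect:proof} stratifies the surviving overlaps by the \emph{types} of the two generators involved (yielding the five leader shapes of Figure~\ref{fig:Spoly}), rather than by the overlap position as you suggest, but this is a cosmetic relabelling of the same case split.
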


To prove the Main Theorem, notice that
the pairwise irreducibility of any two elements of {\bf BG} is obvious. 
That {\bf BG} generates ideal $\cal I$ is also easy to prove.

\begin{prop}
\label{lem:generators}
$\langle {\bf BG}\rangle = {\cal I}$.
\end{prop}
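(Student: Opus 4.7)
The plan is to establish both inclusions $\langle {\bf BG}\rangle\subseteq{\cal I}$ and ${\cal I}\subseteq\langle {\bf BG}\rangle$. The first inclusion is straightforward: every element of ${\bf BG2}$(a) is literally one of the multiplication-table generators (\ref{gen:table}); the element of ${\bf BG3}$(a) with $\1=\q_l$ is the conjugate-defining generator (\ref{gen:coord}); and every remaining element of ${\bf BG}$ has the shape $\X[\Y]-[\Y]\X$ for some quaternionic monomials $\X,\Y$, so Proposition \ref{ideal:commute} places it in ${\cal I}$.

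For the reverse inclusion, I would exhibit each generator of ${\cal I}$ from Definition \ref{def:quat:all} as an element of $\langle{\bf BG}\rangle$. The multiplication-table relations (\ref{gen:table}) and the conjugate-defining relations (\ref{gen:coord}) are already in ${\bf BG}$ as noted above, so only the commutativity generators (\ref{def:quat}) remain: $\a[\q_l]-[\q_l]\a$, $\a[\i\q_l]-[\i\q_l]\a$, $\a[\j\q_l]-[\j\q_l]\a$, $\a[\k\q_l]-[\k\q_l]\a$ for every letter $\a\in{\cal A}$ and every index $l$. Under the conjugate-alternating order (\ref{orderq}) I would split into cases according to whether $\a$ is $\q_l$, $\bq_l$, some $\q_m$ or $\bq_m$ with $m<l$, some $\q_m$ or $\bq_m$ with $m>l$, or a basis letter in ${\cal E}$. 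The vast majority of these cases match an element of ${\bf BG2}$(b), ${\bf BG2}$(c), or ${\bf BG3}$(b) directly, possibly after a relabeling of $\1,\2,\3$ or a change of sign.

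The residual missing cases -- notably $\bq_m[\q_l]-[\q_l]\bq_m$ with $m>l$, $\bq_m[\e\q_l]-[\e\q_l]\bq_m$ with $m<l$, and $\e'[\e\q_l]-[\e\q_l]\e'$ with $\e'\succeq\e$ -- I would derive by expanding the bracket as $[\Y]=\Y+\bar\Y$ and rewriting the resulting monomials using relations already in $\langle{\bf BG}\rangle$: ${\bf BG2}$(a) collapses products of two basis letters; the listed commutators in ${\bf BG2}$(b)--(c) and ${\bf BG3}$(b) move a single $\q$, $\bq$, or $\e$ past one bracket; and ${\bf BG3}$(a) rewrites $\i\bq_l\i$, $\j\bq_l\j$, $\k\bq_l\k$ in terms of the corresponding $\q_l$-products. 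The main obstacle will be this last family of identities, particularly the diagonal cases $\e'=\e$ such as $\i[\i\q_l]-[\i\q_l]\i$, where the simplification genuinely needs the full conjugate-defining identity of ${\bf BG3}$(a) rather than any single commutator; the bookkeeping of signs and the order in which relations are applied will need care.
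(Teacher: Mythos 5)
Your proposal is correct and follows essentially the same two-inclusion strategy as the paper: the forward inclusion $\langle{\bf BG}\rangle\subseteq{\cal I}$ via Proposition~\ref{ideal:commute}, and the reverse inclusion by checking that the three groups of generators of ${\cal I}$ from Definition~\ref{def:quat:all} lie in $\langle{\bf BG}\rangle$, with only the commutativity generators (\ref{def:quat}) requiring work. The paper dismisses that remaining step with "easy to be reduced to zero by ${\bf BG}$," whereas you spell out the case split and correctly flag the diagonal instances such as $\i[\i\q_l]-[\i\q_l]\i$ as the ones that genuinely need ${\bf BG3}$(a) together with the ${\bf BG2}$(c) commutator $\e[\q_l]-[\q_l]\e$; this is more careful but not a different argument.
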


\begin{proof}
In {\bf BG}, {\bf BG2}(a) is exactly the relations 
from the multiplication table of basis letters, 
{\bf BG3}(a) is the conjugate-defining relation (\ref{def:qqbar:1}), and 
all other elements of {\bf BG} are on the commutativity between a bracket 
of monomial and a letter. By this observation and Proposition \ref{ideal:commute}, 
$\langle {\bf BG}\rangle\subseteq {\cal I}$.

Conversely, in the three groups of
generators of ${\cal I}$ given in Definition \ref{def:quat:all},
(\ref{gen:table}) and (\ref{gen:coord}) are already in {\bf BG}, and (\ref{def:quat})
is on the commutativity between letter $\a$ and a bracket of length-2 monomial, which is
easy to be reduced to zero by {\bf BG}. 
\end{proof}

So to prove that {\bf BG} is a Gr\"obner basis of $\cal I$, we only need to prove that
any S-polynomial of any two elements of {\bf BG} can be reduced to zero by {\bf BG}. It turns out
that this is a highly complicated procedure, due to the harsh monotonous 
requirement on the symbolically many letters 
in the brackets of {\bf BGm}, where $m$ itself is a symbolic number. To finish the proof
we need to work out new techniques in three directions: 
\bu
\item New reduction techniques for free associative algebras. In the next section,
we develop four new techniques for making reduction in free associative algebras:
subsequence-led reduction, subsequence-dominated reduction, 
bottom-letter controlled reduction, and substitutional reduction.

\item Extend {\bf BG} to a larger Gr\"obner basis candidate. We first extend {\bf BG} to
$\overline{\bf BG}$, where every letter in {\bf BG} can be replaced by its conjugate. Then
we further extend $\overline{\bf BG}$ by adding the following elements:
$\a[\T]-[\T]\a$, for all letter $\a\in {\cal A}$ and monomial $\T$.
In each extension, we verify that the newly added elements can 
be reduced to zero by {\bf BG}. These contents are in Section \ref{sect:reduct} and
Section \ref{sect:ext} respectively.

\item Reduce the number of S-polynomials by which the Gr\"obner basis can be certified. 
We propose a concept called {\it clear S-polynomial} in Section \ref{sect:gb}
for free associative algebras, and prove
that only such S-polynomials are needed in certifying a Gr\"obner basis whose elements
all have degree $>1$.
\eu

\section{New reduction techniques and preliminary extension of BG}
\setcounter{equation}{0}
\label{sect:reduct}

Recall the definition of reduction in free associative algebra.
Let $f$ be the input polynomial to be reduced by a set of polynomials $G$, 
and let $\T$ be the leading monic monomial of $f$. 
$f$ is said to be reduced to polynomial $h$ with respect to $G$, if
$h\preceq \T$, and there exist a finite sequence of triplets 
$(\A_i, g_i, \B_i)$, where the $g_i\in G$, 
the $\A_i, \B_i$ are monomials, such that each $\A_i g_i \B_i\preceq \T$, 
and $f=h+\sum_{i=1}^k \A_ig_i\B_i$. 

Suppose that in a reduction procedure, the order of applying the $g_i$
is sequentially from $i=1$ to $k$, then it is allowed that at some step
$j>1$, $\A_jg_j\B_j\succ$ the leading term of $f'=f-\sum_{i<j} \A_ig_i\B_i$,
but still $\A_jg_j\B_j\prec \T$. When such phenomenon occurs, locally 
the order of the leading term increases, but is still lower than 
that of the input polynomial. We call such a leading term a 
{\it local jumping head}. In the following, we 
present three typical techniques with local jumping head control of the
reduction procedure.

\vskip .2cm
{\bf Technique 1.} Subsequence-led reduction.

Let $\L$ be a monic monomial whose length $|\L|=l<|\T|$, such that
$\L\prec $ first $l$ letters of $\T$. Then 
for any element $g\in \langle G\rangle$
whose leading term ${\rm T}(g)$ has length $\leq |\T|-|\L|$, 
we have $\L g \prec \T$. Such a term is
always allowed in the polynomials $\A_ig_i\B_i$. When such a term does occur
in some step $j$, we say at this step, the reduction is {\it led by subsequence}
$\L$, denoted by $\L \langle G\rangle$, as 
$\A_jg_j\B_j=\L g \in \L\langle G\rangle$.
We can also replace $\L$ with a polynomial with $\L$ 
as the leading monomial.

For example, in ${\bf BGm}$-relation $\3[\2\A\m\1]-[\2\A\m\1]\3$, where $\A\succeq_M \3$
is a monomial, by replacing letter $\1$ by $\bar{\1}$, we get
$f=\3[\2\A\m\bar{\1}]-[\2\A\m\bar{\1}]\3$. In Proposition \ref{prop:overline:BG} below,
it is proved that $f$ is reduced to zero by ${\bf BG}$, where after some reductions,
$f$ is reduced to $f'=[\1](\3[\2\A\m]-[\2\A\m]\3)$. Since the input leading term
is $\T=\3\2\A\m\bar{\1}$, any monomial led by $\1$ or $\bar{\1}$ is $\prec \T$, as long
as the monomial is not longer that $\T$. So we can use the relation
$g=\3[\2\A\m]-[\2\A\m]\3\in {\cal I}=\langle {\bf BG}\rangle$ to make 
$[\1]$-led reduction to $f'$ to get zero.

\vskip .2cm
{\bf Technique 2.} Subsequence-dominated reduction.

Let $g=\sum_i \L_i \R_i\in G$, such that for some monomials $\A, \B$, polynomial
$\A g\B\prec $ the leading monomial $\T$ of $f$. Here $\A$ is allowed to be of degree 0.
Assume
$\T=\A\C$ for some monomial $\C$. Then the reduction $f-\A g\B$ is not led by
submonomial $\A$. Let $\L_1 \R_1$ be the leading term of $g$. 
When $\A\L_1\prec $ first $|\A\L_1|$ letters of $\T$, we call  
$f-\A g\B$ an {\it $\A\L_1$-dominated reduction} of $f$, denoted by 
``$(\A\L_1):$", and if $g$ belongs to a class with label $X$, we label the reduction
as $(\A\L_1):X$. If the leading term of $g$ is among the terms $\sum_{i<j} \L_i \R_i$,
and $\A\L_i\prec $ first $|\A\L_i|$ letters of $\T$ for each $i<j$, we also denote the
reduction as ``$(\A\sum_{i<j}\L_i):$", and if there is a label $\X$ for $g$, 
we label the reduction as $(\A\sum_{i<j}\L_i):X$.

For example, in ${\bf BGm}$-relation $\3[\2\A\m\1]-[\2\A\m\1]\3$, where $\A\succeq_M \3$
is a monomial, 
by replacing letter $\2$ by $\bar{\2}$, we get
$f=\3[\bar{\2}\A\m\1]-[\bar{\2}\A\m\1]\3$. In Proposition \ref{prop:overline:BG} below,
it is proved that $f$ is reduced to zero by ${\bf BG}$, where after some reductions,
$f$ is reduced to a polynomial with leading term in $\3\bar{\1}[\2]\bar{\m}\bar{\A}$. 
By relation $\bar{\1}[\2]-[\2]\bar{\1}$, the leading term increases its order from
$\3\bar{\1}\bar{\2}\bar{\m}\bar{\A}$ to $\3\bar{\2}\bar{\1}\bar{\m}\bar{\A}$. 
This is a reduction dominated by subsequence $\3\bar{\2}\bar{\1}$, because 
$\3\bar{\2}\bar{\1}\prec $ first three letters of the input leading term 
$\T=\3\2\A\m\1$. However, this is not a $\3$-led reduction, nor a $\3\bar{\2}$-dominated one,
because $\3$ (or $\3\bar{\2}$) is exactly the first letter (or first two letters) of $\T$.

We compare the difference between subsequence-led reduction and subsequence-dominated reduction.
Let $\L$ be a monomial satisfying $|\L|<|\T|$, where $\T$ is the
leading monomial of $f$. 
\bi
\item In an $\L$-led reduction of $f$ with respect to $G$, 
$\L\prec $ first $|\L|$ letters of $\T$, 
an element $g\in \langle G\rangle$ is chosen, then $f-\L g$ is the reduction result.

\item In an $\L$-dominated reduction of $f$ with respect to $G$, 
$\L=\A\L_1\prec $ first $|\L|$ letters of $\T$, where submonomial 
$\A=$ first $|\A|$ letters of $\T$, an element $g\in G$ is chosen such that
$\L=$ first $|\L|$ letters of the leading monomial of $\A g$, then
$f-\A g \B$ for some monomial $\B$ is the reduction result.
\ei

\vskip .2cm
{\bf Technique 3.} Substitutional reduction.
 
In reducing by $G$ a polynomial $f$ involving {\it monomials of symbolic length}, e.g., monomial $\A$ 
in $\bf BGm$, often new letters are needed to represent the monomial, together with
a new order among the new letters. After this rewriting, $G$ and $f$ are changed into 
$G'$ and $f'$ in new variables. The order among the new letters often disagrees with the
order of monomials in old letters, e.g., in the extreme case the symbolic length of 
a monomial is allowed to take value 0, so the order among the new letters always disagree with
the old order. The reduction procedure of $f'$ with respect to $G'$, 
when translated back to old letters, need to be checked stepwise whether or not it remains
a correct reduction. This is the {\it first instance of substitutional reduction}: introduce
new letters to substitute monomials, define an order among the new letters, make reduction in 
new letters, and finally check the whole reduction procedure when translated back to old letters. 

The {\it second instance of substitutional reduction} involves introducing new letters to
replace old letters, usually in the case where an old letter has {\it varying order}, and a single
reduction procedure is needed for all possible orders of the letter. In such a case, introducing
new letters and then defining an order among the new letters, is equivalent to first considering
only a specific case of the varying order and making reduction only for this specific case, then
checking the reduction procedure to make sure that it is valid for all other possibilities of the
varying order.

Section \ref{sect:proof} is full of instances of substitutional reduction. 
For example in the proof of Proposition \ref{prop:h1}, it is proved 
that in a single reduction procedure, the following polynomial:
\be\ba{lll}
h_1 &=& (\5[\3\A_6\m_1\1]-[\3\A_6\m_1\1]\5)\m_2
-\5\3\A_6([\m_1\1]\m_2-\m_2[\m_1\1]) \\

&=& 
\5\bar{\1}\overline{\m_1}\overline{\A_6}\bar{\3}\m_2
-\5\3\A_6\bar{\1}\overline{\m_1}\m_2
-[\3\A_6\m_1\1]\5\m_2
+\m_2\5\3\A_6[\m_1\1],
\ea\ee
where 
(i) $\1\prec \3\prec \4\in \cal Q$, 
(ii) $\m_1\in {\cal Q}\cup {\cal E}$, 
(iii) $\m_2\in {\cal A}$, and $\m_2\prec \m_1$,
(iv) $\A_6\succeq_M \5$ is a non-decreasing monomial of length $>0$, and $\A_6\prec_M \m_1$,
can be reduced to zero by {\bf BG}, disregard of the following possibilities
of the varying order of $\m_2$:

(1) $\m_2\prec \1$, e.g. $\m_2=\0$ or $\bar{\0}$, where $\0\in {\cal Q}$;\\
\indent (2) $\m_2\in \{\1, \bar{\1}\}$;\\
\indent (3) $\bar{\1}\prec \m_2\prec \3$, e.g. $\m_2=\2$ or $\bar{\2}$, where $\2\in {\cal Q}$;\\
\indent (4) $\m_2\in \{\3, \bar{\3}\}$;\\
\indent (5) $\bar{\3}\prec \m_2\prec \5$, e.g. $\m_2=\4$ or $\bar{\4}$, where $\4\in {\cal Q}$;\\
\indent (6) $\m_2\in \{\5, \bar{\5}\}$;\\
\indent (7) $\bar{\5}\prec \m_2\prec \m_1$, e.g. $\m_2=\6$, where 
$\6\in {\cal A}$.

The proving method is to first introduce new variables $\1$ to $\6$:
\be
(\1, \3, \5, \A_6, \m_2, \m_1) \to (\1, \2, \3, \4, \5, \6), \hbox{ with }
\1\prec \2\prec \3\prec \4\prec \5\prec \6, 
\ee
so that $h_1$ is rewritten as 
$h_1'$, then reduce $h_1'$ to zero by the corresponding
${\bf BG}[\1, \ldots, \6]$ in the new letters, and finally check the reduction
procedure of $h_1'$ after replacing the new letters with old ones to see if
it remains a correct reduction for all the possible orders of $\m_2$, and for the role of
$\4=\A_6$ as an arbitrary non-decreasing monomial satisfying 
$\m_1\succ_M \A_6\succeq_M \5$ in old letters. 

\vskip .2cm
{\bf Technique 4.} Bottom-letter controlled reduction.

This is an integration of the three previous techniques. Let $f$ be a polynomial
whose leading monomial is led by letter $\t$, and let $\cal B$ be a set of letters
with property that every letter $\b\in \cal B$ satisfies $\b\prec_M \t\bar{\t}$. 
If after some reductions with respect to $G$, $f$ is reduced to a polynomial $f'$
whose terms have the property that the monic monomial of the term is led by a letter
in $\cal B$. Then disregard of the original order among the letters, 
new letters can be introduced for the monomials and old letters in $f'$, together with an
order among the new letters that may disagree with the order in old letters, 
and all kinds of $\b$-led reductions, $\b$-dominated reductions can be done to
$f'$ in new letters, for all letters $\b\in {\cal B}$,
as long as in the new order of letters, 
{\it those letters of $\cal B$ are the lowest}, 
although they may not be the lowest in old order. The substitutional
reductions need to be checked for correctness.

In quaternionic polynomial algebra, if either no letter of $\cal B$ is in $\cal E$,
or the multiplication table of basis letters is not used in making reduction to $f'$,
then after all kinds of reductions, $f'$ remains to have the letters of $\cal B$ as 
the leading letter of every term of it. In this case, the bottom-letter controlled reduction 
is composed of ${\cal B}$-led reductions and ${\cal B}$-dominated reductions in new letters.
When translated back to old letters, the ${\cal B}$-led reductions remain to be
${\cal B}$-led reductions, so only the ${\cal B}$-dominated reductions in new letters need
to be checked for correctness in old letters. The following proposition claims that if
$\cal B$ contains only two letters and their conjugates, then there is no need to check
the correctness of any reduction with respect to the {\bf BG} relations in new letters.

\bp
\label{prop:bottom-reduce}
Let $f$ be a polynomial
whose leading monomial is led by letter $\c$, 
and after some reductions with respect to ${\bf BG}[{\cal A}]$,
each term in the reduction result is led by a letter of 
${\cal B}=\{\a,\bar{\a}, \b, \bar{\b}\}\subset {\cal A}$, with property
$\c\succ_M \a\bar{\a}\b\bar{\b}$. If introducing new quaternionic letters
$\1\prec \2\prec \ldots\prec \r$ for the letters and monomials in $f$ by setting 
$(\a,\b)\to (\1,\2)$, and making reductions to $f$ by ${\bf BG}[\1, \2, \ldots, \r]$
in new letters, then all the reductions when translated back to old letters, are
reductions of $f$ with respect to ${\bf BG}[{\cal A}]$.
\ep

\begin{proof}
In new letters, the
new alphabet is ${\cal A}'={\cal Q}'\cup \overline{{\cal Q}'}\cup {\cal E}$, where
${\cal Q}'=\{\1, \2, \ldots, \r\}$, and $f'$ has all its terms led by letters of 
${\cal B}'=\{\1,\bar{\1},\2, \bar{\2}\}\subseteq {\cal Q}'\cup \overline{{\cal Q}'}$, 
which are the four lowest ordered ones in ${\cal A}'$. Every reduction with respect to
${\bf BG}[{\cal A}']$ is either a ${\cal B}'$-led reduction or a ${\cal B}'$-dominated one. A 
${\cal B}'$-led reduction is naturally a ${\cal B}$-led reduction in old letters.

As $\c\succ_M \a\bar{\a}\b\bar{\b}$ in old letters, and furthermore 
if $\c$ becomes $\s$ in new letter,
then $\s\succ_M \1\bar{\1}\2\bar{\2}$,
any ${\cal B}'$-dominated reduction $h$ of $f$ in new letters
by applying a relation $g\in {\bf BG}[{\cal A}']$ has the property
that $f'=h+g \R$, where $\R$ is a monomial. So $g$ has its leading monomial led by
a letter of ${\cal B}'$. In ${\bf BG}[{\cal A}']$, all elements $g$ with this 
property are the following: 
\bi
\item ${\bf BG2}[{\cal A}']$: \
$\bar{\1}\1-\1\bar{\1}$,\
$\bar{\2}\2-\2\bar{\2}$,\
$[\2]\bar{\1}-\bar{\1}[\2]$,\ $[\2]\1-\1[\2]$,\
$\2[\1]-[\1]\2$;

\item ${\bf BG3}[{\cal A}']$:\
$[\2\1]\bar{\1}-\bar{\1}[\2\1]$, \
$[\2\1]\1-\1[\2\1]$,\
$\2[\2\1]-[\2\1]\2$.
\ei
These elements when written in old letters, are all elements of ${\bf BG2}[{\cal A}]$
and ${\bf BG3}[{\cal A}]$.
So any ${\cal B}'$-dominated reduction by applying ${\bf BG}[{\cal A}']$
is a ${\cal B}$-dominated reduction by applying ${\bf BG}[{\cal A}]$.
\end{proof}

For example, in the proof of Proposition \ref{prop:overline:BG} below, for input polynomial
$f=\3[\bar{\2}\A\m\1]-[\bar{\2}\A\m\1]\3$, after some reductions by {\bf BG}, 
$f$ is reduced to 
\be
f'=[\2](\3\A\m\1-\A\m\1\3)-\bar{\1}[\2]\bar{\m}\bar{\A}\3
-\1\bar{\3}[\2]\bar{\m}\bar{\A}+[\2][\3\bar{\1}]\bar{\m}\bar{\A},
\label{bgbar:ii:2}
\ee
where each term is led by a letter of
${\cal B}=\{\1, \bar{\1}, \2, \bar{\2}\}$. Set new letters
\be
(\1,\2,\3,\A\m)\to (\1,\2,\3,\4), \hbox{ where } \1\prec \2 \prec \3 \prec \4, 
\ee
and make substitutional reduction to $f'$ with respect to ${\bf BG}[\1,\ldots,\4]$
in the new letters:
\be
\ba{lcl}
f' &=& [\2](\3\4\1-\4\1\3)-\underbrace{\bar{\1}[\2]}\bar{\4}\3
-\underbrace{\1\bar{\3}[\2]}\bar{\4}+[\2][\3\bar{\1}]\bar{\4} \\

& \overset{([\2]):,\, \1{\cal I}}{=}&
[\2](\3\4\1-\underbrace{\4\1\3}-\bar{\1}\bar{\4}\3-\1\bar{\3}\bar{\4}+[\3\bar{\1}]\bar{\4}) \\

& \overset{[\2]{\cal I}}{=}& 0,

\ea
\label{tech4:ex1}
\ee
where in the first step, there are $[\2]$-dominated reduction 
$\bar{\1}[\2]=[\2]\bar{\1}$, and a combination of 
$\1$-led reduction $\1\bar{\3}[\2]=\1[\2]\bar{\3}$ with the $[\2]$-dominated reduction,
leading to reduction $\1\bar{\3}[\2]=[\2]\1\bar{\3}$. 
In the last step, the $[\2]$-led reduction is by
$-\4\1\3=\bar{\1}\bar{\4}\3-\3[\4\1]$, which is not 
a ${\bf BG}\3$-relation in old letters, because $\4$ is a monomial instead of a letter.
Furthermore, all substitutional
reductions do not need to be checked for correctness in old letters.

In the following, we make a preliminary extension of {\bf BG} by allowing any letter
in any element of {\bf BG} to be replaced by its conjugate. 

\bp
\label{prop:overline:BG}
Let $\overline{\bf BG}$ denote the union of {\bf BG} with the
polynomials obtained from {\bf BG}
by replacing any number of letters in
$\bf BG$ with their conjugates. Then 
any element of $\overline{\bf BG}$ can be reduced to zero by $\bf BG$.
\ep

\begin{proof} It is easy to verify that in any element of
{\bf BG2} to {\bf BG4}, if one or more than one letter is replaced by its
conjugate, the resulting polynomial can be reduced to zero by 
{\bf BG2} to {\bf BG4}. The corresponding extensions are denoted by 
$\overline{\bf BG}\2$ to $\overline{\bf BG}\4$ respectively.

As to {\bf BGm} where $m>4$, 
by similarity we only show the proof that
$\3[\2\A\m\1]-[\2\A\m\1]\3$ can be reduced to zero by {\bf BG},
where $\A=\x_4\x_5\cdots \x_{m-1}$.
If some $\x_i$ is replaced by $\overline{\x_i}$, the result is still in
{\bf BGm}, so we only need to consider the conjugations of $\1,\2,\3,\m$
respectively, then consider various combinations of their conjugations.

(i) $\1$ is replaced by $\bar{\1}$.
\be\ba{cl}
& \3[\2\A\m\bar{\1}]-[\2\A\m\bar{\1}]\3 \\[2mm]

=& \underbrace{\3\2\A\m[\1]}-\underbrace{\3\2\A\m\1}+\3\1\bar{\m}\bar{\A}\bar{\2}
-[\2\A\m\bar{\1}]\3 \\

\overset{\overline{\bf BG}\2, {\bf BGm}}{=}&
[\1]\3\2\A\m+\underbrace{\3[\1]}\bar{\m}\bar{\A}\bar{\2}
-\underbrace{[\2\A\m[\1]]}\3 \\

\overset{\overline{\bf BG}\2}{=}&
[\1](\3[\2\A\m]-[\2\A\m]\3) \\

\stackrel{\scriptscriptstyle [\1]{\cal I}}{=}& 0.
\ea\ee

(ii) $\2$ is replaced by $\bar{\2}$.
\be\ba{cl}
& \3[\bar{\2}\A\m\1]-[\bar{\2}\A\m\1]\3 \\[2mm]

=& \underbrace{\3[\2]}\A\m\1-\underbrace{\3\2\A\m\1}
+\3\bar{\1}\bar{\m}\bar{\A}\2-[\bar{\2}\A\m\1]\3 \\

\overset{\overline{\bf BG}\2, {\bf BGm}}{=}&
[\2]\3\A\m\1+\3\bar{\1}\underbrace{\bar{\m}\bar{\A}[\2]}
-[\2]\A\m\1\3-\bar{\1}\underbrace{\bar{\m}\bar{\A}[\2]}\3\\

\overset{\overline{\bf BG}\2}{=}&
[\2](\3\A\m\1-\A\m\1\3)-\underbrace{\bar{\1}[\2]}\bar{\m}\bar{\A}\3
+\underbrace{\3\bar{\1}[\2]}\bar{\m}\bar{\A}\\

\overset{([\2]):\overline{\bf BG}\2,\,
(\3[\2]\bar{\1}):\overline{\bf BG}\2}{=}&
[\2](\3\A\m\1-\A\m\1\3-\bar{\1}\bar{\m}\bar{\A}\3
+\3\bar{\1}\bar{\m}\bar{\A})\\[2mm]

=&[\2](\3[\A\m\1]-[\A\m\1]\3)\\

\overset{[\2]{\cal I}}{=}& 0.
\ea\ee
In the fourth line, applying $g=\bar{\1}[\2]-[\2]\bar{\1}$ of $\overline{\bf BG}\2$
increases the order of term $\bar{\1}[\2]$, but still the leading term
of $g$ is $\prec$ the input leading term $\3\bar{\2}\A\m\1$, so this is
a $[\2]$-dominated reduction. Similarly, $\3\bar{\1}[\2]=[\2]\3\bar{\1}$
is a combination of applying $\bar{\1}[\2]-[\2]\bar{\1}$ and
$\3[\2]\bar{\1}-[\2]\3\bar{\1}$, a procedure in which the leading term is led by
subsequence $\3\bar{\2}\bar{\1}\prec $ first three letters of 
input leading term $\3\bar{\2}\A\m\1$, so this is a 
$\3[\2]\bar{\1}$-dominated reduction.

Alternatively, in the fifth line we can also apply $\overline{\bf BG}\3$ to
$\3\bar{\1}[\2]$ to get 
$-\1\bar{\3}[\2]+[\2][\3\bar{\1}]$, so that the input polynomial is reduced to the polynomial
$f'$ in (\ref{bgbar:ii:2}), then make bottom-letter controlled reduction (\ref{tech4:ex1})
with bottom letters $\{\1, \bar{\1}, \2, \bar{\2}\}$.

(iii) $\3$ is replaced by $\bar{\3}$.
\be\ba{cl}
& \bar{\3}[\2\A\m\1]-[\2\A\m\1]\bar{\3} \\[2mm]

=& \underbrace{[\3](\2\A\m\1+\bar{\1}\bar{\m}\bar{\A}\bar{\2})}
-\underbrace{\3[\2\A\m\1]}-[\2\A\m\1]\bar{\3}\\

\overset{\overline{\bf BG}\2, {\bf BGm}}{=}&
\2\underbrace{[\3]\A\m\1}+\bar{\1}\underbrace{[\3]\bar{\m}\bar{\A}\bar{\2}}
-[\2\A\m\1][\3]\\

\overset{\2{\cal I}, \bar{\1}{\cal I}}{=}&
\2\A\m\1[\3]+\bar{\1}\bar{\m}\bar{\A}\bar{\2}[\3]
-[\2\A\m\1][\3] \\

=& 0.
\ea\ee

(iv) $\m$ replaced by $\bar{\m}$.
\be\ba{cl}
& \3[\2\A\bar{\m}\1]-[\2\A\bar{\m}\1]\3 \\[2mm]

=& \3\2\A\underbrace{[\m]\1}+\3\bar{\1}\underbrace{[\m]\bar{\A}\bar{\2}}
-\underbrace{\3[\2\A\m\1]}
-\2\A\underbrace{[\m]\1\3}-\bar{\1}\underbrace{[\m]\bar{\A}\bar{\2}\3}
+[\2\A\m\1]\3 \\

\overset{\overline{\bf BG}\2,
\3\bar{\1}{\cal I}, {\bf BGm}, \2{\cal I}, \bar{\1}{\cal I}}{=}&
(\3[\2\A\1]-[\2\A\1]\3)[\m]\\

\overset{{\bf \overline{BG}(m-1)}}{=}& 0.
\ea\ee
In the last step, if the last letter $\x_{m-1}$ of
$\A$ belongs to $\overline{\cal Q}$, then induction on
$m\geq 5$ is needed so that $\bf \overline{BG}(m-1)$ can be
called for, else $\bf BG(m-1)$ is used directly.

In the next to the last step, 
\bi
\item $\3\2\A[\m]\1=\3\2\A\1[\m]$ is by $\overline{\bf BG}\2$-relation
$[\m]\1-\1[\m]$;

\item $\3\bar{\1}[\m]\bar{\A}\bar{\2}=\3\bar{\1}\bar{\A}\bar{\2}[\m]$ is 
a $\3\bar{\1}$-led reduction;

\item $\2\A[\m]\1\3=\2\A\1\3[\m]$ is a $\2$-led reduction;

\item $\bar{\1}[\m]\bar{\A}\bar{\2}\3
=\bar{\1}\bar{\A}\bar{\2}\3[\m]$ is a $\bar{\1}$-led reduction.
\ei

(v) More than one of $\1,\2,\3,\m$ is replaced by its conjugate.

For $i=1,2,3,4$, denote by $\overline{\bf BGm}_i$ the set of polynomials obtained from
${\bf BGm}$ by replacing $i$ letters in $\{\1,\2,\3,\m\}$ to their conjugates
respectively. We have proved that all elements of $\overline{\bf BGm}_1$ can be reduced to
zero by $\bf BG$. We need to prove that for $i=2,3,4$,
$\overline{\bf BGm}_i$ can be reduced to zero by ${\bf BG}$ and
$\cup_{j<i} \overline{\bf BGm}_j$.
The reduction procedure is exactly 
the same as in (i) to (iv), the only difference
is the rule ${\bf BGm}$ used, which should be updated to 
$\cup_{j<i} \overline{\bf BGm}_j$.
\end{proof}

\section{Further extension of BG}
\setcounter{equation}{0}
\label{sect:ext}

The following is a powerful commutativity.

\bp
For any $\1\in \cal Q$, any monomial $\S$, $\S[\1]-[\1]\S$ can be reduced to zero
by $\bf BG$. The conclusion also holds for $\S\1\bar{\1}-\1\bar{\1}\S$.
\label{prop:11}
\ep

\begin{proof} The proofs of the two conclusions are similar, so we only present
the first one.
When $|\S|\leq 2$, the conclusion 
can be easily established by $\overline{\bf BG}\2, \overline{\bf BG}\3$. 
Assume that the conclusion holds for
all monomials of length $<|\S|$, where $|\S|\geq 3$. 
We prove the conclusion for $\S$.

Let $\S=\x_1\x_2\x_3\Z$ where the $\x_i$ are letters, and $\Z$ is a monomial of length
$\geq 0$. When any of $\x_2, \x_3\succ_M\1\bar{\1}$, by
induction hypothesis, the conclusion can be easily verified.
So we only consider the case where both $\x_2, \x_3\preceq_M\1\bar{\1}$.

When $\x_1\succ_M \1\bar{\1}$, by $\overline{\bf BG}\3$,
$
\x_1\x_2\x_3=
-\overline{\x_2}\,\overline{\x_1}\x_3+\x_3[\x_1\x_2].
$
The rest is by induction hypothesis and $\overline{\bf BG}$.
When $\x_1\in \{\1,\bar{\1}\}$, the proof is also trivial by
induction hypothesis and reduction $\bar{\1}\1=\1\bar{\1}$. When 
$\x_1\prec_M \1\bar{\1}$, by induction hypothesis and reduction $[\1]\x_1=\x_1[\1]$, 
the conclusion is true.
\end{proof}

The following two propositions further extend the above commutativity.

\bp
\label{special:f8}
For any $r,s>0$, any $\2\in \cal Q$, 
and any $\x\in \cal A$,
$f'=\x\2^r\bar{\2}^r-\2^r\bar{\2}^r\x$ and 
$f''=\x[\2^s]-[\2^s]\x$ can both be reduced to zero by {\bf BG}.
\ep

\begin{proof} 
When $\x\in \{\2, \bar{\2}\}$, the conclusion is trivial. So we assume 
$\x\notin \{\2, \bar{\2}\}$.
When $r=1$, $f'$ can be easily reduced to zero by $\overline{\bf BG}\3$.
The induction hypothesis ({\sf IH}) is that
the conclusion holds for $r$ taking values $<r$ in $f'$. For value $r$, 
when $\x\prec_M \2$,
\be\ba{lcl}
\2^r\bar{\2}^{r-2}\underbrace{\bar{\2}\bar{\2}\x} &\overset{\overline{\bf BG}\3}{=}& 
\2^r\bar{\2}^{r-2}(-\underbrace{\bar{\2}\bar{\x}}\2
+\bar{\2}\x\bar{\2}+\bar{\x}\2\bar{\2}) \\

&\overset{\overline{\bf BG}\2}{=}& 
\2(\2^{r-1}\bar{\2}^{r-1}\x)[\2]
-\2^2(\2^{r-2}\bar{\2}^{r-2}\x)\2\bar{\2}
 \\

&\overset{{\sf IH},\overline{\bf BG}\2}{=}& 
\2\x\2^{r-1}\bar{\2}^{r-1}[\2]
-\underbrace{\2\2\x}\2^{r-1}\bar{\2}^{r-1}
 \\

&\overset{\overline{\bf BG}\3, \overline{\bf BG}\2}{=}& 
\underbrace{\2[\x]}\2^{r-1}\bar{\2}^{r}-\bar{\x}\2^{r}\bar{\2}^{r} \\

&\overset{\overline{\bf BG}\2}{=}& \x\2^{r}\bar{\2}^{r}.
\ea\ee
When $\x\succ_M \bar{\2}$, the deduction of $\x\2^r\bar{\2}^r$ is similar.

As to $f''$, when $s=1$, $f''$ can be easily reduced to zero 
by $\overline{\bf BG}\2$. 
Let the induction hypothesis be that
the conclusion holds for $s$ taking values $<s$ in $f''$. For value $s$, 
by 
\be
\2^s+\bar{\2}^s\ \overset{\overline{\bf BG}\2}{=}\ (\2+\bar{\2})(\2^{s-1}+\bar{\2}^{s-1})
-\2\bar{\2}(\2^{s-2}+\bar{\2}^{s-2})
\ee
and the induction hypothesis, 
$f''$ can be easily reduced to zero.
\end{proof}

\bp
\label{special:f9}
For any $0\leq r<s$, any letter $\2\in \cal Q$, 
and any letter $\x\in \cal A$,
$f=\x[\2^r\bar{\2}^s]-[\2^r\bar{\2}^s]\x$ can be reduced to zero by {\bf BG}.
\ep

\begin{proof} 
When $\x\in \{\2,\bar{\2}\}$, the conclusion is trivial.
In the following, we assume $\x\notin\{\2,\bar{\2}\}$.

When $\x\prec \2$, the leading term of $f$ is $-\2^r\bar{\2}^s\x$. 
The case $r=0$ is already covered by Proposition \ref{special:f8}. 
Fix $r\geq 1$. The induction hypothesis ({\sf IH}) is that 
for every $0\leq r'<r$, every $s'>0$, 
$\x[\2^{r'}\bar{\2}^{s'}]-[\2^{r'}\bar{\2}^{s'}]\x$ is 
reduced to zero by {\bf BG}.
We prove the conclusion for $r'=r$ and all $s'>r$.  

When $s'=r+1$, 
\be\ba{lcl}
{[} \2^r\bar{\2}^{r+1}]\x &=& 
\2^r\bar{\2}^{r-1}\underbrace{\bar{\2}\bar{\2}\x}+\2\underbrace{\2^{r}\bar{\2}^r\x} \\

&\overset{\overline{\bf BG}\3, \, {\rm Prop.}\, \ref{special:f8}}{=}& 
\underbrace{\2^r\bar{\2}^{r}\x}\bar{\2}+
\2\underbrace{\2^{r-1}\bar{\2}^{r-1}\bar{\x}}\2\bar{\2}-
\underbrace{\2^r\bar{\2}^{r}\bar{\x}}\2
+\2\x\2^{r}\bar{\2}^r
 \\
 
&\overset{{\rm Prop.}\, \ref{special:f8},\,\overline{\bf BG}\2}{=}& 
\x\2^r\bar{\2}^{r+1}
+\underbrace{\2[\x]}\2^{r}\bar{\2}^{r}
-\bar{\x}\2^{r+1}\bar{\2}^{r}
 \\

&\overset{\overline{\bf BG}\2}{=}& 
\x[\2^r\bar{\2}^{r+1}].
\ea\ee

For general $s\geq r+2$, the induction hypothesis is that the conclusion holds for all values
of $s'$ from $r+1$ up to $s-1$. For $s'=s$, the leading term of ${[} \2^r\bar{\2}^{s}]\x$ 
is $\2^r\bar{\2}^{s}\x$, so
\be\ba{lcl}
{[} \2^r\bar{\2}^{s}]\x &=& 
\2^r\bar{\2}^{s-2}\underbrace{\bar{\2}\bar{\2}\x}+\2^{s-r}\underbrace{\2^{r}\bar{\2}^r\x} \\

&\overset{\overline{\bf BG}\3, \,{\rm Prop.}\, \ref{special:f8}}{=}& 
\2^r\bar{\2}^{s-1}\x\bar{\2}
+\underbrace{\2^r\bar{\2}^{s-2}\bar{\x}}\2\bar{\2}
-\underbrace{\2^r\bar{\2}^{s-1}\bar{\x}}\2
+\2^{s-r-2}\underbrace{\2\2\x}\2^{r}\bar{\2}^r
 \\

&\overset{\overline{\bf BG}\2, \overline{\bf BG}\3}{=}& 
\underbrace{\2^r\bar{\2}^{s-1}\x}[\2]
-\underbrace{\2^{r}\bar{\2}^{s-2}\x}\2\bar{\2}
+\2^{s-r-2}\underbrace{(\2\x\2+\bar{\x}\bar{\2}\2-\2\bar{\x}\bar{\2})\2^{r}\bar{\2}^r}
 \\

&\overset{{\sf IH},\,\overline{\bf BG}\2, 
(\2^{s-2}\bar{\2}^{r}\2):{\rm Prop.}\, \ref{special:f8}}{=}& 
\x[\2^r\bar{\2}^{s-1}][\2]-\2^{s-1}\bar{\2}^r\x[\2]
-\x[\2^{r+1}\bar{\2}^{s-1}]+\2^{s-2}\bar{\2}^r\x\2\bar{\2}\\ 

&& \hfill
+\2^{s-1}\bar{\2}^r\x\2
+\2^{s-2}\bar{\2}^r\bar{\x}\2\bar{\2}
-\2^{s-1}\bar{\2}^r\bar{\x}\bar{\2} \\[2mm]

&=&
\x[\2^r\bar{\2}^{s-1}][\2]-\underbrace{\2^{s-1}\bar{\2}^r[\x]}\bar{\2}
-\x[\2^{r+1}\bar{\2}^{s-1}]+\underbrace{\2^{s-2}\bar{\2}^r[\x]}\2\bar{\2}\\ 

&\overset{\overline{\bf BG}\2}{=}&
\x[\2^r\bar{\2}^s].
\ea\ee

When $\x\succ \bar{\2}$, the deduction of $\x[\2^r\bar{\2}^s]$ is similar. 
\end{proof}

The following is the main theorem of this section.

\bt
\label{ext:thm}
For any letter $\a$, any monic monomial $\T$,
$f=\a[\T]-[\T]\a$ can be reduced to zero by {\bf BG}.
\et

By the multiplication table
of basis letters, we assume that in $\T$, 
no two basis letters occur as neighbors. We also assume that
in $\T$, if a sequence of letters of
$\{\q,\bq\}$ for some $\q\in \cal Q$ occur as neighbors, the letters are
sorted as either $\q^i\bq^j$ when $\q\in \cal Q$, or 
$\bq^j\q^i$ when $\q\in \overline{\cal Q}$,
by commutativity $\bq\q=\q\bq$.

Now we start to prove the theorem. When $|\T|\leq 3$,
$f$ can be reduced to zero by $\overline{\bf BG}\2$ to
$\overline{\bf BG}\4$ directly. So we assume $|\T|\geq 4$.
Write $\T=\b\T'\c$, where letters $\b,\c\in {\cal A}$,
and $|\T'|\geq 2$. Let $\1,\2,\3$ be three arbitrary letters in $\cal A$ 
such that 
\be
\3\succ_M \2\bar{\2}\succ_M \1\bar{\1}. 
\ee
All together, $f=\a[\b\T\c]-[\b\T\c]\a$ where $|\T|\geq 2$, has the following 
different types:
\be\ba{lll}
f_1 &=& \1[\1\T\1]-[\1\T\1]\1, \\
f_2 &=& \1[\1\T\bar{\1}]-[\1\T\bar{\1}]\1, \\
f_3 &=& \1[\bar{\1}\T\1]-[\bar{\1}\T\1]\1, \\

f_4 &=& \1[\2\T\1]-[\2\T\1]\1, \\
f_5 &=& \1[\2\T\bar{\1}]-[\2\T\bar{\1}]\1, \\
f_6 &=& \2[\1\T\1]-[\1\T\1]\2, \\
f_7 &=& \2[\1\T\bar{\1}]-[\1\T\bar{\1}]\2, \\

f_8 &=& \1[\2\T\2]-[\2\T\2]\1, \\
f_9 &=& \1[\2\T\bar{\2}]-[\2\T\bar{\2}]\1, \\
f_{10} &=& \2[\2\T\1]-[\2\T\1]\2, \\
f_{11} &=& \2[\bar{\2}\T\1]-[\bar{\2}\T\1]\2, \\

f_{12} &=& \1[\3\T\2]-[\3\T\2]\1, \\
f_{13} &=& \2[\3\T\1]-[\3\T\1]\2,\\
f_{14} &=& \3[\2\T\1]-[\2\T\1]\3,
\ea
\label{list:f1to13}
\ee
where if $\1, \bar{\1}$ occur simultaneously in some $f_i$, 
then $\1\notin \cal E$ is assumed; same for
$\2, \bar{\2}$.

Fix $\1,\2,\3$ and $\T$. Fix the following induction hypothesis ``$\sf IH$" :
\begin{center}
Assume that for all sequences $\T'$ where
$|\T'|<|\T|$, and for all letters $\a',\b',\c'\in {\cal A}$,
$\a'[\b'\T'\c']-[\b'\T'\c']\a'$ can be reduced to zero by 
$\bf BG$. 
\end{center}
We prove that the polynomials in list (\ref{list:f1to13})
can all be reduced to zero by $\bf BG$. 
It turns out that 
even for the most degenerate cases
$f_1$ to $f_3$, the reduction is non-trivial. 

\bp
\label{prop:f1}
$f_1=\1[\1\T\1]-[\1\T\1]\1$ can be 
reduced to zero by $\bf BG$.
\ep

\begin{proof}
If all the letters of $\T$ are in $\{\1,\bar{\1}\}$, the reduction can be done by
element $\1\bar{\1}-\bar{\1}\1$ of $\bf BG2$ directly. So we assume that
there are letters of $\T$ that $\notin \{\1,\bar{\1}\}$.

Case 1. $\1\in \cal E$. 
Then $\1^2=-1$, and $[\1\T\1]=\1[\T]\1$. So
\be
f_1=\1^2[\T]\1-\1[\T]\1^2
\overset{\bf BG2}{=} -[\T]\1+\1[\T]
\overset{\sf IH}{=} 0.
\ee

Case 2. $\1\prec \bar{\1}$. 
The leading term of $f_1$ is
$-\bar{\1}\bar{\T}\bar{\1}\1$.
\be\ba{lcl}
f_1 &=& \1^2\T\1+\1\bar{\1}\bar{\T}\bar{\1}-\1\T\1^2-\underbrace{\bar{\1}\bar{\T}\bar{\1}\1} \\

&\overset{{\bf BG2},\, {\rm Prop.}\, \ref{prop:11}}{=}& \1
(\underbrace{\1\T\1}
+\underbrace{\bar{\1}\bar{\T}\bar{\1}}-\underbrace{\T\1^2}-\bar{\1}^2\bar{\T}) \\

&\overset{\1{\cal I}}{=}& \1
([\1]\T\1-\bar{\1}\T\1+\underbrace{\bar{\1}\bar{\T}[\1]}-\bar{\1}\bar{\T}\1
-\underbrace{\T\1[\1]}+\underbrace{\T\1\bar{\1}}-\bar{\1}^2\bar{\T}) \\

&\overset{\1{\cal I}}{=}& \1
([\1][\T\1]-\underbrace{\bar{\1}[\T]\1}
-[\1]\T\1+\1\bar{\1}\T-\bar{\1}^2\bar{\T}) \\

&\overset{\1{\cal I}}{=}& 0.
\ea
\ee

Case 3. $\1\succ \bar{\1}$. Let $\T=\1^r\y\Z$, where letter 
$\y\neq \1$, $r\geq 0$, and
$\Z$ is a subsequence that may be empty. 
The leading term of $f_1$ is one of 
$\1^{r+2}\y\Z\1, -\1^{r+1}\y\Z\1^2$.

Subcase 3.1. $\y\succ \1$. The leading monomial is
\be\ba{lcl}
\1^{r+1}\underbrace{\y\Z\1\1} &\overset{\sf IH}{=}& 
-\1^{r+1}\bar{\1}\bar{\Z}\bar{\y}\1+\1^{r+2}\y\Z\1
+\1^{r+2}\bar{\1}\bar{\Z}\bar{\y} \\

&=& \1^2\T\1+\bar{\1}\1^{r+1}(\1\bar{\Z}\bar{\y}-\bar{\Z}\bar{\y}\1).
\ea
\label{f1:31}
\ee
So
\be\ba{lcl}
f_1 &\overset{(\ref{f1:31})}{=}& \bar{\1}\{
\1\bar{\Z}\bar{\y}\bar{\1}^{r+1}
-\1^{r+1}(\underbrace{\1\bar{\Z}\bar{\y}-\bar{\Z}\bar{\y}\1})
-\bar{\Z}\bar{\y}\bar{\1}^{r+1}\1\} \\

&\overset{\bar{\1}{\cal I}}{=}& \bar{\1}\{
\1[\1^{r+1}\y\Z]-[\1^{r+1}\y\Z]\1
\}\\

&\overset{\sf IH}{=}& 0,
\ea
\ee
where in the second line, 
$\1\bar{\Z}\bar{\y}-\bar{\Z}\bar{\y}\1=-\1\y\Z+\y\Z\1$ is used.

Subcase 3.2. $\y\prec \1$. Then $\y\preceq \bar{\1}$.
The leading monomial of $f_1$ is
\be\ba{lcl}
\1^{r+1}\underbrace{\1\y}\Z\1 &\overset{\bf BG2}{=}& 
-\underbrace{\1^{r+1}\bar{\1}}\y\Z\1+\1^{r+1}\y[\1]\Z\1 \\

&\overset{{\bf BG2}, \1^{r+1}\y{\cal I}}{=}& 
-\bar{\1}\1^{r+1}\y\Z\1+\1^{r+1}\y\Z[\1]\1 \\

&=& -\bar{\1}\1\T\1+\1\T[\1]\1.
\ea
\label{f1:32}
\ee
So 
\be\ba{lcl}
f_1 &\overset{(\ref{f1:32})}{=}& 
(-\bar{\1}\1\T\1+\underbrace{\1\bar{\1}}\bar{\T}\bar{\1})
+(\1\T[\1]\1-\1\T\1^2)-\bar{\1}\bar{\T}\bar{\1}\1
\\

&\overset{\bf BG2}{=}& -\bar{\1}\1[\T]\1
+\bar{\1}\1\bar{\T}[\1]+\underbrace{\1\T\bar{\1}\1}
-\bar{\1}\bar{\T}\bar{\1}\1\\

&\overset{{\rm Prop}. \ref{prop:11}}{=}& \bar{\1}\{
-\1[\T]\1+\1\bar{\T}[\1]+\1^2\T-\bar{\T}\bar{\1}\1\}\\

&\overset{\bar{\1}{\cal I}}{=}& 0.
\ea\ee

\end{proof}

\bp
\label{prop:f2}
$f_{2}=\1[\1\T\bar{\1}]-[\1\T\bar{\1}]\1$
can be reduced to zero by {\bf BG}.
\ep

\begin{proof}
When $\T$ is composed of letters $\1,\bar{\1}$, $f_{2}$ can
be reduced to zero by $\bar{\1}\1-\1\bar{\1}$ in
$\overline{\bf BG}\2$. So we assume that $\T$ contains
letters $\notin \{\1, \bar{\1}\}$. 
Notice that $[\1\T\bar{\1}]=\1[\T]\bar{\1}$.
Let $[\T]=\T_1+\T_2$ be the two terms of the bracket.
There are three cases.

Case 1. $\T_i\bar{\1}\succeq \bar{\1}\T_i$ for $i=1,2$. Let 
$\T_i=\bar{\1}^{r_i}\x_i\Y_i$ for $i=1,2$, where $r_i\geq 0$, letter
$\x_i\neq \bar{\1}$, and $\Y_i$ is a submonomial, then $\x_i\succ \bar{\1}$. 
When $\1\succ \bar{\1}$, 
\be
f_2 \overset{\sf IH}{=} 
\underbrace{\1\1\bar{\1}}[\T]-\underbrace{\1\bar{\1}}[\T]\1
\overset{\overline{\bf BG}\2}{=}  \bar{\1}\1(\1[\T]-[\T]\1)
\overset{\bar{\1}\1{\cal I}}{=}  0.
\label{f2:case1.1}
\ee
When $\1\prec \bar{\1}$, then $f_2$ has some
$\1\T_i\bar{\1}\1=\1\bar{\1}^{r_i}\x_i\Y_i\bar{\1}\1$ as its
leading term, say $i=1$. Then 
$\1\bar{\1}(\T_1+\T_2)\1$ has leading term 
$\1\bar{\1}\T_1\1=\1\bar{\1}^{r_1+1}\x_1\Y_1\1\prec \1\T_1\bar{\1}\1$, so
\be
f_2 \overset{\sf IH}{=} 
\1\1\bar{\1}[\T]-\1\bar{\1}\underbrace{[\T]\1}
\overset{(\1\bar{\1}^{r_1+1}):{\sf IH}}{=}
\1(\1\bar{\1}-\bar{\1}\1)[\T]
\overset{\overline{\bf BG}\2}{=}  0.
\label{f2:case1.2}
\ee

Case 2. $\1\T_i\succeq \T_i\1$ for $i=1,2$.  Let 
$\T_i=\1^{r_i}\x_i\Y_i$ for $i=1,2$, where $r_i\geq 0$, letter
$\x_i\neq \1$, and $\Y_i$ is a submonomial, then $\x_i\prec \1$. 
Because $\1^2\T_i=\1^{r_i+2}\x_i\Y_i\succ \1^{r_i+1}\x_i\Y_i\bar{\1}=\1\T_i\bar{\1}$,
$f_2$ must have some $\1^2\T_i\bar{\1}$ as
its leading term, say $i=1$. Then 
$\1[\T]\preceq \1\T_1=\1^{r_1+1}\x_1\Y_1$, whose first 
$r_1+2$ letters $\prec \1^{r_1+2}=$ first $r_1+2$ letters of $\1^2\T_1\bar{\1}$.
So
\be
f_2 \overset{\sf IH}{=} \1[\T]\underbrace{\1\bar{\1}}-[\T]\1\bar{\1}\1
\overset{(\1^{r_1+1}\x_1):\overline{\bf BG}\2}{=}  (\1[\T]-[\T]\1)\bar{\1}\1
\overset{\sf IH}{=} 0.
\ee

Case 3. By symmetry, assume
$\T_1\bar{\1}\succeq \bar{\1}\T_1$ but $\T_2\bar{\1}\prec \bar{\1}\T_2$.
Then $\T_1=\bar{\1}^r\x\Y$, where $r\geq 0$, letter $\x\succ \bar{\1}$, and $\Y$ is a submonomial.
Similarly, $\T_2=\bar{\1}^s\y\Z$, where $s\geq 0$, letter $\y\prec \bar{\1}$, and 
$\Z$ is a submonomial.
By 
$
\T_1\bar{\1}\succeq \bar{\1}\T_1\succeq \bar{\1}\T_2\succ 
\T_2\bar{\1}, 
$ we get that
$\T_1\bar{\1}$ is the leading term of $[\T]\bar{\1}-\bar{\1}[\T]$.
By induction hypothesis, $\1[\T]\bar{\1}=\1\bar{\1}[\T]$. Then the reduction of
$f_2$ is the same as in (\ref{f2:case1.1}) and (\ref{f2:case1.2}).
\end{proof}

\bp
\label{prop:f3}
$f_{3}=\1[\bar{\1}\T\1]-[\bar{\1}\T\1]\1$
can be reduced to zero by {\bf BG}.
\ep

\begin{proof}
When $\T$ is composed of letters $\1,\bar{\1}$, $f_{3}$ can
be reduced to zero by $\bar{\1}\1-\1\bar{\1}$ in
$\overline{\bf BG}\2$. So we assume that $\T$ contains
letters $\notin \{\1, \bar{\1}\}$. 
Notice that $[\bar{\1}\T\1]=\bar{\1}[\T]\1$. So
$
f_{3}=\1\bar{\1}[\T]\1-\bar{\1}[\T]\1^2.
$
When $\1\succ \bar{\1}$, then
\be
f_{3} \overset{\overline{\bf BG}\2}{=} \bar{\1}(\1[\T]-[\T]\1)\1
\overset{\bar{\1}{\cal I}}{=} 0.
\ee

In the following, we assume $\1\prec \bar{\1}$.
Let $[\T]=\T_1+\T_2$ be the two terms of the bracket.
There are three cases.

Case 1. $\T_i\1\succeq \1\T_i$ for $i=1,2$. 
\be
f_3 \overset{\sf IH}{=} 
\1\bar{\1}\1[\T]-\underbrace{\bar{\1}\1}[\T]\1
\overset{\overline{\bf BG}\2}{=}  \1\bar{\1}(\1[\T]-[\T]\1)
\overset{\1\bar{\1}{\cal I}}{=}  0.
\label{f3:case1.2}
\ee

Case 2. $\bar{\1}\T_i\succeq \T_i\bar{\1}$ for $i=1,2$.  
\be
f_3 \overset{\sf IH}{=} 
\1[\T]\underbrace{\bar{\1}\1}-[\T]\underbrace{\bar{\1}\1^2}
\overset{\overline{\bf BG}\2}{=} 
(\1[\T]-[\T]\1)\1\bar{\1}
\overset{{\sf IH}}{=} 0.
\label{f3:case2.2}
\ee

Case 3. By symmetry, assume
$\T_1\1\succeq \1\T_1$ but $\T_2\1\prec \1\T_2$.
Then $\T_1=\1^r\x\Y$, where $r\geq 0$, letter $\x\succ \1$, and $\Y$ is a submonomial.
Similarly, $\T_2=\1^s\y\Z$, where $s\geq 0$, letter $\y\prec \1$, and 
$\Z$ is a submonomial. So $\T_1\succ \T_2$ is the leading term of $[\T]$, and
by $\T_1\1\succeq \1\T_1\succ \1\T_2\succ \T_2\1$, the leading term of $[\T]\1-\1[\T]$
is $\T_1\1$.
By induction hypothesis, $\bar{\1}[\T]\1=\bar{\1}\1[\T]$. Then the reduction of
$f_3$ is the same as in (\ref{f3:case1.2}).
\end{proof}

The reduction of $f_4, f_5$ to zero by {\bf BG} will be given as a corollary of
the reduction of $f_{13}$ to zero by {\bf BG} in Proposition \ref{lem:f10,f11}
later in this section.

\bp
\label{prop:f6}
$f_6=\2[\1\T\1]-[\1\T\1]\2$ can be reduced to zero by {\bf BG}.
\ep

\begin{proof}
If $\1\in \cal E$, then the leftmost letter and the 
rightmost letter of $\T$ are not in $\cal E$, so they
have lower order than $\1$. Then
\be
[\1\T\1]=\1[\T]\1\overset{\sf IH}{=}[\T]\1\1=-[\T].
\ee
Again by {\sf IH}, $\2[\T]-[\T]\2$ is reduced to zero. 
So we assume $\1\notin {\cal E}$ below.

By symmetry, let $\1\prec \bar{\1}$.
\be
\ba{lcl}
f_6 &{=}&\underbrace{\2\bar{\1}}\bar{\T}\bar{\1}+\2\1\T\1-[\1\T\1]\2\\

&\overset{\overline{\bf BG}\2}{=}& \2\1(-\bar{\T}\bar{\1}+\T\1)
+[\1]\2\bar{\T}\bar{\1}-[\1\T\1]\2\\

&\overset{\2\1{\cal I}}{=} &\underbrace{\2\1(-\bar{\1}[\T]+[\1]\T)}
+[\1]\2\bar{\T}\bar{\1}-[\1\T\1]\2\\

&\overset{\overline{\bf BG}\2, \overline{\bf BG}\3}{=}&
-\1\bar{\1}\underbrace{\2[\T]}+[\1]\underbrace{\2[\1\T]}-[\1\T\1]\2 \\

&\overset{\1\bar{\1}{\cal I}, [\1]{\cal I}}{=}&
(-\1\bar{\1}[\T]+[\1][\1\T]-[\1\T\1])\2 \\

&\overset{\overline{\bf BG}\2}{=}&
\1([\1\T]-[\T\1])\2 \\

&\overset{\1{\cal I}}{=}& 0.
\ea\ee 
\end{proof}

\bp
\label{prop:f7}
$f_7=\2[\1\T\bar{\1}]-[\1\T\bar{\1}]\2$ can be reduced to zero by {\bf BG}.
\ep

\begin{proof}
If $\T$ is composed of letters $\1, \bar{\1}$, this special case is already covered by 
Proposition \ref{special:f9}. So we assume that $\T$ contains letters $\notin \{\1, \bar{\1}\}$.
Notice that $[\1\T\bar{\1}]=\1[\T]\bar{\1}$.

When $\1\succ \bar{\1}$, 
\be
\ba{lcl}
f_7 &\overset{\overline{\bf BG}\2}{=}&  
-\2\bar{\1}\underbrace{[\T]\bar{\1}}
+[\1]\2\underbrace{[\T]\bar{\1}}-\1\underbrace{[\T]\bar{\1}\2}\\

&\overset{\2\bar{\1}{\cal I}, [\1]{\cal I}, \1{\cal I}}{=}& 
\underbrace{\2\bar{\1}\1}[\T]-\underbrace{\2\bar{\1}[\1]}[\T]
+[\1]\2\bar{\1}[\T]-\bar{\1}\1\2[\T]\\

&\overset{\overline{\bf BG}\2, \overline{\bf BG}\3}{=}& 0.
\ea\ee 

When $\1\prec \bar{\1}$, let $[\T]=\T_1+\T_2$ be the two terms of
$[\T]$. 

Case 1. $\T_i\bar{\1}\succeq \bar{\1}\T_i$ for $i=1,2$. 
Then $\1[\T]\bar{\1}$ is reduced to
$\1\bar{\1}[\T]$ by the induction hypothesis, so 
\be
f_7 \overset{\sf IH}{=}\underbrace{\2\1\bar{\1}}[\T]-\1\bar{\1}[\T]\2 
\overset{\overline{\bf BG}\3}{=}  \1\bar{\1}(\2[\T]-[\T]\2) 
\overset{\1\bar{\1}{\cal I}}{=}  0.
\ee

Case 2. $\1\T_i\succeq \T_i\1$ for $i=1,2$.
Then $\T_i=\1^{r_i}\x_i\Y_i$ where $r_i\geq 0$, 
letter $\x_i\prec \1\prec \2$, and $\Y_i$ is a submonomial. 
By induction hypothesis, $\2[\T]=[\T]\2$. So
\be
f_7 \overset{\sf IH}{=}\underbrace{\2[\T]}\1\bar{\1}-[\T]\1\bar{\1}\2 
\overset{\sf IH}{=}  [\T]\underbrace{\2\1\bar{\1}}-[\T]\1\bar{\1}\2 
\overset{\overline{\bf BG}\3}{=} 0.
\ee

Case 3. 
By symmetry, let 
$\T_1\bar{\1}\succeq \bar{\1}\T_1$ but $\T_2\bar{\1}\prec \bar{\1}\T_2$.
Then $\T_i=\bar{\1}^{r_i}\x_i\Y_i$ for $i=1,2$, where $r_i\geq 0$,
$\Y_i$ is a submonomial, and letter $\x_1\succ \bar{\1}$, letter 
$\x_2\prec \bar{\1}$. So $\T_1\succ \T_2$ is the leading term of $[\T]$. 
By $\T_1\bar{\1}\succeq \bar{\1}\T_1\succ \bar{\1}\T_2\succ \T_2\bar{\1}$,
in $[\T]\bar{\1}-\bar{\1}[\T]$, the leading term is $\T_1\bar{\1}$. So 
by induction hypothesis {\sf IH}, 
$f_7=\2\1\bar{\1}[\T]-\1\bar{\1}[\T]\2$. The rest is the same as in Case 1.
\end{proof}

\bp
$f_{8}=\1[\2\T\2]-[\2\T\2]\1$ can be reduced to zero by {\bf BG}.
\ep

\begin{proof} When $\T$ is composed of letter $\2,\bar{\2}$,
Proposition \ref{special:f8} has covered such special cases. 
We assume that $\T$ contains letters $\notin\{\2,\bar{\2}\}$ below.

If $\2\in \cal E$, then the leftmost letter and rightmost letter of $\T$ 
are not in $\cal E$, so they
have lower order than $\2$. Then
\be
[\2\T\2]=\2[\T]\2\overset{\sf IH}{=}[\T]\2\2=-[\T],
\ee
again by {\sf IH}, $\1[\T]-[\T]\1$ is reduced to zero. 
So we further assume $\2\notin {\cal E}$.

By symmetry, assume
$\2\prec \bar{\2}$, and $\2\T\2=\2^r\bar{\2}^s\U\x\2^t$, where $s, |\U|\geq 0$, and $r,t>0$,
and letter $\x\notin\{\2,\bar{\2}\}$. The leading term of 
$f_8$ is $-\bar{\2}\bar{\T}\bar{\2}\1=-\bar{\2}^{t}\bar{\x}\bar{\U}\2^s\bar{\2}^{r}\1$.

Case 1. $\x\prec\2$.

When $r\geq t$, 
\be\ba{lcl}
[\2\T\2]\1 
&=& \2^r\bar{\2}^s\U\x\2^t\1
+\underbrace{\bar{\2}^t\bar{\x}\bar{\U}\2^s\bar{\2}^r}\\

& \overset{{\rm Prop.}\,\ref{special:f8},\, \bar{\x}{\cal I}}{=} & 
\2^t(\2^{r-t}\bar{\2}^s\U\x\underbrace{\2^t\1}
-\bar{\x}\bar{\U}\2^s\bar{\2}^{r-t}\bar{\2}^{t}\1)+\bar{\x}\bar{\U}\2^s\bar{\2}^r\1[\2^t] \\

& \overset{\2^t{\cal I}}{=} 
& \underbrace{\2^r\bar{\2}^s\U\x\1}[\2^t]
-\underbrace{\2^t[\2^{r-t}\bar{\2}^s\U\x]\bar{\2}^{t}\1}
+\bar{\x}\bar{\U}\2^s\bar{\2}^r\1[\2^t] \\

& \overset{{\sf IH}, \2^t{\cal I}, {\rm Prop.}\, \ref{special:f8},
\1{\cal I}}{=} & 
\1[\2^r\bar{\2}^s\U\x][\2^t]-\1[\2^{r}\bar{\2}^s\bar{\2}^{t}\U\x] \\

& \overset{\1{\cal I}}{=} & 
\1[\2^{r+t}\bar{\2}^s\U\x],
\ea
\ee
where in the next to the last step, reduction $[\2^r\bar{\2}^s\U\x]\1=\1[\2^r\bar{\2}^s\U\x]$ 
is by the induction hypothesis
{\sf IH}, and 
\be
\2^t[\2^{r-t}\bar{\2}^s\U\x]\bar{\2}^{t}\1=\2^t\bar{\2}^{t}\1[\2^{r-t}\bar{\2}^s\U\x]
=\1\2^t\bar{\2}^{t}[\2^{r-t}\bar{\2}^s\U\x]
=\1[\2^{r}\bar{\2}^s\bar{\2}^{t}\U\x]
\ee
is a combination of applying Proposition \ref{special:f8} and $\1{\cal I}$.

When $r\leq t$, the reduction is similar.

Case 2. $\x\succ\bar{\2}$. When $r\geq t$, 
\be\ba{lcl}
{[}\2\T\2]\1 &=& \2^r\bar{\2}^s\U\x\2^t\1
+\bar{\2}^t\underbrace{\bar{\x}\bar{\U}\2^s\bar{\2}^r\1}\\

& \overset{{\sf IH}, \overline{\bf BG}\2}{=} & 
\2^t\underbrace{\2^{r-t}\bar{\2}^s\U\x\2^t\1}
+\underbrace{\bar{\2}^t\1}[\bar{\x}\bar{\U}\2^s\bar{\2}^r]
-\2^r\bar{\2}^t\bar{\2}^s\U\x\1\\

&\overset{\2^t{\cal I}, {\rm Prop.}\, \ref{special:f8}}{=}& 
\2^t(\1[\2^{r-t}\bar{\2}^s\U\x\2^t]-\bar{\2}^t\bar{\x}\bar{\U}\2^s\bar{\2}^{r-t}\1)
+\1[\2^t][\2^r\bar{\2}^s\U\x]\\

&& \hfill
-\2^t\1[\2^r\bar{\2}^s\U\x]
-\2^r\bar{\2}^t\bar{\2}^s\U\x\1\\

&\overset{\2^t{\cal I}, {\rm Prop.}\, \ref{special:f8}}{=}& 
-\1\2^t\bar{\2}^t[\2^{r-t}\bar{\2}^s\U\x]+\1[\2^t][\2^r\bar{\2}^s\U\x]\\

&\overset{\1{\cal I}}{=}& \1[\2^{r+t}\bar{\2}^s\U\x],
\ea
\ee
where in the next to the last step, 
$\2^t\1[\2^{r-t}\bar{\2}^s\U\x\2^t]=\2^t\1[\2^r\bar{\2}^s\U\x]$, and
\be
-\2^t\bar{\2}^t\bar{\x}\bar{\U}\2^s\bar{\2}^{r-t}\1
-\2^t\2^{r-t}\bar{\2}^t\bar{\2}^s\U\x\1
=-\2^t\bar{\2}^t[\2^{r-t}\bar{\2}^s\U\x]\1
=-\1\2^t\bar{\2}^t[\2^{r-t}\bar{\2}^s\U\x]
\ee
is a combination of $\2^{r-t}\bar{\2}^t=\bar{\2}^t\2^{r-t}$,
commutativity $[\2^{r-t}\bar{\2}^s\U\x]\1=\1[\2^{r-t}\bar{\2}^s\U\x]$ from 
induction hypothesis {\sf IH}, and
$\2^t\bar{\2}^t\1=\1\2^t\bar{\2}^t$ from Proposition \ref{special:f8}. 

When $r\leq t$, the reduction is similar.
\end{proof}

\bp
$f_{9}=\1[\2\T\bar{\2}]-[\2\T\bar{\2}]\1$
can be reduced to zero by {\bf BG}.
\ep

\begin{proof}
When $\T$ is composed of letters $\2,\bar{\2}$,
Propositions \ref{special:f8} and \ref{special:f9} have covered such special cases
of $f_9$. So we assume that $\T$ contains letters $\notin \{\2, \bar{\2}\}$. 
Notice that $[\2\T\bar{\2}]=\2[\T]\bar{\2}$.
Let $[\T]=\T_1+\T_2$ be the two terms of the bracket.
There are three cases.

Case 1. $\T_i\bar{\2}\succeq \bar{\2}\T_i$ for $i=1,2$. Then
the leading letter of $\T_i$ is $\succeq \bar{\2}\succ_M \1\bar{\1}$.
\be
f_9 \overset{\sf IH}{=}\1\2\bar{\2}[\T]-\2\bar{\2}\underbrace{[\T]\1} 
\overset{\sf IH}{=}  \1\2\bar{\2}[\T]-\underbrace{\2\bar{\2}\1}[\T] 
\overset{\overline{\bf BG}\3}{=}  0.
\label{f9:case1}
\ee

Case 2. $\2\T_i\succeq \T_i\2$ for $i=1,2$. Then
\be
f_9 \overset{\sf IH}{=} \1[\T]\2\bar{\2}-[\T]\underbrace{\2\bar{\2}\1}
\overset{\overline{\bf BG}\3}{=}  (\1[\T]-[\T]\1)\2\bar{\2}
\overset{\sf IH}{=} 0.
\ee

Case 3. By symmetry, assume
$\T_1\bar{\2}\succeq \bar{\2}\T_1$ but $\T_2\bar{\2}\prec \bar{\2}\T_2$.
Then for $i=1,2$, 
$\T_i=\bar{\2}^{r_i}\x_i\Y_i$, where $r_i\geq 0$, $\Y_i$ is a submonomial,
letter $\x_1\succ \bar{\2}$, and letter $\x_2\prec \bar{\2}$. So
$\T_1\succ \T_2$ is the leading term of $[\T]$. 
By $\T_1\bar{\2}\succeq \bar{\2}\T_1\succ \bar{\2}\T_2\succ \T_2\bar{\2}$, 
$\T_1\bar{\2}$ is the leading term of $[\T]\bar{\2}-\bar{\2}[\T]$.
By induction hypothesis, $\2[\T]\bar{\2}=\2\bar{\2}[\T]$. The rest
is the same as in (\ref{f9:case1}).
\end{proof}

\bp
$f_{10}=\2[\2\T\1]-[\2\T\1]\2$
can be reduced to zero by {\bf BG}.
\ep

\begin{proof} In
$f_{10} = \2\2\T\1+\2\bar{\1}\bar{\T}\bar{\2}
-\2\T\1\2-\bar{\1}\bar{\T}\bar{\2}\2$, 
when $\2\in \cal E$, then $\T$ has its leading letter $\prec \2$, so
\be
f_{10}=-\T\1-\2\bar{\1}\bar{\T}\2-\2\T\1\2-\bar{\1}\bar{\T}
=-\underbrace{\2[\T\1]}\2-[\T\1]
\overset{{\sf IH}, {\bf BG}\2}{=} 0.
\ee
In the following, we assume $\2\notin \cal E$.

When $\T=\2^s$, where $s\geq 2$, 
\be
\ba{lcl}
f_{10} &=& \2\underbrace{\2\2^s\1}
+\2\bar{\1}\bar{\2}^s\bar{\2}-\2\2^s\1\2
-\bar{\1}\bar{\2}^s\bar{\2}\2\\

&\overset{{\sf IH}}{=}&
-\underbrace{\2\2\bar{\1}}\bar{\2}^s
+\2\bar{\1}\underbrace{\bar{\2}^s[\2]}-\bar{\1}\underbrace{\bar{\2}^s\bar{\2}\2}\\

&\overset{\overline{\bf BG}\3,
\2\bar{\1}{\cal I},\bar{\1}{\cal I}}{=}&
(-[\2\bar{\1}]\2
+\2\1\bar{\2}
+\2\bar{\1}[\2]-\bar{\1}\2\bar{\2})\bar{\2}^s\\

&\overset{\overline{\bf BG}\2}{=} &
(\underbrace{\2[\1]}\bar{\2}-[\1]\2\bar{\2})\bar{\2}^s\\

&\overset{\overline{\bf BG}\2}{=}&0.
\ea\ee
In the following, we further assume that $\T$ is not a power of $\2$.

Case 1. $\T=\2^s\x\Y$, where $s\geq 0$, letter $\x\prec \2$, and $\Y$ is 
a submonomial. 
Then $\2\2\T\1\succ \2\T\1\2$ is the leading term of $f_{10}$. 
If $s=0$ and $\x\prec_M\2\bar{\2}$, then 
\be
\ba{lcl}
f_{10} &=& 
\underbrace{\2\2\x}\Y\1+\underbrace{\2\bar{\1}\bar{\Y}\bar{\x}}
\bar{\2}-\2\x\Y\1\2-\bar{\1}\bar{\Y}\bar{\x}\bar{\2}\2\\

&\overset{\overline{\bf BG}\3,(\2\bar{\1},\2\x):{\sf IH}}{=}&
-\underbrace{\2\bar{\x}}\bar{\2}\Y\1+[\2\x]\2\Y\1
-\2\x\Y\1\bar{\2}+[\x\Y\1]\2\bar{\2}
-\2\x\Y\1\2-\bar{\1}\bar{\Y}\bar{\x}\underbrace{\bar{\2}\2}\\

&\overset{\overline{\bf BG}\2, \bar{\1}{\cal I}}{=}& 
\2\x(\underbrace{\bar{\2}\Y\1+\2\Y\1-\Y\1\bar{\2}-\Y\1\2})-\x(\underbrace{\2\bar{\2}\Y\1-\Y\1\2\bar{\2}})\\

&\overset{\2\x{\cal I},\x{\cal I}}{=}&0.
\label{f10:case2}
\ea\ee
If either $s>0$, or $s=0$ and $\x=\bar{\2}\prec\2$, then
\be
\ba{lcl}
f_{10} &=& \2\underbrace{\2\2^s\x\Y\1}+\2\bar{\1}\bar{\Y}\bar{\x}\bar{\2}^s\bar{\2}
-\2\2^s\x\Y\1\2-\bar{\1}\bar{\Y}\bar{\x}\bar{\2}^s\bar{\2}\2\\

&\overset{{\sf IH}}{=}&
-\underbrace{\2\2\bar{\1}}\bar{\Y}\bar{\x}\bar{\2}^s+\2\bar{\1}\underbrace{\bar{\Y}\bar{\x}\bar{\2}^s[\bar{\2}]}
-\bar{\1}\underbrace{\bar{\Y}\bar{\x}\bar{\2}^s\bar{\2}\2}\\

&\overset{\overline{\bf BG}\3,\2\bar{\1}{\cal I},\bar{\1}{\cal I}}
{=}&
(
\2\1\bar{\2}-\2\bar{\1}\2-\1\bar{\2}\2
+\2\bar{\1}[\bar{\2}]-\bar{\1}\bar{\2}\2)\bar{\Y}\bar{\x}\bar{\2}^s\\[2mm]

&=& 
(\underbrace{\2[\1]}\bar{\2}-[\1]\underbrace{\bar{\2}\2})\bar{\Y}\bar{\x}\bar{\2}^s\\

&\overset{{\overline{\bf BG}\2},[\1]{\cal I}}{=}&0.
\ea\ee

Case 2. $\T=\2^s\x\Y$, where $s\geq 0$, letter $\x\succ \2$, and $\Y$ is
a submonomial. Then $\2\T\1\2\succ \2\2\T\1$ is the leading term of $f_{10}$.
\be
\ba{lcl}
f_{10} &=& \2\2\2^s\x\Y\1+\2\bar{\1}\bar{\Y}\bar{\x}\bar{\2}^s\bar{\2}
-\2\underbrace{\2^s\x\Y\1\2}-\bar{\1}\bar{\Y}\bar{\x}\bar{\2}^s\bar{\2}\2\\

&\overset{\sf IH}{=}& 
\2\2\2^s\x\Y\1+\2\bar{\1}\bar{\Y}\bar{\x}\bar{\2}^s\bar{\2}
-\2\2[\2^s\x\Y\1]
+\2\bar{\1}\bar{\Y}\bar{\x}\bar{\2}^s\2
-\bar{\1}\underbrace{\bar{\Y}\bar{\x}\bar{\2}^s\bar{\2}\2}\\

&\overset{\2\bar{\1}{\cal I},\bar{\1}{\cal I}}{=}&
(\2\bar{\1}[\bar{\2}]-\underbrace{\2\2\bar{\1}}-\bar{\1}\bar{\2}\2)\bar{\Y}\bar{\x}\bar{\2}^s\\

&\overset{\overline{\bf BG}\3}{=}& 
(\underbrace{\2[\1]}\bar{\2}-[\1]\underbrace{\bar{\2}\2})\bar{\Y}\bar{\x}\bar{\2}^s\\

&\overset{{\overline{\bf BG}\2},[\1]{\cal I}}{=}&0.
\ea\ee

\end{proof}

\bp
$f_{11}=\2[\bar{\2}\T\1]-[\bar{\2}\T\1]\2$ can be reduced to zero by {\bf BG}.
\ep

\begin{proof} When $\bar{\2}\prec \2$, $f_{11}$ is led by $\2\bar{\2}\T\1$. We have
\be
\ba{lcl}
f_{11} &=& \underbrace{\2\bar{\2}}\T\1
+\underbrace{\2\bar{\1}}\bar{\T}\2-\bar{\2}\T\1\2-\bar{\1}\bar{\T}\2^2\\

&\overset{\overline{\bf BG}\2}{=}&
\bar{\2}(\2\T\1-[\bar{\1}\bar{\T}]\2)
+\bar{\1}([\2]\bar{\T}\2-\bar{\T}\2^2)\\

&\overset{\bar{\2}{\cal I}, \bar{\1}{\cal I}}{=}& 
-\underbrace{\bar{\2}\2\bar{\1}}\bar{\T}+\bar{\1}\bar{\T}\bar{\2}\2\\

&\overset{\overline{\bf BG}\3}{=}& \bar{\1}(-\bar{\2}\2\bar{\T}+\bar{\T}\bar{\2}\2)\\

&\overset{\bar{\1}{\cal I}}{=}& 0.
\ea\ee

In the following, let $\bar{\2}\succ \2$. Then $f_{11}$ is led by $-\bar{\2}\T\1\2$. 
Let $\T=\x\Y$. When $\x\preceq \2$, then 
\be
\ba{lcl}
f_{11} &=& \underbrace{\2\bar{\2}\x}\Y\1
+\2\bar{\1}\bar{\Y}\bar{\x}\2-\underbrace{\bar{\2}\x}\Y\1\2
-\bar{\1}\bar{\Y}\bar{\x}\2^2\\

&\overset{\overline{\bf BG}\3,\overline{\bf BG}\2}{=}& 
\x(\2\bar{\2}\Y\1-[\2]\Y\1\2)
+\2[\x\Y\1]\2-\bar{\1}\bar{\Y}\bar{\x}\2^2\\

&\overset{\x{\cal I}, \2{\cal I}}{=}& 0.
\ea\ee
When $\x\succeq \bar{\2}$, then
\be
\ba{lcl}
f_{11} &=& \2\bar{\2}\x\Y\1
+\2\bar{\1}\bar{\Y}\bar{\x}\2
-\bar{\2}\underbrace{\x\Y\1\2}-\bar{\1}\bar{\Y}\bar{\x}\2^2\\

&\overset{\sf IH}{=}& \2\bar{\2}\x\Y\1
+\underbrace{[\2]\bar{\1}}\bar{\Y}\bar{\x}\2
-\underbrace{\bar{\2}\2}[\x\Y\1]
-\bar{\1}\bar{\Y}\bar{\x}\2^2\\

&\overset{\overline{\bf BG}\2, \bar{\1}{\cal I}}{=}& 
-\2\bar{\2}(\bar{\1}\bar{\Y}\bar{\x})+(\bar{\1}\bar{\Y}\bar{\x})\2\bar{\2} \\

&\overset{{\rm Prop.}\ \ref{prop:11}}{=}& 0.
\ea\ee
\end{proof}

In the following computations involving letters $\1,\2,\3$, 
those terms led by letters $\1,\2$ or their conjugates
are usually not displayed, and are collected in the set called ``lower".

\bp
\label{lem:f10,f11}
$f_{12}=\1[\3\T\2]-[\3\T\2]\1$ and 
$f_{13}=\2[\3\T\1]-[\3\T\1]\2$
can both be reduced to zero by {\bf BG}.
\ep

\begin{proof}
We first make reduction to $f_{13}$, and then show that the reduction procedure also
works for $f_{12}$. In $f_{13}$, the only term led by letter $\3$
or its conjugate is $-\3\T\1\2$. We first reduce this term to a polynomial
where each term is led by one of the letters in 
${\cal B}=\{\1, \bar{\1},
\2, \bar{\2}\}$, then use the bottom-letter-$\cal B$ controlled reduction 
technique to finish the proof. 

Let $\T=\X\y$, where $\y$ is the last letter of $\T$, and $|\X|>0$. 
When $\y\prec_M\3\bar{\3}$, 
\be\ba{lcl}
-\underbrace{\3\X\y\1}\2 &\overset{\sf IH}{=}& 
\bar{\y}\bar{\X}\underbrace{\bar{\3}\1\2}+\hbox{ lower}\\

&\overset{\overline{\bf BG}\3}{=}& 
\bar{\y}\bar{\X}(\2\bar{\3}\1+\2\bar{\1}\3-\bar{\1}\3\2)
+\hbox{ lower}\\[2mm]

&=& 
\bar{\y}\bar{\X}([\2\bar{\3}\1]-\bar{\1}\3[\2])
+\underbrace{\bar{\y}\bar{\X}\2\bar{\1}}\3
+\hbox{ lower}\\

&\overset{\bar{\y}{\cal I}, (\bar{\y},\bar{\2}):{\sf IH}}{=}& 
\bar{\y}[\2\bar{\3}\1]\bar{\X}
-\bar{\y}[\2]\bar{\X}\bar{\1}\3
+\bar{\1}[\bar{\y}\bar{\X}\2]\3
-\bar{\2}\X\y\bar{\1}\3
+\hbox{ lower}\\

&\overset{(\bar{\y}, [\2]):}{=}& 
[\2\bar{\3}\1]\bar{\y}\bar{\X}
-[\2]\bar{\y}\bar{\X}\bar{\1}\3
+\hbox{ lower}.
\ea
\label{lem:3:step1}
\ee

When $\y\succeq \3$ or $\bar{\3}$, then $\y\succ_M\2\bar{\2}$.
\be\ba{lcl}
-\3\X\underbrace{\y\1\2} &\overset{\bf BG3}{=}& 
-\3\X\2\y\1-\3\X\2\bar{\1}\bar{\y}+\3\X\bar{\1}\bar{\y}\2\\

&=& 
-\3\X[\2\y\1]
-\underbrace{\3\X\2\bar{\1}}\bar{\y}+\underbrace{\3\X\bar{\1}\bar{\y}[\2]}\\

&\overset{{\sf IH}, {\rm Prop.}\ \ref{prop:11}}{=}& 
-\3\X[\2\y\1]+\hbox{ lower}.
\ea
\label{lem:3:step2:1}
\ee

Below we prove that $f=-\3\X[\2\y\1]+[\2\y\1]\3\X$ can be reduced to zero by 
$\bf BG$, where $\y\succeq \3$ or $\bar{\3}$. When $|\X|=0$, then 
$f$ can be easily reduced to zero by $\overline{\bf BG}\bf 2$ to
$\overline{\bf BG}\bf 4$. So we assume $|\X|>0$.
Fix $\X$. Suppose that the conclusion holds for all 
monomials of length $<|\X|$ when $\X$ is replaced by the monomial. In monomial
$\X$, if there is any letter $\succ_M\2\bar{\2}$, say $\X=\U_1\z\U_2$,
where letter $\z\succ_M\2\bar{\2}$, and $\U_1, \U_2$ are submonomials
of length $<|\X|$, then applying the induction hypothesis
twice, we get 
\be
-\3\X[\2\y\1] = -\3\U_1\underbrace{\z\U_2[\2\y\1]} 
=-\underbrace{\3\U_1[\2\y\1]}\z\U_2
=-[\2\y\1]\3\U_1\z\U_2.
\label{3x2y1:0}
\ee 
If all letter of $\X$ are 
$\preceq \2$ or $\bar{\2}$, then
\be\ba{lcl}
-\3\X[\2\y\1] &=& 
-\underbrace{\3\X\2}\y\1-\underbrace{\3\X\bar{\1}}\bar{\y}\bar{\2} \\

&\overset{\sf IH}{=}& 
\bar{\X}\bar{\3}\2\y\1+\bar{\X}\bar{\3}\bar{\1}\bar{\y}\bar{\2} 
-\2\underbrace{[\3\X]\y\1}-\bar{\1}\underbrace{[\3\X]\bar{\y}\bar{\2}}\\

&\overset{\bar{\X}{\cal I}, \2{\cal I}, \bar{\1}{\cal I}}{=}& 
\underbrace{\bar{\X}[\2\y\1]}\bar{\3} 
-(\2\y\1+\bar{\1}\bar{\y}\bar{\2})[\3\X]
\\

&\overset{(\2,\bar{\2}):{\sf IH}}{=}& -[\2\y\1]\3\X.
\ea
\label{3x2y1:1}
\ee
This proves that the result of (\ref{lem:3:step2:1}) can be 
reduced to ``lower" by {\bf BG}.

The reduction results of (\ref{lem:3:step1}), (\ref{lem:3:step2:1}) are denoted by
$f', f''$ respectively. They are both polynomials where each term is led by
a letter in $\cal B$. Now in each polynomial,
replace $\X$ by $\4$, and replace $\y$ by $\5$. In the 
new variables with order $\1\prec\2\prec \3\prec \4\prec \5$, do 
reductions to $f', f''$ by the corresponding $\overline{\bf BG}$ in the new variables.
The reduction results become zero in 363 steps and 180 
steps respectively. 

Next consider $f_{12}$, which is obtained from $f_{13}$ by switching $\1,\2$.
Under this switch, the procedure (\ref{lem:3:step1}) is still a correct
reduction of $f_{12}$ when $\y\prec_M\3\bar{\3}$, so is (\ref{lem:3:step2:1})
when $\y\succeq \3$ or $\bar{\3}$. Since $[\2\y\1]=[\bar{\1}\bar{\y}\bar{\2}]$,
(\ref{3x2y1:0}) and (\ref{3x2y1:1}) naturally works for $f_{12}$. 
The final bottom-letter-${\cal B}$ controlled reduction remains a correct one 
under the switch of $\1, \2$ So 
the whole reduction procedure of $f_{13}$ works for $f_{12}$.
\end{proof}

\bc
\label{prop:f4&5}
$f_4 = \1[\2\T\1]-[\2\T\1]\1$ and $f_5 = \1[\2\T\bar{\1}]-[\2\T\bar{\1}]\1$ 
can both be reduced to zero by {\bf BG}.
\ec

\begin{proof} 
To reduce $f_4$, set the letter $\1$ outside bracket $[\2\T\1]$ to be new letter $\2$, and set
the old letter $\2$ to be new letter $\3$. Then $f_4$ is rewritten as $f_{13}$. 
We check the reduction procedure of $f_{13}$ in the proof of Proposition 
\ref{lem:f10,f11}
to verify that each step works for $f_4$ in old letters.

The reduction procedure (\ref{lem:3:step1}) remains a correct
reduction of $f_4$ when $\y\prec_M\2\bar{\2}$ and $\2=\1$ in old letters, 
so does (\ref{lem:3:step2:1})
when $\y\succeq \2$ or $\bar{\2}$ in old letters. In (\ref{3x2y1:0}), where 
letter $\z\succ_M\1\bar{\1}$ in old letters, the reduction by induction still works, so does
(\ref{3x2y1:1}) where all letters of $\X$ are 
$\preceq \1$ or $\bar{\1}$ in old letters. In the bottom-letter controlled reduction of
$f', f''$ from $f_{13}$, when new letters $\1,\2$ are replaced by the corresponding old ones
$\1,\1$ respectively, the reduction remains to be correct. So 
the whole reduction procedure of $f_{13}$ works for $f_4$.

Similarly, to reduce $f_5$, set the letter $\1$ outside bracket $[\2\T\bar{\1}]$
to be new letter $\2$, set the letter $\bar{\1}$ inside the bracket to be new $\1$,
and set the old letter $\2$ to be new letter $\3$. Then $f_5$ is rewritten as $f_{13}$. 
Checking the validity of the reduction procedure of $f_{13}$ in the proof of Proposition 
\ref{lem:f10,f11} for the reduction of $f_5$ in old letters is much the same as above.
\end{proof}

\bp \label{prop:f12}
$f_{14}=\3[\2\T\1]-[\2\T\1]\3$ can be reduced to zero by {\bf BG}.
\ep

\begin{proof}
There are three cases.

Case 1. In $\T$, there exist two neighboring letters $\y, \z$, such that
$\y\succ_M\z\bar{\z}\succ_M \1\bar{\1}$. Let $\T=\X\y\Z$, where submonomial
$\X$ may have degree 0, and submonomial $\Z$ has $\z\succ_M \1\bar{\1}$
as its first letter.
Then
\be
\label{reduction:32T1_0}
\ba{lcl}
f_{14} &=& \3[\2\X\y\Z\1]-[\2\X\y\Z\1]\3 \\

&=&\3\2\X\underbrace{\y\Z\1}
+\3\bar{\1}\bar{\Z}\bar{\y}\overline{\X}\bar{\2}
+\hbox{ lower} \bigstrut\\

&\overset{{\sf IH}}{=}& 
-\underbrace{\3\2\X\y\bar{\1}}\bar{\Z}
+\underbrace{\3\2\X\bar{\1}}\bar{\Z}\y
+\underbrace{\3\2\X\Z\1}\y 
+\3\bar{\1}\bar{\Z}\bar{\y}\overline{\X}\bar{\2}
+\hbox{ lower} \bigstrut\\

&\overset{{\sf IH}}{=}& 
\3\1\underbrace{\bar{\y}\overline{\X}\bar{\2}}\bar{\Z}
-\3\1\underbrace{\overline{\X}\bar{\2}}\bar{\Z}\y
-\3\bar{\1}\underbrace{\bar{\Z}\overline{\X}\bar{\2}}\y
+\3\bar{\1}\underbrace{\bar{\Z}\bar{\y}\overline{\X}\bar{\2}}
+\hbox{ lower} \bigstrut\\

&\overset{\3\1{\cal I}, \3\bar{\1}{\cal I}}{=}& 
(-\3\1\X\y\bar{\2}\bar{\Z}
+\underbrace{\3\1\bar{\2}}[\X\y]\bar{\Z})
+(
\3\1\X\bar{\2}\bar{\Z}\y
-\underbrace{\3\1\bar{\2}}[\X]\bar{\Z}\y)
\bigstrut
\\

&& +(
\underbrace{\3\bar{\1}}\X\Z\bar{\2}\y
-\underbrace{\3\bar{\1}\bar{\2}}[\X\Z]\y)
+(\underbrace{\3\bar{\1}\bar{\2}}[\X\y\Z]-\underbrace{\3\bar{\1}}\X\y\Z\bar{\2})
+\hbox{ lower} \bigstrut\\

&\overset{\overline{\bf BG}{\bf 2}, \overline{\bf BG}{\bf 3}}{=}& 
\3\1\X(-\underbrace{\y\bar{\2}\bar{\Z}}
+\bar{\2}\bar{\Z}\y
-\Z\bar{\2}\y
+\y\Z\bar{\2})
+\hbox{ lower} \bigstrut\\

&\overset{\3\1\X{\cal I}}{=}& 
\3\1\X(\y\Z[\2]-\Z[\2]\y) 
+\hbox{ lower} \bigstrut\\

&\overset{\3\1{\cal I}}{=}& \underbrace{\3\1[\2]}
\X(\y\Z-\Z\y)
+\hbox{ lower} \bigstrut\\

&\overset{\overline{\bf BG}{\bf 3}}{=}& \hbox{ lower}.
\ea
\ee

Denote the result of (\ref{reduction:32T1_0}) by $f'$. It
contains 48 ``lower terms". 
Now replace $\X,\y,\Z$ by new letters $\4,\5,\6$ respectively, and set
new order
$\1\prec \2\prec \3\prec
\4\prec \5\prec \6.$
In the new letters, do reduction to $f'$
with the corresponding {\bf BG} in new letters. In
947 steps, $f'$ is reduced to zero by this bottom-letter controlled reduction.

Case 2. In $\T$, there exist two neighboring letters $\y, \z$, such that
$\y\succ_M\z\bar{\z}$, but $\z\preceq \1$ or $\bar{\1}$. Let $\T=\Y\z\U$, where 
the last letter of submonomial
$\Y$ is $\y\succ_M\z\bar{\z}$, and $\U$ is a submonomial of degree $\geq 0$. 

First, we show that $h=\3\z[\2\Y\bar{\1}]-[\2\Y\bar{\1}]\3\z$ 
can be reduced to zero by {\bf BG}.
\be\ba{lcl}
h &=& \underbrace{\3\z\2}\Y\bar{\1}+\underbrace{\3\z\1}\bar{\Y}\bar{\2}
-[\2\Y\bar{\1}]\3\z \\

& \overset{\overline{\bf BG}\3}{=}&
-\bar{\z}\underbrace{\bar{\3}[\2\Y\bar{\1}]}+\2\underbrace{[\3\z]\Y\bar{\1}}
+\1\underbrace{[\3\z]\bar{\Y}\bar{\2}}-[\2\Y\bar{\1}]\3\z \\

& \overset{{\sf IH}, \2{\cal I}, \1{\cal I}}{=}&
-\underbrace{\bar{\z}[\2\Y\bar{\1}]}\bar{\3}+[\2\Y\bar{\1}][\3\z]
-[\2\Y\bar{\1}]\3\z \\

& \overset{(\2):{\sf IH}}{=}& 0,
\ea
\label{f14:repair}
\ee
where in the last step, $\bar{\z}[\2\Y\bar{\1}]=[\2\Y\bar{\1}]\bar{\z}$ 
is a reduction dominated by $\2$.

Next, 
\be
\ba{lcl}
f_{14} 
&=& \underbrace{\3\2\Y\z}\U\1
+\underbrace{\3\bar{\1}\overline{\U}\overline{\z}}\overline{\Y}\bar{\2}
+\hbox{ lower} \bigstrut\\

&\overset{{\sf IH}, (\3\bar{\1},\3\z):}{=}&
-\3\overline{\z}\underbrace{\overline{\Y}\bar{\2}\U\1}
+\overline{\z}\overline{\Y}\bar{\2}\3\U\1
-\3\z\U\1\overline{\Y}\bar{\2}
+\z\U\1\3\overline{\Y}\bar{\2}
+\hbox{ lower} \bigstrut\\

&\overset{\3\bar{\z}{\cal I}, (\3[\z]):}{=}&
-\underbrace{\3\overline{\z}\1}[\overline{\Y}\bar{\2}\U]
+\underbrace{\3[\z]}\overline{\U}\2\Y\1
-\3\z(\bar{\U}\2\Y\1+\U\1\bar{\Y}\bar{\2})
+\overline{\z}\overline{\Y}\bar{\2}\3\U\1
+\z\U\1\3\overline{\Y}\bar{\2}\\

&& \hfill
+\hbox{ lower}.
\ea
\label{case2:f12}
\ee

We have
\be\ba{cl}
& \3\z(\bar{\U}\2\Y\1+\U\1\bar{\Y}\bar{\2})\\[2mm]

= & \3\z([\U]\2\Y\1-\U\2\Y[\1]+\U[\2\Y\bar{\1}])\\ 

\overset{\3\z{\cal I}}{=}&
\underbrace{\3\z\2}\Y\1[\U]
-\underbrace{\3\z[\1]}\U\2\Y+\underbrace{\3\z[\2\Y\bar{\1}]}\U\\ 

\overset{\overline{\bf BG}\3, {\rm Prop.}\ \ref{prop:11},
(\ref{f14:repair})}{=}&
-\bar{\z}\bar{\3}\2\Y\1[\U]
+\hbox{ lower}.
\ea
\label{special:f12}
\ee
Substituting (\ref{special:f12}) into (\ref{case2:f12}), we get
\be
\ba{lcl}
f_{14} 
&\overset{\overline{\bf BG}, (\ref{special:f12})}{=}&
\z\bar{\3}\1[\bar{\Y}\bar{\2}\U]
+[\z]\3\overline{\U}\2\Y\1
+\bar{\z}\bar{\3}\2\Y\1[\U]
+\overline{\z}\overline{\Y}\bar{\2}\3\U\1
+\z\U\1\3\overline{\Y}\bar{\2}
+\hbox{ lower}.\\
\ea
\label{case2:f12:2}
\ee
Denote the result by $f'$. 

Introduce new variables to $f'$: 
\be
(\1, \2, \z, \3, \Y, \U) \to (\1,\2, \3, \4, \5, \6).
\ee
In the new variables, make reduction to $f'$ by $\overline{\bf BG}\2$ to 
$\overline{\bf BG}\6$.
In 1268 steps, the result is reduced to zero. 
Below we verify the correctness of this substitutional reduction 
in old letters.

In new letters, $f'$ becomes
\be
\3\bar{\4}\1[\bar{\5}\bar{\2}\6]
+[\3]\4\bar{\6}\2\5\1
+\bar{\3}\bar{\4}\2\5\1[\6]
+\bar{\3}\bar{\5}\bar{\2}\4\6\1
+\3\6\1\4\bar{\5}\bar{\2}
+\hbox{ lower}.
\label{expr:fp}
\ee
Among the elements of ${\bf BG}[\1, \ldots, \6]$ in new letters
that are applicable to make reduction to the above polynomial,
\bi
\item all polynomials of the form $\a{\cal I}$, 
$\a\in \{\1, \bar{\1}, \2, \bar{\2}, \z, \bar{\z}\}$,
are allowed in the reduction of $f_{14}$, because
in old letters, $\3\succ_M \z\bar{\z}\1\bar{\1}\2\bar{\2}$.
This certifies all the elements of the form 
$\b{\cal I}$, where $\b\in \{\1, \bar{\1}, \2, \bar{\2}, \3, \bar{\3}\}$
in new letters, in the substitutional reduction. 

\item The elements of ${\bf BG}[\1, \ldots, \6]$ 
led by letters of
$\{\1, \bar{\1}, \2, \bar{\2}\}$ are also allowed in the reduction
of $f_{14}$, by
Proposition \ref{prop:bottom-reduce}. 

\item $\bar{\3}\3-\3\bar{\3}$ in new letters, is allowed in old letters, 
which is just $\bar{\z}\z-\z\bar{\z}$.

\item The elements of ${\bf BG}[\1, \ldots, \6]$ changing the
leading letter of a monomial
from $\3$ or $\bar{\3}$ to a letter of
$\{\1, \bar{\1}, \2, \bar{\2}\}$, are the following:
\bi
\item ${\bf BG}\2$: 
$[\3]\1-\1[\3],\ [\3]\bar{\1}-\bar{\1}[\3], \
[\3]\2-\2[\3],\ [\3]\bar{\2}-\bar{\2}[\3], \
\3[\1]-[\1]\3,\ \3[\2]-[\2]\3$;

\item ${\bf BG}\3$: $[\3\2]\1-\1[\3\2],\ [\3\1]\2-\2[\3\1], 
\ [\3\1]\bar{\2}-\bar{\2}[\3\1]$;

\item ${\bf BG}\4$ to ${\bf BG}\6$: 
$\3[\2\S\1]-[\2\S\1]\3$ for any increasing 
sequence $\S\succ_M\bar{\3}$ of length ranging from 1 to 3.
\ei
\ei
In old letters, the above polynomials of ${\bf BG}\2$ to ${\bf BG}\3$
are now in $\overline{\bf BG}\2$ to $\overline{\bf BG}\3$. The above
polynomials in ${\bf BG}\4$ to ${\bf BG}\6$ are now
of the form $\z[\2\S\1]-[\2\S\1]\z$, where $|\S|\leq |\T|$; these polynomials
are of one of the types $f_4, f_5, f_{12}$, 
because $\z\preceq \1$ or $\bar{\1}$; these types are already
proved to be reducible to zero by {\bf BG}.

Case 3. $\T$ is non-decreasing. Let $\T=\x\Y$, where $\x$ is the first letter, and
$\Y\succeq_M \x$ is a non-decreasing submonomial.
If all letters of $\T$ are $\succeq \3$ or $\bar{\3}$,
then already $f_{14}\in \overline{\bf BG}$. So we assume $\x\prec_M \3\bar{\3}$.
\be
\ba{lcl}
f_{14} &=& \underbrace{\3\2\x}\Y\1
+\3\bar{\1}\underbrace{\overline{\Y}\overline{\x}\bar{\2}}
+\hbox{ lower}\\

&\overset{\overline{\bf BG}{\bf 3}, \3\bar{\1}{\cal I}}{=}&
\x[\3\2]\Y\1+\underbrace{\3\bar{\1}\bar{\2}}[\x\Y]
-\underbrace{\3\bar{\1}\x}\Y\bar{\2}
+\hbox{ lower} \bigstrut\\

&\overset{\overline{\bf BG}{\bf 3}}{=}&
\x([\3\2]\Y\1-\x[\3\bar{\1}]\Y\bar{\2})
+\hbox{ lower} \bigstrut\\

&=&
\x(\underbrace{\3[\2\Y\1]}-\3\bar{\1}\underbrace{[\Y]\bar{\2}}
+[\bar{\2}\bar{\3}\Y\1]-\underbrace{\bar{\1}}\bar{\Y}\3\2
-\1\bar{\3}\Y\bar{\2})
+\hbox{ lower} \bigstrut\\

&\overset{\x{\cal I}}{=}&
\x([\2\Y\1]\3-\3\bar{\1}\bar{\2}[\Y]
+[\bar{\2}\bar{\3}\Y\1]-[\1]\bar{\Y}\3\2
+\1\underbrace{[\bar{\Y}\3]\2}
-\underbrace{\1\bar{\3}\Y[\2]})
+\hbox{ lower} \bigstrut\\

&\overset{\x{\cal I},(\x,[\1],[\2]):}{=}&
[\2\Y\1]\x\3
-\underbrace{\x\3\bar{\1}\bar{\2}}[\Y]
+[\bar{\2}\bar{\3}\Y\1]\x-[\1]\x\bar{\Y}\3\2
+\underbrace{\x\1\2}[\bar{\Y}\3]
-[\2]\x\1\bar{\3}\Y

+\hbox{ lower} \bigstrut\\

&\overset{(\x,\2,\bar{\2}):}{=}& \hbox{ lower},\bigstrut
\ea
\label{reduction:32T1_6}
\ee
where in the next to the last step,
\bi
\item
$\x[\2\Y\1]=[\2\Y\1]\x$ is by the reduction hypothesis {\sf IH},
the leading term during the reduction
is led by letter $\x$ or $\2$.

\item $\x[\bar{\2}\bar{\3}\Y\1]=[\bar{\2}\bar{\3}\Y\1]\x$ 
is dominated by letter $\x$ or $\bar{\2}$: 
\bi
\item when $\x\succ \2\bar{\2}$, this reduction is of type $\overline{\bf BG}$;
\item when $\x\in \{\2, \bar{\2}\}$, this reduction is of type $f_{10}$ or $f_{11}$;
\item when $\1\bar{\1}\prec_M \x\prec_M \2\bar{\2}$, this reduction is of type $f_{13}$; 
\item when $\x\in \{\1, \bar{\1}\}$, this reduction is of type $f_4$ or $f_5$;
\item when $\x\prec_M \1\bar{\1}$, this reduction is of type $f_{12}$.
\ei
\ei

Denote the result of (\ref{reduction:32T1_6}) by $f'$. 
Now introduce new letters: replace $\x, \Y$ by $\4, \5$ respectively.
Set the following order: $\1\prec\2\prec \3\prec \4\prec \5$. 
In the new letters, use the corresponding ${\bf BG}$
to make bottom-letter controlled reduction to $f'$. 
The result becomes zero in 207 steps. 
\end{proof}

By now we have proved Theorem \ref{ext:thm}. Denote the extension of $\overline{\bf BG}$ by
this theorem by ${\bf BG}^{ext}$:
\be
{\bf BG}^{ext}:=\overline{\bf BG}\cup \{f_1, f_2, \ldots, f_{14}\}.
\ee

\section{Gr\"obner basis certification by reductions of clear S-polynomials}
\setcounter{equation}{0}
\label{sect:gb}

In this section,
we first review the basics of S-polynomials in
free-associative algebra \cite{mora1994introduction,mora1985grobner},
then propose a theorem that further reduces the number of S-polynomials
needed for Gr\"obner basis certification.
The following lemma is trivial but very useful.

\bl
\label{lem:free}
In a free associative algebra, for any monomials
$\A_1, \A_2, \B_1, \B_2$
such that $\A_1\A_2=\B_1\B_2$, if the degrees $|\A_1|<|\B_1|$, then
$\B_1=\A_1\B_1',\ \A_2=\B_1'\B_2$,
where $\B_1'$ is a submonomial of degree $>0$; if $|\A_1|=|\B_1|$, then
$\B_1=\A_1,\ \A_2=\B_2$.
\el

Let $G$ be a set of polynomials in free-associative algebra
$\mathbb{K}\langle \mathcal{A}\rangle$.
Fix the degree lexicographic order in the algebra.
For any two elements $f,g\in G$ that are irreducible with respect to each other,
an {\it S-polynomial} they generate is of the form
$\A_1f\B_1-\A_2g\B_2$, where the $\A_i, \B_i$ are monomials satisfying
${\rm T}(\A_1f\B_1)={\rm T}(\A_2g\B_2)$, with operator ``{\rm T}" extracting the leading term of
a polynomials. 

For an ideal ${\cal I} \subseteq \mathbb{K}\langle \mathcal{A}\rangle$,
a subset $G \subseteq {\cal I}$ is a Gr\"obner basis of ${\cal I}$
if and only if $G$ generates ${\cal I}$, and any S-polynomial of $G$
can be reduced to zero by $G$.

If $h$ is an S-polynomial, then for any monomials
$\A$ and $\B$, $\A h\B$ is also an S-polynomial. 
So only those S-polynomials of minimal degree are non-trivial.
By Lemma \ref{lem:free}, after removing redundant common left submonomials and common
right submonomials, one can always get
either $|\A_1|=0$ or $|\B_2|=0$, where if $|\A_1|=0$ then one further gets $|\B_2|=0$, and if
$|\B_2|=0$ then one further gets $|\A_1|=0$. So only the following {\it unilateral S-polynomials}
need to be considered:

\bdf
For any two polynomials $f,g$ in ${\mathbb K}\langle {\cal A}\rangle$
that are irreducible with respect to each other,
if there exist monomials $\L, \R$ with properties $0<|\L|<|f|$ and $0<|\R|<|g|$, such that
${\rm T}(f)\R=\L{\rm T}(g)$, then $(f,\R;g,\L)$
is called an {\it S-quadruplet}, where
${\rm T}(f)\R$ is called the {\it leader} of the S-quadruplet,
$f\R$ is called the {\it left polynomial}, and $\L g$ is called the {\it right polynomial}.
Polynomial $f\R-\L g$ is the corresponding
{\it S-polynomial} of the S-quadruplet.
\edf

Notice that the leading term of the S-polynomial of an S-quadruplet is strictly lower
than the leader, because the leader is the leading term of both $f\R$ and $\L g$, and is
canceled in the subtraction $f\R-\L g$. Also notice that $|f\R-\L g|<|f|+|g|$, so the tail of
monomial ${\rm T}(f)$ overlaps with the head of monomial ${\rm T}(g)$ in any S-quadruplet.

The reason why $0<|\L|<|f|$ is as follows. If $|\L|=0$, then
${\rm T}(f)\,|\, {\rm T}(g)$, contradicting with the irreducibility assumption.
If $|\L|\geq |f|$, then by lemma \ref{lem:free},
$\L={\rm T}(f)\L'$ for some submonomial $\L'$, and $\R=\L'{\rm T}(g)$.
Let $f={\rm T}(f)+\Delta f$, $g={\rm T}(g)+\Delta g$, where
$\Delta f, \Delta g$ stand for the remaining terms
of $f,g$ respectively other than the leading term. Then
the S-polynomial becomes $f\R-\L g=(\Delta f)\L'{\rm T}(g)-{\rm T}(f)\L'(\Delta g)$.
Direct reductions of $(\Delta f)\L'{\rm T}(g)$ by $g$, and
${\rm T}(f)\L'(\Delta g)$ by $f$, yield
$-(\Delta f)\L'(\Delta g)+(\Delta f)\L'(\Delta g)=0$, so the S-polynomial is trivially
reduced to zero by $f,g$, and is not useful in Gr\"obner basis certification.

In the following, we further reduce the number of S-polynomials
in the certification of Gr\"obner basis by proposing a more general technique
than S-polynomial reduction. 


\bdf
Let $G$ be a set of polynomials in ${\mathbb K}\langle {\cal A}\rangle$.
For any S-quadruplet $(f,\R;g,\L)$ where $f,g\in G$,
if after removing both the leading letter and the ending letter
from the leader ${\rm T}(f)\R$, the leftover of the leader 
is irreducible with respect to $G$,
then the S-quadruplet is said to be {\it clear}, and the corresponding
S-polynomial is called a {\it clear S-polynomial}.
\edf

\begin{prop}
\label{prop:k-decomp equiv}
Let $G$ be a set of polynomials in ${\mathbb K}\langle {\cal A}\rangle$,
where every element $g\in G$ is of degree $>1$ and is irreducible with respect to
$G\backslash \{g\}$. Let $t>2$ be an integer.
Then the following two statements are equivalent:

1. For any $f,g\in G$, if there exists an S-quadruplet
$(f,\R; g,\L)$ where $|f \R|\leq t$,
then there exist finitely many monomials $\L_i, \R_i$, and elements $g_i \in G$, such that
$f \R = \sum_i \L_i g_i \R_i$,
where for each $i$, (a) $\L_i g_i \R_i\preceq {\rm T}(f) \R$, (b)
if $|\L_i|=0$, then $g_i \R_i \prec {\rm T}(f) \R$.

2. For any $f,g\in G$, if there exists a clear S-quadruplet
$(f,\R; g,\L)$, where $|f \R| \leq t$,
then there exist finitely many monomials $\L_i, \R_i$, elements $g_i \in G$, such that
$f \R = \sum_i \L_i g_i \R_i$,
where for each $i$, (a) $\L_i g_i \R_i\preceq {\rm T}(f) \R$, (b)
if $|\L_i|=0$, then $g_i \R_i \prec {\rm T}(f) \R$.
\end{prop}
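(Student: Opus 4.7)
The plan is to establish (1) $\Rightarrow$ (2) trivially, and (2) $\Rightarrow$ (1) by induction on the length $n = |f \R|$ of the leader, with $3 \le n \le t$. Since any clear S-quadruplet is in particular an S-quadruplet, the forward direction needs no work. For the base case $n = 3$, the constraints of the definition force $|f| = |g| = 2$ and $|\L| = |\R| = 1$, so the interior obtained by deleting the first and last letters of the leader has length $1$; because every element of $G$ has degree $> 1$, this interior is irreducible with respect to $G$, and the S-quadruplet is automatically clear, so (2) furnishes the desired decomposition.

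For the inductive step, fix an S-quadruplet $(f, \R; g, \L)$ with leader $\T = \a_1 \cdots \a_n$ of length $n \le t$, and assume (1) for all S-quadruplets of strictly smaller leader length. If the S-quadruplet is clear, (2) applies. Otherwise there exist $h \in G$ and indices $2 \le i \le j \le n-1$ with ${\rm T}(h) = \a_i \cdots \a_j$. The crucial structural claim is that this occurrence straddles the boundary at position $|f|$, i.e., $i \le |f|$ and $j \ge |f|+1$. Indeed, if $j \le |f|$, then ${\rm T}(h)$ is a non-prefix submonomial of ${\rm T}(f)$; when $h \ne f$ this contradicts irreducibility of $f$ with respect to $G \setminus \{f\}$, while when $h = f$ the equality $j - i + 1 = |{\rm T}(h)| = |f|$ together with $i \ge 2$ forces $j \ge |f|+1$. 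The case $i \ge |f|+1$ is ruled out symmetrically, using that ${\rm T}(g)$ occupies positions $|\L|+1, \ldots, n$ of $\T$ and that $g$ is irreducible with respect to $G \setminus \{g\}$.

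With straddling established, set $\L_1 = \a_1 \cdots \a_{i-1}$, $\R_1 = \a_{|f|+1}\cdots \a_j$, and $\R_2 = \a_{j+1}\cdots \a_n$, so that $\R = \R_1 \R_2$ and ${\rm T}(f) \R_1 = \L_1 {\rm T}(h)$. The bounds $2 \le i \le |f|$ and $|f|+1 \le j$ give $0 < |\L_1| < |f|$ and $0 < |\R_1| < |h|$, so $(f, \R_1; h, \L_1)$ is a genuine S-quadruplet (self-overlaps $h = f$ being permitted) of leader length $j < n$. The inductive hypothesis applied to $(f, \R_1; h, \L_1)$ produces $f \R_1 = \sum_p \L_p^{(1)} g_p^{(1)} \R_p^{(1)}$ satisfying (a) and (b) relative to ${\rm T}(f) \R_1$. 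Right-multiplying by $\R_2$ yields $f \R = \sum_p \L_p^{(1)} g_p^{(1)} (\R_p^{(1)} \R_2)$, and since right multiplication by a monomial preserves both $\preceq$ and $\prec$ under the degree-lexicographic order, each summand has leading monomial $\preceq {\rm T}(f) \R$, with strict inequality whenever $|\L_p^{(1)}| = 0$. This closes the induction. The main obstacle of the argument is the boundary-straddling step, which forces one to carefully disqualify every way that ${\rm T}(h)$ could sit entirely inside ${\rm T}(f)$ or entirely inside ${\rm T}(g)$, including the self-overlap cases $h \in \{f, g\}$; everything else is routine bookkeeping about how monomial multiplication interacts with deglex.
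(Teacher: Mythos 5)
Your proof is correct, but it follows a different logical structure from the paper's. Both arguments hinge on the same structural observation: any reducer of the interior of a non-clear S-quadruplet's leader must straddle the boundary at position $|f|$, and deleting the tail beyond that reducer shrinks the S-quadruplet while preserving the required ordering properties (since deglex is stable under right multiplication). The difference is in how clarity is finally reached. The paper picks the \emph{leftmost} reducer of the interior and then proves, using the minimality of that left margin together with the pairwise irreducibility of $G$, that the resulting smaller S-quadruplet is already clear, so statement~2 can be applied in a single step. You instead pick an arbitrary reducer, obtain a strictly shorter S-quadruplet (not necessarily clear), and invoke a strong induction on the leader length, with the base case $n=3$ forcing clarity automatically because the interior is a single letter and every element of $G$ has degree $>1$. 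Your route trades the paper's minimality-and-clarity verification for an induction and is arguably a bit cleaner to check, at the modest cost of needing the inductive scaffolding; the paper's route is more constructive in that it exhibits the relevant clear S-quadruplet explicitly in one step. Both are valid, and your boundary-straddling argument (including the careful treatment of the self-overlap cases $h \in \{f,g\}$) is exactly the lemma the paper relies on implicitly when it invokes ``Lemma~\ref{lem:free} and the irreducibility property on $G$.''
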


\begin{proof}
Statement 1 trivially implies statement 2. We prove the converse implication.
Assume that $(f,\R;g,\L)$ is an S-quadruplet but not a clear one, namely, if 
${\rm T}(f) \R=\m_1\m_2\cdots\m_u$, where the $\m_i$ are letters, then 
submonomial $\H=\m_2\cdots \m_{u-1}$ is reducible by $G$. So $u\geq 4$ by the degree assumption
on elements of $G$. 
If $\H$ is reducible by more than one element of
$G$, say $\H=\L_1 {\rm T}(f_1) \R_1=\L_2 {\rm T}(f_2) \R_2$ for two different elements
$f_1, f_2\in G$, then $|\L_1|\neq |\L_2|$ by
Lemma \ref{lem:free} and the irreducibility assumption on $G$. 
Because of this, there exist monomials $\L_s, \R_s$ and an element
$f_s\in G$, such that $\H=\L_s {\rm T}(f_s) \R_s$, and $|\L_s|$ is the smallest 
among all the $\L_i$ in triplets $(\L_i, \R_i, f_i)$ satisfying $\H=\L_i {\rm T}(f_i) \R_i$.

Now ${\rm T}(f) \R=\m_1 \L_s {\rm T}(f_s) \R_s\m_u$. Again by Lemma \ref{lem:free}
and the irreducibility property on $G$,
$\R_s\m_u\,|\,\R$, and $|\R|>|\R_s\m_u|>0$. Let $\R = \R'\R_s\m_u$, with $\R'$ being a
submonomial. Then $|\R|>|\R'|>0$, and
${\rm T}(f) \R'=\m_1 \L_s {\rm T}(f_s)$. Set $\L'=\m_1 \L_s$. Then $|\L'|>0$, and
${\rm T}(f) \R'=\L' {\rm T}(f_s)$, and
$|f\R'|<|f\R|\leq t$. Once it is proved that (1) $|\L'|<|f|$,
(2) S-quadruplet $(f, \R'; f_s, \L')$ is clear,
then all the assumptions in statement 2 on the S-quadruplet
are satisfied, so a decomposition of $f \R'$ is obtained, and the required
decomposition of $f\R=f\R'\R_s\m_u$ follows.

Proof of (1).
Since ${\rm T}(f) \R = \m_1 \H \m_u =\L {\rm T}(g)$, where $0<|\L|<|f|$, 
$\L$ can be decomposed into
$\L=\m_1 \L_0$, with $\L_0$ being a submonomial, and $\H\m_u=\L_0 {\rm T}(g)$. 
By the minimality of $|\L_s|$, among 
all the $\L_i$ in triplets $(\L_i, \R_i, f_i)$ satisfying $\H\m_u=\L_i {\rm T}(f_i) \R_i\m_u$,
$|\L_s|$ is the smallest. So $|\L_0|\geq |\L_s|$, and
$|\L'|=1+|\L_s|\leq 1+|\L_0|=|\L|<|f|$ follows.

Proof of (2).
For ${\rm T}(f) \R'=\m_1 \L_s {\rm T}(f_s)=\m_1\m_2\cdots\m_j$, obviously
$j= u-|\R_s|-1$. Since $|f|\geq 2$ and $|\R'|\geq 1$, we have $j=|f|+|\R'|\geq 3$. Let
$\H'=\m_2\cdots\m_{j-1}$, then $|\H'|>0$. If $\H'$
is reducible by some $g'\in G$, say $\H'=\A{\rm T}(g')\B$ for some monomials $\A,\B$,
then 
\be
\H=\L_s {\rm T}(f_s) \R_s
=\m_2\cdots\m_j \R_s
=\H' \m_j \R_s=\A{\rm T}(g')\B\m_j \R_s.
\ee
By the minimality of $|\L_s|$, we get $|\A|\geq |\L_s|$. Then
from $\L_s {\rm T}(f_s)=\A{\rm T}(g')\B\m_j$, we get
${\rm T}(g')\mid {\rm T}(f_s)$,
contradicting
with the irreducibility property on $G$. So $\H'$ is irreducible with respect to $G$, and
$(f, \R'; f_s, \L')$ is clear.
\end{proof}

\bt
\label{thm:clear}
Let $G$ be a set of polynomials in ${\mathbb K}\langle {\cal A}\rangle$, where
every element $g\in G$ is of degree $>1$ and irreducible with respect to
$G\backslash \{g\}$. Then $G$ is a Gr\"obner basis of $\langle G\rangle$ if and only if
all the clear S-polynomials of $G$ can be reduced to zero by $G$.
\et

\begin{proof}
We only need to prove the ``if" part. For any clear S-quadruplet
$(f,\R; g,\L)$ of $G$, 
there exist finitely many monomials $\L_i, \R_i$, elements $g_i \in G$, such that
$f \R = \L g+\sum_i \L_i g_i \R_i$,
where for each $i$, $\L_i g_i \R_i\preceq {\rm T}(f \R - \L g)\prec {\rm T}(f) \R$.
So statement 2 of Proposition \ref{prop:k-decomp equiv} is true.
By this proposition, statement 1 must be true, namely,
for arbitrary S-quadruplet $(f,\R; g,\L)$ of $G$, denote $t=|f\R|\geq 3$, then
there exist finitely many monomials $\L'_i, \R'_i$, nonzero scalars $\lambda_i$,
and elements $g'_i \in G$, such that the
corresponding S-polynomial $f'=f\R-\L g$ satisfies 
\be
f'=\sum_i \lambda_i \L'_i g'_i \R'_i, 
\label{proof:thm1}
\ee
where for each $i$, $|\L'_i g'_i \R'_i|\leq t$. 

Claim 1. For any nonzero polynomial
$f'\in \langle G\rangle$ satisfying (\ref{proof:thm1}) and the constraint
that $|\L'_i g'_i \R'_i|\leq t$ for each $i$, 
there exists an element $g'\in G$ such that ${\rm T}(g') \mid {\rm T}(f')$. 

We prove this claim below. 
On the right side of (\ref{proof:thm1}), every summand is a scaled product of
a triplet $(\L'_i, g'_i, \R'_i)$, and the triplets are pairwise different.
Let $d$ be the maximal degree of these summands. Then $d\leq t$.

Define the following finite set and some subsets of it,
where $\L, \R, \A, \B$ are monic monomials, and $f,g\in G$:
\be\ba{lll}
\mathcal{T} &:=& \{(\L,f,\R)\ \mid \
|\L f \R| \leq d\}; \\

\mathcal{T}_1 &:=& \bigstrut\{(\L,f,\R)\in \mathcal{T}\ \mid\ \exists g,
\A,\B, \hbox{ s.t. }  
\L {\rm T}(f) \R = \A {\rm T}(g) \B,\ |\L| < |\A| < |\L f|\}; \\

\mathcal{T}_2 &:=& \strut\{(\L,f,\R)\in \mathcal{T}\backslash  \mathcal{T}_1\ \mid \ \exists g,
\A,\B, \hbox{ s.t. }  
\L {\rm T}(f) \R = \A {\rm T}(g) \B,\ |\A| \geq |\L f|\}; \\

\mathcal{T}_3 &:=& \strut\{(\L,f,\R)\in \mathcal{T}\backslash  \mathcal{T}_1\ \mid\ \exists g,
\A,\B, \hbox{ s.t. }
  \L {\rm T}(f) \R = \A {\rm T}(g) \B,\ |\A| \leq |\L|\}.
\ea\ee
Obviously $\mathcal{T}$ is a finite set,
$\mathcal{T}=\mathcal{T}_1\cup \mathcal{T}_2\cup \mathcal{T}_3$, and
the sets $\mathcal{T}_i$ do not overlap with each other. All the triplets on the right side
of (\ref{proof:thm1}) belong to $\mathcal{T}$.

First investigate $\mathcal{T}_3$. For any two different triplets
$(\L,f,\R), (\A,g,\B)\in \mathcal{T}_3$,
if $\L {\rm T}(f) \R = \A {\rm T}(g) \B$, then $|\A|=|\L|$ by the definition of $\mathcal{T}_3$, 
which leads to
$\L=\A$, hence one of ${\rm T}(f), {\rm T}(g)$ is a factor of the other,
resulting in $f=g$, and then $(\L,f,\R)=(\A,g,\B)$, contradiction.
So any two different triplets of $\mathcal{T}_3$
have different leading terms, if each triplet is multiplied and then linearly expanded
to form a polynomial called the {\it multiplied triplet}.
In (\ref{proof:thm1}), if the triplets on the right side are all in $\mathcal{T}_3$, then
${\rm T}(f')= \lambda_j \L'_j {\rm T}(g'_j) \R'_j$ for some $j$, 
so ${\rm T}(g'_j)\mid {\rm T}(f')$. Indeed, the idea behind the proof 
is to change every multiplied 
triplet of $\mathcal{T}_1\cup \mathcal{T}_2$ 
into the sum of some scaled multiplied triplets of $\mathcal{T}_3$.

Next investigate $\mathcal{T}_2$. For any $(\L,f,\R)\in \mathcal{T}_2$, there exists a triplet
$(\A,g,\B)$ such that $\L {\rm T}(f) \R = \A {\rm T}(g) \B$ and $\A= \L {\rm T}(f) \L'$,
where $\L'$ is a submonomial.
Obviously $(\A,g,\B)\in \mathcal{T}$.
Then 
\be
\L {\rm T}(f) \R = \A {\rm T}(g) \B
= \L {\rm T}(f) \L' {\rm T}(g) \B, 
\ee
so $\R=\L' {\rm T}(g) \B$.
Let $g=\sum_{j=0}^k g_{[j]}$, where ${\rm T}(g)=g_{[0]}\succ g_{[1]}\succ \cdots \succ
g_{[k]}$ are the terms of $g$ in descending order. Similarly, let
$f=\sum_{i=0}^l f_{[i]}$.
Make the following decomposition of polynomial $\L f \R$:
\be
\L f \R = \sum_{i=0}^l \L f_{[i]} \L' g \B
-\sum_{j=1}^k \L f \L' g_{[j]} \B.
\label{trans:T2}
\ee
On the right side, the summand
$\L f_{[i]} \L' g \B$ has the property that $(\L f_{[i]} \L', g, \B)\in \mathcal{T}$,
$|\L f_{[i]} \L' g \B| \leq |\L f \R|$, and either (a) $|\L f_{[i]} \L'|>|\L|$,
or (b) $|\L f_{[i]} \L'|=|\L|$, and $f_{[i]}$ is a scalar, and 
$\L f_{[i]} \L' {\rm T}(g) \B\prec \L {\rm T}(f) \R$. The summand
$\L f \L' g_{[j]} \B$ where $j>0$, has the property that $(\L, f, \L' g_{[j]} \B)\in \mathcal{T}$,
$|\L f \L' g_{[j]} \B|\leq |\L f \R|$, and (c)
$\L {\rm T}(f) \L' g_{[j]} \B \prec \L {\rm T}(f) \L' {\rm T}(g) \B=\L {\rm T}(f) \R$.

Finally, 
we show that every multiplied triplet of $\mathcal{T}_1\cup \mathcal{T}_2$
can be decomposed iteratively into a sum of scaled multiplied triplets in
$\mathcal{T}_3$. Initially the output set $\mathcal O$ collecting the summands is set to be empty.
The following procedure starts from a summand on the right side of (\ref{proof:thm1})
that has the maximal order, say $\L' g' \R'$.

Step 1. If $(\L', g', \R')\in \mathcal{T}_3$, then directly
move $\L' g' \R'$ to the set $\mathcal O$.

Step 2. If $(\L', g', \R')\in \mathcal{T}_1$, then
statement 1 of Proposition \ref{prop:k-decomp equiv} provides a decomposition of
$ \L' g' \R'$ into a sum of scaled multiplied triplets
$\L_j g_j \R_j$, where each $(\L_j, g_j, \R_j)\in \mathcal{T}$, with the {\it monotonous property}
that $|\L_j g_j \R_j| \leq |\L' g' \R'|$,
and either $|\L_j|> |\L'|$, or $|\L_j|=|\L'|$ and $\L_j {\rm T}(g_j) \R_j \prec \L' {\rm T}(g') \R'$.
For each summand in the decomposition, do Steps 1 to 3.

Step 3. If $(\L', g', \R')\in \mathcal{T}_2$, then (\ref{trans:T2}) gives a
decomposition of $\L' g' \R'$ into a sum of scaled multiplied
triplets, where each triplet is in $\mathcal{T}$, and has the same
monotonous property as in Step 2, according to
the investigation of $\mathcal{T}_2$.
For each summand in the decomposition, do Steps 1 to 3.

By the monotonous property of the decompositions in Steps 1 and 2, the
iteration of decompositions finishes in finite steps. The outcome is a sum of
scaled multiplied triplets in $\mathcal{T}_3$. By the property that in $\mathcal{T}_3$, no two
scaled multiplied triplets have the same leading monomial, 
we get that $f'\neq 0$ must have its leading term
agree with that of a scaled multiplied triplet in $\mathcal{T}_3$, and in particular,
have its leading term agree with that of an element $g'\in G$.
This proves Claim 1.

Claim 2. S-polynomial $f'=f\R-\L g$ can be reduced to zero by $G$.

We prove this claim below. By Claim 1, 
let ${\rm T}(f')=\lambda' \L' {\rm T}(g') \R'$ for some
nonzero scalar $\lambda'$ and monic monomials $\L', \R'$. Then by (\ref{proof:thm1}),
\be
f''=f'-\lambda'\L' g'\R'=-\lambda'\L' g'\R'+\sum_i \lambda_i \L'_i g'_i \R'_i,
\ee
where on the right side, $\L' g' \R'\prec {\rm T}(f) \R$. 
Obviously ${\rm T}(f'')\prec {\rm T}(f')$.
Again by Claim 1, as long as $f''\neq 0$,
there exists an element $g''\in G$ such that ${\rm T}(g'') \mid {\rm T}(f'')$. 
Do the reduction of $f''$ by $g''$ to get $f'''$.
As long as the reduction result is nonzero, one can continue to apply Claim 1 to make 
reduction with respect to $G$. In the end, the reduction result has to be zero.
\end{proof}

\section{Proof of the Main Theorem}
\setcounter{equation}{0}
\label{sect:proof}

With all the preparations made in the previous sections, we are now ready to 
prove the Main Theorem \ref{mainthm} of this paper. The theorem
can be proved by verifying that all clear S-polynomials of $\bf BG$ can be reduced to zero by
${\bf BG}^{ext}$. First we make a classification of the clear S-polynomials.

The leading terms of the elements of {\bf BG} are depicted in Fig. \ref{fig:BG}.
They can be classified into five types:
\bi
\item {\bf Type I.} As shown in Fig. \ref{fig:BG}(1), this type of leading term 
is of the form $\m_2\m_1$, where letter $\m_2\succ \m_1$. The leading terms are all
from {\bf BG2}(b,c), and the converse is also true.

\item {\bf Type V.} As shown in Fig. \ref{fig:BG}(2.1)-(2.2), this type of leading term 
is of the form $\m_3\m_1\m_2$, where in the letters, $\m_3\succeq \m_2\succ \m_1$, and
$\m_3=\m_2$ if and only if they are both $\k$.
The leading terms are all
from {\bf BG3}, and include
$\k\1\k$ of {\bf BG3}(a), and
$\3\1\bar{\2}, \3\1\2, \2\1\bar{\1}, \e\1\2, \e\1\bar{\1}, 
\e\1\e'$ of {\bf BG3}(b), where $\e\succ \e'$ and both are in $\cal E$.

\item {\bf Type U.} As shown in Fig. \ref{fig:BG}(3.1)-(3.3), this type of leading term 
is of the form $\m_3\m_2\m_1$, where $\m_3\succeq \m_2\succeq \m_1$ and
at most one equality holds. The leading terms are all
from {\bf BG3}, and include $\2\2\1$ of {\bf BG3}(c), and
$\3\2\1, 
\2\1\1, 
\e\2\1, 
\e\1\1$ of {\bf BG3}(b).

\item {\bf Type N4.} As shown in Fig. 
Fig. \ref{fig:BG}(4.1)-(4.2), this type of leading term 
is of the form $\m_3\m_2\m_4\m_1$, where 
$\m_4\succeq \m_3\succ \m_2\succ \m_1$. The leading terms are all
from {\bf BG4}, and the converse is also true.

\item {\bf Type-N.} As shown in Fig. \ref{fig:BG}(5),
where in the letters, this type of leading term 
is of the form $\m_3\m_2\A_4\m_5\m_1$, where letters
$\m_5\succ \m_3\succ \m_2\succ \m_1$, and
$\A_4$ is a non-decreasing monomial satisfying $\m_3\preceq_M \A_4\prec_M \m_5$.
The leading terms are all
from {\bf BGm} $(m>4)$, and the converse is also true.
\ei

\begin{figure}[htbp]
\centering\includegraphics[width=4.5in]{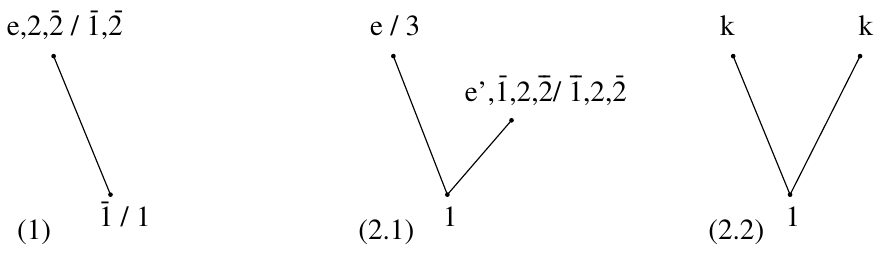}
\centering\includegraphics[width=4.7in]{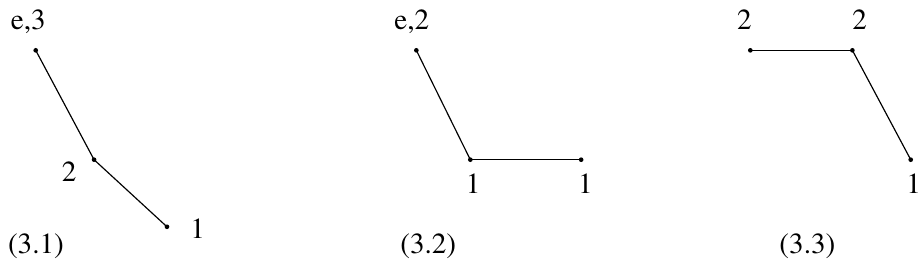}
\centering\includegraphics[width=5.3in]{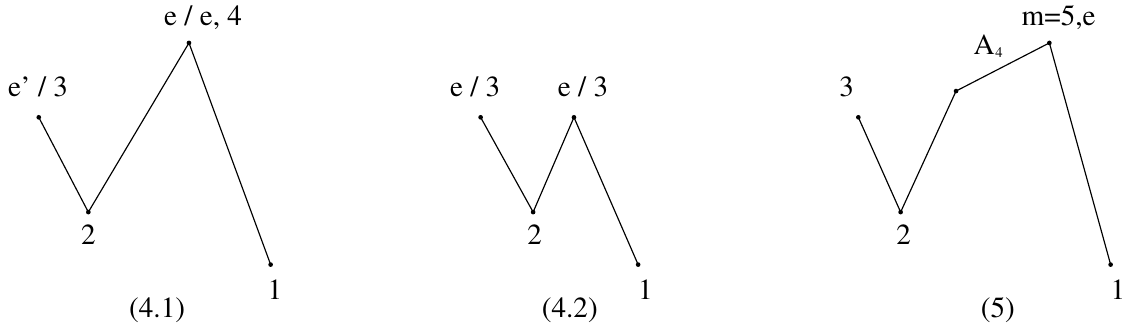}
\caption{Leading terms of {\bf BG}. 
Here $\e\succ \e'$ are letters in $\cal E$, $\1\prec \2\prec \ldots \prec \5$ are 
letters in $\cal Q$,
and $\A_4$ is a non-decreasing monomial satisfying $\5\succ_M \A_4\succeq \3$.
The notation $\e,\2$ at a vertex
stands for two possible evaluations at the vertex: $\e$ or $\2$.
The notation $\e / \3$ stands for two different combinations: those letters
on the left the slash symbol at each vertex form the first group, and those
on the right of the slash form the second group.
}
\label{fig:BG}
\end{figure}

The clear S-polynomials generated by elements of types {\bf I}, {\bf U}, {\bf V},
$\bf N4$ in {\bf BG} have degree ranging from 3 to 5. They can be easily reduced to zero by
${\bf BG}^{ext}$ with the help of a computer.
In the following, we consider only the
clear S-polynomials generated by an element of type {\bf N} and another element
of arbitrary type in {\bf BG}.

\begin{figure}[htbp]
\hskip -.3cm
\centering\includegraphics[width=2.67in]{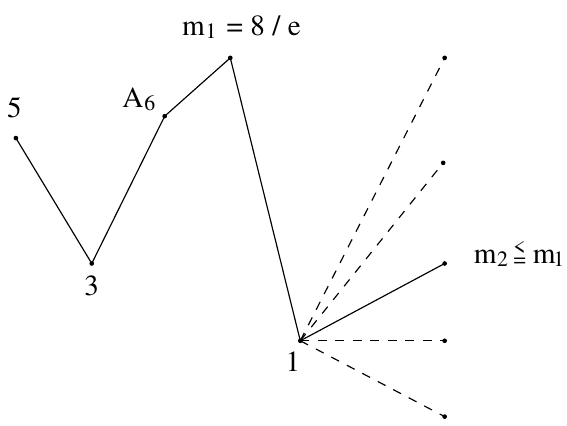}{(a)}
\includegraphics[width=2.35in]{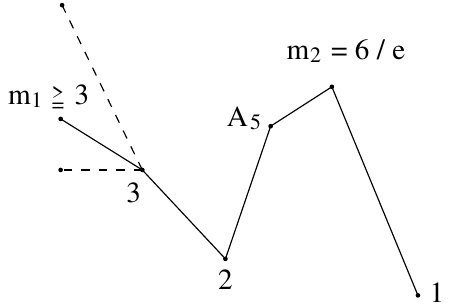}{(b)} \\
\vskip .2cm
\centering\includegraphics[width=2.50in]{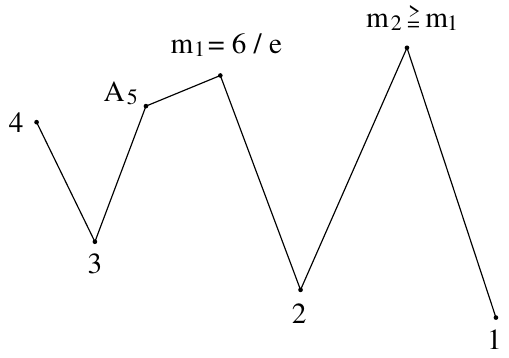}{(c)}
\includegraphics[width=2.54in]{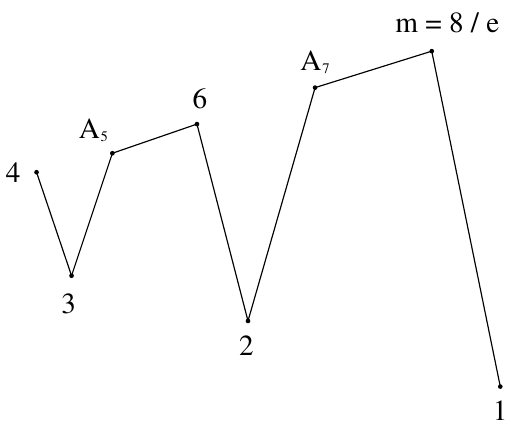}{(d)}\\
\vskip .2cm
\centering\includegraphics[width=2.85in]{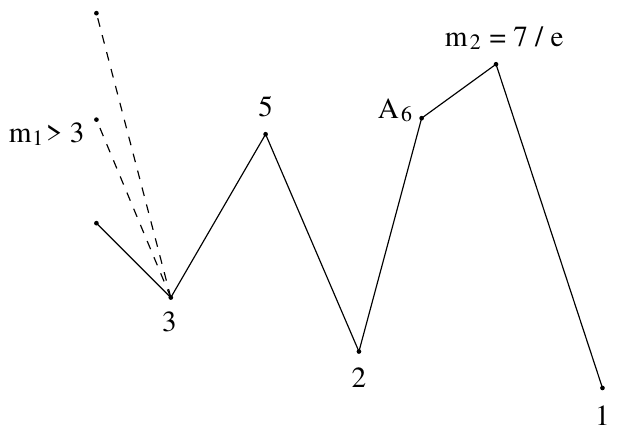}{(e)}
\caption{\rm Leaders of clear S-quadruplets with a generator of type $\N$.
Here $\e\in \cal Q$, 
$\1\prec \2\prec \cdots \prec \8$ are letters in
${\cal Q}$, and for $l=5,6,7$, 
$\A_l$ is a non-decreasing monomial satisfying 
${\bf (l-1)} \preceq_M \A_l\prec_M \l$.
}
\label{fig:Spoly}
\end{figure}

The leaders of clear S-quadruplets $(f,\R; g,\L)$ where one of
$f,g\in \bf BG$ has degree $\geq 5$, are depicted in Fig. \ref{fig:Spoly}.
They can be classified into five types:
\bi
\item {\bf Type N+I/U/V.} As shown in Figure \ref{fig:Spoly}(a), this type of 
clear S-polynomial is generated by $f$ of type $\bf N$ and $g$ of one of the types
$\I, \U, \V$. The leader is of the form $\5\3\A_6\m_1\1\m_2$, where 
$\m_1\succ_M \A_6\succeq_M \5\succ \3\succ \1$, 
$\m_2$ can be any letter $\preceq \m_1$,
and the equality holds if and only if $\m_1=\m_2=\k$.

\item {\bf Type I/U+N.} As shown in Figure \ref{fig:Spoly}(b), this type of 
clear S-polynomial is generated by $f$ of one of the types $\I, \U$, and 
$g$ of type $\N$. The leader is of the form $\m_1\3\2\A_5\m_2\1$, where 
$\m_2\succ_M \A_5\succeq_M \3\succ \2\succ \1$, and
$\m_1$ can be any letter $\succeq \3$.

\item {\bf Type N+N4.} As shown in Figure \ref{fig:Spoly}(c), this type of 
clear S-polynomial is generated by $f$ of type $\N$ and 
$g$ of type $\bf N4$. The leader is of the form 
$\4\3\A_5\m_1\2\m_2\1$, where 
$\m_2\succeq \m_1\succ_M \A_5\succeq_M \4\succ \3\succ \2\succ \1$.

\item {\bf Type N+N.} As shown in Figure \ref{fig:Spoly}(d), this type of 
clear S-polynomial is generated by $f,g$, both of which are of type $\N$. 
The leader is of the form $\4\3\A_5\6\2\A_7\m\1$, where 
$\m\succ_M \A_7\succeq_M \6\succ_M \A_5\succeq_M \4\succ \3\succ \2\succ \1$.

\item {\bf Type V/N4+N.} As shown in Figure \ref{fig:Spoly}(e), this type of 
clear S-polynomial is generated by $f$ of one of the types $\V, \N\4$, and 
$g$ of type $\N$. The leader is of the form $\m_1\3\5\2\A_6\m_2\1$, where 
$\m_2\succ_M \A_6\succeq_M \5\succ \3\succ \2\succ \1$, 
and letter $\m_1\succ \3$.
\ei

\bt
The clear S-polynomials generated by $f,g\in BG$, where
one of $f,g$ is in {\bf BGm} ($m\geq 5$), are all reduced to zero by 
{\bf BG}.
\label{ext:thm2}
\et

In the following, we prove the theorem by establishing a sequence of propositions
on the reduction of clear S-polynomials whose leaders are shown in Figure \ref{fig:Spoly}.

The first case is Figure \ref{fig:Spoly}(a), where $\m_2\prec \m_1$.
In this case, it may be that $\m_1$ is in $\cal E$, or both
$\m_1, \m_2$ are in $\cal E$. Since the conclusion is true no matter if
these letters are in $\cal E$ or not, in making reduction to the corresponding clear S-polynomial,
all relations involving basis letters should be avoided. 

Although there are many different cases depending on the different
order relations between $\m_2$ and the letters in \{$\1, \bar{\1},
\3, \bar{\3}, \5, \bar{\5}\}$, by choosing the order of $\m_1$ to be as high as possible,
namely, $\bar{\5}\prec \m_2\prec \m_1$, the reduction procedure
of the corresponding S-polynomial will be shown to be valid for all other cases. 
This is a typical example of substitutional reduction.

By introducing new variables $\1$ to $\6$:
\be
(\1, \3, \5, \A_6, \m_2, \m_1) \to (\1, \2, \3, \4, \5, \6),\ \ 
\hbox{ with } \1\prec \2\prec \3\prec \4\prec \5\prec \6,
\ee
the corresponding S-polynomial becomes 
\be\ba{lll}
h_1 &=& (\3[\2\4\6\1]-[\2\4\6\1]\3)\5-\3\2\4([\6\1]\5-\5[\6\1]) \\

&=& \3\bar{\1}\bar{\6}\bar{\4}\bar{\2}\5-\3\2\4\bar{\1}\bar{\6}\5
+\3\2\4\5\6\1+\3\2\4\5\bar{\1}\bar{\6}-[\2\4\6\1]\3\5.
\ea
\label{expr:h1}
\ee
This polynomial can be easily reduced to zero by ${\bf BG}[\1,\ldots,\6]$ with computer help.
Indeed, this is exactly the S-polynomial generated by $f=\3[\2\4\6\1]-[\2\4\6\1]\3$ of
{\bf BG}\5, and $g=[\6\1]\5-\5[\6\1]$ of {\bf BG}\3. 
However, since $\4$ now substitutes monomial $\A_6$, and the order
between $\m_2$ and the other letters except for $\m_1$ is arbitrary, the substitutional
reduction procedure of $h_1$ needs to be checked for correctness in old letters.

In checking the reduction procedure, it needs to be warned that 
by now we have only proved that for monomials $\R, \S$ in old letters, 
$\R[\S]-[\S]\R$ can be reduced to zero by 
{\bf BG} only in three special cases: $|\R|=1$, $|\S|=1$, $\S=\q\bq$ respectively, cf. 
Theorem \ref{ext:thm} and Proposition \ref{prop:11}. Because of this,
in making substitutional reduction to a polynomial with leading term $\T$, 
if a new letter substitutes a submonomial, e.g.,
$\4$ replaces $\A_6$ in $h_1$ of (\ref{expr:h1}), 
then a reduction of the form $\4[\S]=[\S]\4$ is allowed only
in three special cases: 
(1) $|\S|=1$, (2) $\S=\q\bq$ for some $\q\in {\cal Q}\cup \overline{\cal Q}$, 
(3) $\M\4[\S]=\M[\S]\4$, 
where $\M$ is a non-empty subsequence $\prec$ first $|\M|$ letters of $\T$.

\bp
\label{prop:h1}
$h_1$ in old letters can be reduced to zero by {\bf BG}.
\ep

\begin{proof}
In the new letters, $\4$ represents a submonomial $\succeq_M \3$. 
In old letters, $h_1$ has its leading term between two terms:
$-\3\2\4\bar{\1}\bar{\6}\5$ and $\3\2\4\5\6\1$, depending on the actual order relation
between $\5=\m_2$ and $\bar{\1}$. In both cases, the leading subsequence of the leading term
of $h_1$ is $\3\2\4$.
\be
\ba{lcl}
h_1&=& \3\bar{\1}\bar{\6}\bar{\4}\bar{\2}\5-\underbrace{\3\2\4\bar{\1}}\bar{\6}\5
+\underbrace{\3\2\4\5\6\1}+\underbrace{\3\2\4\5\bar{\1}}\bar{\6}
+\hbox{ lower} \\

&\overset{f_{14}}{=}&
\underbrace{\3\bar{\1}}\bar{\6}\bar{\4}\bar{\2}\5+\3\1\bar{\4}\bar{\2}\bar{\6}\5
-\underbrace{\3\bar{\1}}\bar{\6}\bar{\5}\bar{\4}\bar{\2}
-\3\1\bar{\5}\bar{\4}\bar{\2}\bar{\6}
+\hbox{ lower} \\

&\overset{\overline{\bf BG}\2}{=}& \3\1(
-\underbrace{\bar{\6}\bar{\4}\bar{\2}\5}
+\bar{\4}\bar{\2}\bar{\6}\5
+\underbrace{\bar{\6}\bar{\5}\bar{\4}\bar{\2}}
-\underbrace{\bar{\5}\bar{\4}\bar{\2}}\bar{\6}
)+\hbox{ lower} \\

&\overset{\3\1{\cal I}}{=}& \3\1(
\4\underbrace{\6\bar{\2}\5}
+\underbrace{\bar{\4}\bar{\2}}\bar{\6}\5
-\4\5\6\bar{\2}
+\4\5\bar{\2}\bar{\6}
)
-\underbrace{\3\1\bar{\2}}[\4\6]\5
+\underbrace{\3\1\bar{\2}}[\4\5\6]
-\underbrace{\3\1\bar{\2}}[\4\5]\bar{\6}
+\hbox{ lower} \\

&\overset{\3\1{\cal I}, \overline{\bf BG}\3}{=}& \3\1(
-\underbrace{[\4]\2}\bar{\6}\5+\underbrace{\4\5[\2]}\bar{\6}
)
+\underbrace{\3\1[\2]}\bar{\4}\bar{\6}\5
+\hbox{ lower} \\

&\overset{\3\1{\cal I},\overline{\bf BG}\3}{=}& 
-\underbrace{\3\1\2}[\4]\bar{\6}\5+\underbrace{\3\1[\2]}\4\5\bar{\6}
+\hbox{ lower} \\

&\overset{\overline{\bf BG}\3}{=}& \hbox{ lower}.
\ea
\label{reduction:h_1}
\ee
By computer,
the ``lower" terms are reduced to zero by {\bf BG}\2 to {\bf BG}\6 in new letters
in 995 steps.

We check the influence of the varying order of $\5$ and the submonomial represented by $\4$.
The order of $\5$ varies from below $\1$ to above $\4$, but still $\5\prec \6$.
\bi
\item In Step 1, $f_{14}$ is resorted 3 times, changing the leading subsequence of $h_1$
from
$\3\2\4$ to $\3\bar{\1}$. In each $f_{14}$-instance, letters $\4,\5$ are sandwiched between $\2$ and
$\1$ (or $\bar{\1}$), so they have no influence on the reduction.

\item Step 2 only replaces the leading subsequence to $\3\1$, using 
$\3\bar{\1}=\3[\1]-\3\1=[\1]\3-\3\1$. 

\item The remaining steps are either $\3\1$-led reduction, or based on
$\overline{\bf BG}\3$ relations $[\3\1]\2-\2[\3\1]$ and
$[\3\1]\bar{\2}-\bar{\2}[\3\1]$, so $\4,\5$ have no influence on them.
\ei
Summing up, the reductions in (\ref{reduction:h_1}) are valid for all allowed orders of
$\5$ and for submonomial $\4$. The final stage is a bottom-letter controlled reduction, which is
independent of the change of order of $\5$ and the role of submonomial of $\4$.
\end{proof}

The second case is Figure \ref{fig:Spoly}(a) with $\m_2=\m_1=\k$.
By introducing new variables $\1$ to $\4$:
\be
(\1, \3, \5, \A_6) \to (\1, \2, \3, \4), \hbox{ with }\ \1\prec \2\prec \3\prec \4,
\ee
the corresponding S-polynomial becomes 
\be\ba{lcl}
h_2 &=& (\3[\2\4\k\1]-[\2\4\k\1]\3)\k-\3\2\4(\k\1\k+\j\1\j+\i\1\i+2\times\bar{\1}+\1) \\

&=& -\3\bar{\1}\underbrace{\k\bar{\4}\bar{\2}}\k
-\underbrace{\3\2\4\j\1}\j
-\underbrace{\3\2\4\i\1}\i
-2\times \underbrace{\3\2\4\bar{\1}}
-\3\2\4\1+\hbox{ lower}.
\ea
\ee

\bp
$h_2$ in old letters can be reduced to zero by {\bf BG}.
\ep

\begin{proof} The only specialty in the new letters is 
that $\4$ represents submonomial $\A_6$. In $h_2$, the leading term is
$\3\2\4\j\1\j$.
\be\ba{lcl}
h_2 &\overset{\3\bar{\1}{\cal I},
f_{14}, {\rm Prop.}\,\ref{prop:11}}{=} & 
-\underbrace{\3\bar{\1}\bar{\2}}[\k\bar{\4}]\k
-\3\bar{\1}\4\underbrace{\k\bar{\2}\k}
-\3\bar{\1}\underbrace{\j\bar{\4}\bar{\2}}\j
-\3\bar{\1}\underbrace{\i\bar{\4}\bar{\2}}\i
+\underbrace{\3\2\4\1}+\hbox{ lower}\\

&\overset{\overline{\bf BG}\3,\3\bar{\1}{\cal I}, f_{14}}{=} & 
\3\bar{\1}\4(\j\bar{\2}\j+\i\bar{\2}\i)+2\times 
\underbrace{\3\bar{\1}\4\2}+\3\bar{\1}\4\bar{\2}
-\underbrace{\3\bar{\1}\bar{\2}}[\j\bar{\4}]\j
-\3\bar{\1}\4\j\bar{\2}\j
\\

&& \hfill
-\underbrace{\3\bar{\1}\bar{\2}}[\i\bar{\4}]\i
-\3\bar{\1}\4\i\bar{\2}\i
-\3\bar{\1}\bar{\4}\bar{\2}
+\hbox{ lower}\\

&\overset{\3\bar{\1}{\cal I}, {\rm Prop.}\,\ref{prop:11},\overline{\bf BG}\3}{=} & 
-\3\bar{\1}\underbrace{[\4]\bar{\2}}
+\hbox{ lower}\\

&\overset{\3\bar{\1}{\cal I}}{=} & 
-\underbrace{\3\bar{\1}\bar{\2}}[\4]
+\hbox{ lower}\\

&\overset{\overline{\bf BG}\3}{=} & 
\hbox{ lower}.
\ea
\ee
The lower terms are reduced to zero by {\bf BG2} to {\bf BG6} in new letters
in 99 steps.

Influence of the role of submonomial of $\4$:
\bi
\item In Step 1, the $\3\bar{\1}$-led reduction 
$\3\bar{\1}[\k\bar{\4}]\bar{\2}=\3\bar{\1}\bar{\2}[\k\bar{\4}]$ is irrelevant to
$\4$ as a submonomial. In the two
$f_{14}$-instances, $\4$ is sandwiches between $\2,\1$. The relation 
$\3\2\4\bar{\1}=\3\2\4[\1]-\3\2\4\1=[\1]\3\2\4-\3\2\4\1$ is by Proposition \ref{prop:11}.

\item In Step 2, 
$[\3\bar{\1}]\bar{\2}=\bar{\2}[\3\bar{\1}]$ is a $\overline{\bf BG}\3$-relation, 
$\3[\2\4\1]=[\2\4\1]\3$ is an $f_{14}$-relation, 
and all others are $\3\bar{\1}$-led reductions.

\item In Step 3, $\3\bar{\1}\4\2=\3\bar{\1}\4[\2]-\3\bar{\1}\4\bar{\2}=[\2]\3\bar{\1}\4
-\3\bar{\1}\4\bar{\2}$ is a combination of $\3\bar{\1}$-led reduction and application of
Proposition \ref{prop:11}.
 
\item In the next to the last step, 
$\3\bar{\1}[\4]\bar{\2}=\3\bar{\1}\bar{\2}[\4]$ is a $\3\bar{\1}$-led reduction.
\ei
Summing up, the submonomial role of $\4$
has no influence on the substitutional reduction of $h_2$. 
\end{proof}

The third case is Figure \ref{fig:Spoly}(b), where $\m_1\succ_M \3\bar{\3}$.
The order relation between $\m_1, \m_2$ is arbitrary.
In substitutional reduction, 
$\A_5$ and $\m_2$ can be concatenated to a longer sequence.
Introduce new letters:
\be
(\1,\2,\3,\m_1,\A_5\m_2)\to (\1,\2,\3,\4,\5), \hbox{ with }\
\1\prec\2\prec\3\prec\4\prec\5.
\ee
The S-polynomial is
\be\ba{lcl}
h_3 &=& ([\4\3]\2-\2[\4\3])\5\1-\4(\3[\2\5\1]-[\2\5\1]\3)\\

&=& \bar{\3}\bar{\4}\2\5\1-\2[\4\3]\5\1
-\underbrace{\4\3\bar{\1}}\bar{\5}\bar{\2}+\underbrace{\4[\2\5\1]}\3.
\ea
\ee
After the reduction of $h_3$, it must be checked 
whether the varying order between $\4$ and $\5$, and the
submonomial role of $\5$, have any influence on the substitutional reduction.

\bp
$h_3$ in old letters can be reduced to zero by {\bf BG}.
\ep

\begin{proof} In $h_3$, the leading term is always $\4\3\bar{\1}\bar{\5}\bar{\2}$, 
irrelevant to the varying order relation between $\4$ and $\5$.
\be
\label{reduction:h_2}
h_3\ \overset{\overline{\bf BG}\3, f_{14}}{=} \
\bar{\3}\bar{\4}\2\5\1
+\bar{\3}\bar{\4}\bar{\1}\bar{\5}\bar{\2}
+\hbox{ lower}\
\overset{f_{14}}{=}\
 [\2\5\1]\bar{\3}\bar{\4}+\hbox{ lower}.
\ee
The lower terms are reduced to zero by {\bf BG2} to {\bf BG5} in new letters
in 54 steps.

We check the influence of the varying order between $\4$ and $\5$, and the submonomial
role of $\5$.
\bi
\item In Step 1, the $f_{14}$-relations are 
$\bar{\4}[\2\5\1]=[\2\5\1]\bar{\4}$, \ \
$\bar{\3}[\2\5\1]=[\2\5\1]\bar{\3}$, \ \
$\4[\2\5\1]=[\2\5\1]\4$, where $\5$ is sandwiched between $\2,\1$ in each bracket,
and $\4\bar{\4}\succ_M \3\bar{\3}\succ_M \2$ is always true. The $\overline{\bf BG}\3$ 
relation $[\4\3]\bar{\1}=\bar{\1}[\4\3]$ is always valid.

\item In Step 2, the $f_{14}$-relations are still
$\bar{\4}[\2\5\1]=[\2\5\1]\bar{\4}$,\ \
$\bar{\3}[\2\5\1]=[\2\5\1]\bar{\3}$.
\ei
Summing up, the varying order between $\4$ and $\5$, and the submonomial
role of $\5$,
have no influence on the substitutional reduction.
\end{proof}

The fourth case is still Figure \ref{fig:Spoly}(b), but $\m_1=\bar{\3}$.
Introduce new letters:
\be
(\1,\2,\3,\A_5\m_2)\to (\1,\2,\3,\4), \hbox{ with }\
\1\prec\2\prec\3\prec\4.
\label{h45:letter}
\ee
The S-polynomial is
\be\ba{lcl}
h_4 &=& (\bar{\3}\3-\3\bar{\3})\2\4\1-\bar{\3}(\3[\2\4\1]-[\2\4\1]\3)\\

&=& -\3\bar{\3}\2\4\1-\underbrace{\bar{\3}\3}\bar{\1}\bar{\4}\bar{\2}
+\underbrace{\bar{\3}[\2\4\1]}\3.
\ea
\ee

\bp
$h_4$ in old letters can be reduced to zero by {\bf BG}.
\ep

\begin{proof} The leading term in $h_4$ is $-\bar{\3}\3\bar{\1}\bar{\4}\bar{\2}$.\
\be
h_4 \overset{{\bf BG2}, f_{14}}{=}  
-\underbrace{\3\bar{\3}[\2\4\1]}+[\2\4\1]\underbrace{\bar{\3}\3} 
\overset{{\rm Prop.} \, \ref{prop:11},\ {\bf BG2}}{=}  0.
\ee

Influence analysis of the submonomial role of $\4$:
\bi
\item In Step 1, the $f_{14}$-relation is $\bar{\3}[\2\4\1]=[\2\4\1]\bar{\3}$, where
$\4$ is sandwiched between $\2,\1$ in the bracket. The {\bf BG2}-relation is 
$\bar{\3}\3=\3\bar{\3}$.

\item In Step 2, $\3\bar{\3}[\2\4\1]=[\2\4\1]\3\bar{\3}$ by Proposition \ref{prop:11}. 
Alternatively, it can be taken as a combination of two $f_{14}$-relations of the same form
as in Step 1.
\ei
Summing up, the submonomial role of $\4$ has no influence on the substitutional reduction.
\end{proof}

The fifth case is once again Figure \ref{fig:Spoly}(b), but 
$\m_1=\3$. Introduce new letters as in (\ref{h45:letter}). 
The S-polynomial is
\be\ba{lcl}
h_5 &=& (\3[\3\2]-[\3\2]\3)\4\1-\3(\3[\2\4\1]-[\2\4\1]\3) \\

&=& \underbrace{\3\bar{\2}\bar{\3}\4\1}
-\underbrace{\3\2\3\4\1}-\bar{\2}\bar{\3}\3\4\1
-\underbrace{\3\3\bar{\1}}\bar{\4}\bar{\2}+\underbrace{\3[\2\4\1]}\3.
\ea
\ee

\bp
$h_5$ in old letters can be reduced to zero by {\bf BG}.
\ep

\begin{proof} The leading term of $h_5$ is $-\3\3\bar{\1}\bar{\4}\bar{\2}$.
\be\ba{lcl}
h_5 &\overset{f_{14}, \overline{\bf BG}\3}{=} & 
-\3\bar{\1}\underbrace{\bar{\4}\3\2}
+\3\bar{\1}\underbrace{\bar{\4}\bar{\3}\bar{\2}}
+\underbrace{\3\1}\bar{\3}\bar{\4}\bar{\2}
-\3\bar{\1}\3\bar{\4}\bar{\2}
+\hbox{ lower}\\

&\overset{
\3\bar{\1}{\cal I}, (\3\bar{\1}):\overline{\bf BG}\2}{=} & 
\3\bar{\1}(
\bar{\3}\4\underbrace{\2}
-\3\4\bar{\2}
-\bar{\3}\bar{\4}\bar{\2}
-\3\bar{\4}\bar{\2})
-\underbrace{\3\bar{\1}\2}[\bar{\4}\3]
+\underbrace{\3\bar{\1}\bar{\2}}[\3\4]
+\hbox{ lower}\\

&\overset{\3\bar{\1}{\cal I}, {\rm Prop.}\,\ref{prop:11}, \overline{\bf BG}\3}{=} & 
-\3\bar{\1}[\3][\4]\bar{\2}
+\hbox{ lower}\\

&\overset{\3\bar{\1}{\cal I}}{=} & 
-\underbrace{\3\bar{\1}\bar{\2}}[\3][\4]
+\hbox{ lower}\\

&\overset{\overline{\bf BG}\3}{=} & \hbox{ lower}.
\ea
\ee
The lower terms are reduced to zero by {\bf BG2} to {\bf BG5} in new letters
in 99 steps.

Influence analysis of the submonomial role of $\4$:
\bi
\item In Step 1, the $f_{14}$-relations include 
$\3[\bar{\2}\bar{\3}\4\1]=[\bar{\2}\bar{\3}\4\1]\3$,\ \
$\3[\2\3\4\1]=[\2\3\4\1]\3$,\ \
$\3[\2\4\1]=[\2\4\1]\3$, where $\4$ is sandwiched between $\2,\1$ in each bracket. 
The $\overline{\bf BG}\3$-relation is 
$\3[\3\bar{\1}]=[\3\bar{\1}]\3$.

\item In Step 2, $\3\1=\3[\1]-\3\bar{\1}=[\1]\3-\3\bar{\1}$ is used, which is a
$\3\bar{\1}$-dominated reduction of $h_5$. 
The rest are $\3\bar{\1}$-led reductions. 

\item In Step 3, first for
$\3\bar{\1}\bar{\3}\4\2=\3\bar{\1}\bar{\3}\4[\2]-\3\bar{\1}\bar{\3}\4\bar{\2}$,
the $\3\bar{\1}$-led reduction $\3\bar{\1}\bar{\3}\4[\2]=\3\bar{\1}[\2]\bar{\3}\4$ 
is made, then $\3\bar{\1}[\2]=[\2]\3\bar{\1}$ by Proposition \ref{prop:11}.

\item Step 4 is a $\3\bar{\1}$-led reduction. 
\ei
Summing up, the submonomial role of $\4$ has no influence on the substitutional reduction.
\end{proof}

The sixth case is Figure \ref{fig:Spoly}(c). 
Introduce new letters:
\be
(\1,\2,\3,\4, \A_5\m_1, \m_2)\to (\1,\2,\3,\4, \5, \6),
\hbox{ with } \ \1\prec \2 \prec \3\prec \4\prec \5\prec \6.
\ee
The S-polynomial is 
\be\ba{lcl}
h_6 &=& (\4[\3\5\2]-[\3\5\2]\4)\6\1-\4\3(\5[\2\6\1]-[\2\6\1]\5) \\

&=& \underbrace{\4\bar{\2}\bar{\5}\bar{\3}\6\1}
-[\3\5\2]\4\6\1
-\underbrace{\4\3\5\bar{\1}}\bar{\6}\bar{\2}
+\underbrace{\4\3[\2\6\1]}\5.
\ea
\ee
Similarly, in Figure \ref{fig:Spoly}(d), we can  
concatenate $\A_5, \6$ to form new letter $\5$, and
concatenate $\A_7, \m$ to form new letter $\6$.
The corresponding S-polynomial is also $h_6$.

\bp
The S-polynomials corresponding to Figure \ref{fig:Spoly}(c), \ref{fig:Spoly}(d) respectively
can be reduced to zero by {\bf BG}.
\label{prop:h6}
\ep

\begin{proof} In $h_6$, the leading term is $-\4\3\5\bar{\1}\bar{\6}\bar{\2}$.
In the new letters, $\5$ substitutes submonomial $\A_5\m_1$ in Figure \ref{fig:Spoly}(c),
or $\A_5\6$ in Figure \ref{fig:Spoly}(d); 
$\6$ substitutes submonomial $\A_7\m$ in Figure \ref{fig:Spoly}(d).
\be\ba{lcl}
h_6 &\overset{f_{14}}{=} &
-\underbrace{\4\bar{\1}}\underbrace{\bar{\6}\3\5\2}
-\3\5\2\4\6\1
+\4\1\bar{\5}\bar{\3}\bar{\6}\underbrace{\bar{\2}}
-\3\5\bar{\1}\4\bar{\6}\bar{\2}
+\hbox{ lower} \\

&\overset{\overline{\bf BG}\2, \4\1{\cal I},
{\rm Prop.}\, \ref{prop:11}}{=} &
\underbrace{\4\1\2}[\bar{\6}\3\5]-\4\1\bar{\5}\bar{\3}\underbrace{[\6]\2}
-\underbrace{\3\5[\2\4\6\1]}
+\3\5\bar{\1}\underbrace{[\bar{\6}\bar{\4}]\bar{\2}}
-\3\5\bar{\1}\4\underbrace{[\6]\bar{\2}}
+\hbox{ lower} \\

&\overset{\overline{\bf BG}\3, \4\1{\cal I}, \3{\cal I}, f_{14}}{=} & 
-\underbrace{[\4\1\bar{\5}\bar{\3}]\2}[\6]
+\underbrace{\3\5\bar{\1}\bar{\4}[\2]}[\6]
+\underbrace{\3\5\bar{\1}\bar{\2}}[\bar{\6}\bar{\4}]
+\underbrace{\3\5\bar{\1}[\4]\bar{\2}}[\6]
+\hbox{ lower} \\

&\overset{f_{12}, \,{\rm Prop.}\, \ref{prop:11}, f_{13}, \3{\cal I}}{=} &
\hbox{ lower}.
\ea
\ee
The lower terms are reduced to zero by {\bf BG2} to {\bf BG6} in the
new letters $\1$ to $\6$ and their conjugates
in 834 steps.

Influence analysis of the submonomial roles of $\5$ and $\6$:
\bi
\item In Step 1, the $f_{14}$-relations include 
$\4[\bar{\2}\bar{\5}\bar{\3}\6\1]=[\bar{\2}\bar{\5}\bar{\3}\6\1]\4$, 
$\4[\3\5\bar{\1}]=[\3\5\bar{\1}]\4$, $\3[\2\6\1]=[\2\6\1]\3$, and
$\4[\2\6\1]=[\2\6\1]\4$, where $\5, \6$ are sandwiched between $\1$ and 
one of $\bar{\2}, \3, \2$ in each bracket.

\item In Step 2, first $\4\bar{\1}=-\4\1+[\1]\4$ is used to change 
the first term to $\4\1\bar{\6}\3\5\2+$ lower, and then the 
$\4\1$-controlled reduction
$\4\1[\bar{\6}\3\5]\2=\4\1\2[\bar{\6}\3\5]$ is applied.
The third term is changed by Proposition \ref{prop:11} as follows:
$\4\1\bar{\5}\bar{\3}\bar{\6}\bar{\2}
=\4\1\bar{\5}\bar{\3}\bar{\6}[\2]-\4\1\bar{\5}\bar{\3}\bar{\6}\2
=[\2]\4\1\bar{\5}\bar{\3}\bar{\6}-\4\1\bar{\5}\bar{\3}\bar{\6}\2$.
The submonomial roles of $\5$ and $\6$ do not influence the reduction.

\item In Step 3, $[\4\1]\2=\2[\4\1]$ is a $\overline{\bf BG}\3$-relation, and
$\3\5[\2\4\6\1]=\3[\2\4\6\1]\5=[\2\4\6\1]\3\5$
is a combination of $\3$-controlled reduction and $f_{14}$ relation.

\item In the last step, 
$[\4\1\bar{\5}\bar{\3}]\2=\2[\4\1\bar{\5}\bar{\3}]$ is an $f_{12}$ relation,
$\3\5\bar{\1}\bar{\4}[\2]=[\2]\3\5\bar{\1}\bar{\4}$ is by 
Proposition \ref{prop:11}, 
$[\3\5\bar{\1}]\bar{\2}=\bar{\2}[\3\5\bar{\1}]$ is an $f_{13}$ relation,
and $\3\5\bar{\1}[\4]\bar{\2}=\3\5\bar{\1}\bar{\2}[\4]$ followed by
$[\3\5\bar{\1}]\bar{\2}=\bar{\2}[\3\5\bar{\1}]$ is a combination of
$\3$-led reduction and $f_{13}$ relation.
\ei
Summing up, the submonomial roles of $\5, \6$ have no influence on the 
substitutional reduction.
\end{proof}

The sixth case is 
Figure \ref{fig:Spoly}(e), where
$\m_1\succ_M \3\bar{\3}$. There are three subcases: either
$\3\prec\m_1\prec \5$ (e.g., $\m_1=\4$), or
$\5\preceq\m_1\prec \7$ (e.g., $\m_1=\6$), or 
$\m_1\succ \7$ (e.g., $\m_1=\8$). In the following, we prove that
if we set $\m_1=\4$, and concatenate $\A_6, \m_2$ to form new letter $\6$,  
then the reduction of the S-polynomial, which is still $h_6$ in new letters,
remains valid in all the three subcases.

\bp
The S-polynomial corresponding to Figure \ref{fig:Spoly}(e)
can be reduced to zero by {\bf BG}.
\ep

\begin{proof} The order of $\4=\m_1$ 
may vary from above $\bar{\3}$ to above $\m_2$.
We analyze the influence of the varying order of $\4=\m_1$ in the
substitutional reduction of $h_6$ in the proof of 
Proposition \ref{prop:h6}.
\bi
\item In Step 1, the $f_{14}$-relations
$\4[\bar{\2}\bar{\5}\bar{\3}\6\1]=[\bar{\2}\bar{\5}\bar{\3}\6\1]\4$, 
$\4[\3\5\bar{\1}]=[\3\5\bar{\1}]\4$, $\3[\2\6\1]=[\2\6\1]\3$, and
$\4[\2\6\1]=[\2\6\1]\4$, remain to be $f_{14}$-relation for all 
letters $\4\succ_M \3\bar{\3}$.

\item In Step 2, the reductions $\4\bar{\1}=-\4\1+[\1]\4$,\
$\4\1[\bar{\6}\3\5]\2=\4\1\2[\bar{\6}\3\5]$, and
$\4\1\bar{\5}\bar{\3}\bar{\6}\bar{\2}
=[\2]\4\1\bar{\5}\bar{\3}\bar{\6}-\4\1\bar{\5}\bar{\3}\bar{\6}\2$ are valid for
all letters $\4\succ_M \2\bar{\2}$.

\item In Step 3, $[\4\1]\2=\2[\4\1]$ is valid for
all letters $\4\succ_M \2\bar{\2}$, and reduction
$\3\5[\2\4\6\1]=\3[\2\4\6\1]\5=[\2\4\6\1]\3\5$
is independent of the order of $\4$. 

\item In the last step, 
$[\4\1\bar{\5}\bar{\3}]\2=\2[\4\1\bar{\5}\bar{\3}]$ is an $f_{12}$ relation
for all letters $\4\succ_M \3\bar{\3}$, $\3$-led reduction
$\3\5\bar{\1}[\4]\bar{\2}=\3\5\bar{\1}\bar{\2}[\4]$, and
$\3\5\bar{\1}\bar{\4}[\2]=[\2]\3\5\bar{\1}\bar{\4}$ by Proposition \ref{prop:11}, are both
independent of the order of $\4$. 
\ei
Summing up, the varying order of letter $\4$ has no influence on the 
substitutional reduction.
\end{proof}

\section{Conclusion}
\label{sect:conc}

Quaternionic polynomials occur naturally in science and engineering applications, and normalization
of quaternionic polynomials is a basic task. This paper proposes the first readable proof
of a conjectured reduced Gr\"obner basis of the defining ideal of coordinate-free quaternionic
polynomial algebra, by developing some novel reduction techniques for free associative algebras.

The order of letters used in the certified Gr\"obner basis is the conjugate-alternating order,
where each quaternionic variable immediately precedes its conjugate. This order is friendly
for employing the well known fact that the scalar part of any quaternionic monomial commutes
with any letter, as twice the scalar part is the sum of the monomial and its conjugate. 
From this viewpoint, the certified Gr\"obner basis under this order reflects the 
commutativity and symmetry of the scalar part as the foundations of normalization.

The following is another typical order among the letters:
\be
\q_1\prec \q_2\prec \ldots \prec \q_n\prec \i\prec \j\prec \k\prec
\bq_1\prec \bq_2\prec \ldots \prec \bq_n.
\ee
It is called the {\it conjugate-separating order}, and is used to shield the conjugates from the
quaternionic variables. Recall that in multi-variate complex analysis, a complex polynomial is in 
complex variables $z_1, \ldots, z_n$ and their complex conjugates
$\bar{z}_1, \ldots, \bar{z}_n$, the polynomial is analytic if and only if it is irrelevant to
the conjugates. In the quaternionic domain, a quaternionic polynomial is said to be {\it monogenic}
if it is irrelevant to the conjugates of the quaternionic variables. To normalize a
monogenic quaternionic polynomial, a Gr\"obner basis of the defining ideal of 
basis-free quaternionic polynomial algebra under the conjugate-separating order is desired.

The work on conjecturing such a Gr\"obner basis is an ongoing work. By now the elements of low
degree of a reduced Gr\"obner basis have been computed, and have been classified in the framework
of Grassmann-Cayley algebra and bracket algebra \cite{li2008} over ${\mathbb K}^3$. 
An interesting observation is that
opposite to the case of conjugate-alternating order, the elements of the
Gr\"obner basis under conjugate-separating order reflects the 
anti-commutativity and asymmetry of the vector part in Grassmann-Cayley algebra
as the foundations of normalization.

\vskip .2cm
Correspondence author Hongbo LI is supported by
National Key R\&D Project 2020YFA0712300.

\end{document}